\def\be{\begin{equation}}
\def\ee{\end{equation}}
\def\bea{\begin{eqnarray}}
\def\eea{\end{eqnarray}}
\newcommand{\nn}{\nonumber \\}
\newtheorem{theorem}{Main Result}
\newtheorem{mathproposition}{Lemma}[section]
\newcommand{\YES}{$\bigcirc$}
\newcommand{\NO}{}
\newcommand{\INF}{$\bigtriangleup$}
\newcommand{\MAYBE}{$\left(\bigcirc\right)$}
\newcommand{\la}[1]{\mathbf{#1}}
\def\half{{\frac{1}{2}}}
\def\p{\partial}
\def\pb{\overline{\partial}}
\def\unit{{1\kern-.65ex {\rm l}}}
\def\1{{1\kern-.65ex {\rm l}}}
\def\ad{\mathop{\mathrm{ad}}\nolimits}
\def\adj{\mathop{\mathrm{adj}}\nolimits}
\def\dim{\mathop{\mathrm{dim}}\nolimits}
\def\diag{\mathop{\mathrm{diag}}\nolimits}
\def\End{\mathop{\mathrm{End}}\nolimits}
\def\rank{\mathop{\mathrm{rank}}\nolimits}
\def\Tr{\mathop{\mathrm{Tr}}\nolimits}
\def\tr{\mathop{\mathrm{tr}}\nolimits}
\def\inv{\mathop{\mathrm{inv}}\nolimits}
\def\zb{{\overline{z}}}
\def\Db{{\overline{D}}}
\def\Qb{{\overline{Q}}}
\def\Pb{{\overline{P}}}
\def\zetab{{\overline{\zeta}}}
\def\cM{{\cal M}}
\def\cN{{\cal N}}
\def\cP{{\cal P}}
\def\bbC{{\mathbb{C}}}
\def\bbR{{\mathbb{R}}}
\def\bbZ{{\mathbb{Z}}}
\begin{document}
 \title{Classifying Supersymmetric Solutions in 3D Maximal Supergravity}
 
 \author[1]{Jan de Boer%
  \thanks{Electronic address: \texttt{j.deboer@uva.nl}}}
  
 \author[1]{Daniel R.\ Mayerson%
  \thanks{Electronic address: \texttt{d.r.mayerson@uva.nl}}}
  
  \author[2,3]{Masaki Shigemori%
  \thanks{Electronic address: \texttt{shige@yukawa.kyoto-u.ac.jp}}}
  
\affil[1]{Institute for Theoretical Physics, University of Amsterdam, Science Park 904, Postbus 94485, 1090 GL Amsterdam, The Netherlands}

\affil[2]{Yukawa Institute for Theoretical Physics, Kyoto University,
Kitashirakawa Oiwakecho, Sakyo-ku, Kyoto 606-8502 Japan}
\affil[3]{Hakubi Center, Kyoto University,
Yoshida-Ushinomiyacho, Sakyo-ku, Kyoto 606-8501, Japan}

\date{}






\setcounter{tocdepth}{2}

\maketitle
\abstract{
String theory contains various extended objects. Among those, objects of
codimension two (such as the D7-brane) are particularly interesting.
Codimension two objects carry non-Abelian charges which are elements of
a discrete U-duality group and they may not admit a simple space-time
description, in which case they are known as exotic branes.  A complete
classification of consistent codimension two objects in string theory is
missing, even if we demand that they preserve some supersymmetry.
As a step toward such a classification, we study the supersymmetric
solutions of 3D maximal supergravity, which can be regarded as
approximate description of the geometry near codimension two objects.
We present a complete classification of the types of supersymmetric
solutions that exist in this theory.  We found that this problem reduces
to that of classifying nilpotent orbits associated with the U-duality
group, for which various mathematical results are known.  We show that
the only allowed supersymmetric configurations are 1/2, 1/4, 1/8, and
1/16 BPS, and determine the nilpotent orbits that they correspond to.
One example of 1/16 BPS configurations is a generalization of the MSW
system, where momentum runs along the intersection of seven
M5-branes.
On the other hand, it turns out exceedingly difficult to translate this
classification into a simple criterion for supersymmetry in terms of the
non-Abelian (monodromy) charges of the objects.
For example, it can happen that a supersymmetric solution exists locally
but cannot be extended all the way to the location of the object.  To
illustrate the various issues that arise in constructing supersymmetric
solutions, we present a number of explicit examples.

Key words: exotic branes, T-folds, U-folds, non-geometric branes, classification, supersymmetric solutions, supergravity, 3D

PACS numbers: 04.20.-q, 04.20.-Jb, 04.65.+e, 11.25.-w, 11.25.Uv}
\newpage
\tableofcontents
\newpage

\section{Introduction}

Supergravity is well-known to be able to capture non-perturbative
physics of string theory which is often difficult to see in the full
theory.
For example, the existence of $p$-branes charged under Ramond-Ramond
gauge fields is predicted by the non-perturbative dualities of string
theory, but they were first found in supergravity as solitonic solutions
\cite{Horowitz:1991cd} before they were identified with the D-branes
\cite{Polchinski:1995mt} in perturbative string theory. 

By now, a vast number of supersymmetric solutions of supergravity have been found
in various dimensions, and in several cases a complete classification has been obtained.
However, so far, only a few solutions of $d=3$ maximally supersymmetric supergravity have been
constructed, which is unfortunate, since this is a very interesting case for reasons we now 
explain.

The duality of string theory predicts not only the standard branes such
as D-branes but also exotic branes \cite{Elitzur:1997zn, Blau:1997du,
Hull:1997kb, Obers:1997kk, Obers:1998fb} that have been
studied far less.  Exotic branes are codimension-two objects whose
higher-dimensional origin cannot be explained in terms of standard
branes (namely, D-branes, M-branes, F1-string, NS5-branes, gravitational
waves, and KK monopoles). Their characteristic feature is that they have a
non-trivial monodromy of $U$-duality around them.  Namely, as one goes
around an exotic brane, the spacetime fields do not come back to the
original value but only to the $U$-dual version.
Perhaps the most famililar example of such codimension-two branes is the
$(p,q)$ 7-brane in type IIB string theory which is well-known in the
context of F-theory \cite{Greene:1989ya, Vafa:1996xn}. Type IIB string
theory has $SL(2,\bbZ)$ as the $U$-duality group under which the
axion-dilaton field $\tau=C_0+ie^{-\Phi}$ transforms as
\begin{align}
 \tau\to \tau={a\tau+b\over c\tau+d},\qquad
 \begin{pmatrix} a&b\\c&d \end{pmatrix}\in SL(2,\bbZ).\label{SL(2,Z)_on_tau}
\end{align}
Two different values of $\tau$ related by \eqref{SL(2,Z)_on_tau} are to
be physically identified in string theory.  The $(p,q)$ 7-brane is a
codimension-two brane around which the field $\tau$ undergoes the
particular monodromy\footnote{The status of the $SL(2,\bbZ)$ 7-branes
with more general monodromies \eqref{SL(2,Z)_on_tau} in string theory is
unclear although they certainly exist at the level of supergravity
\cite{Bergshoeff:2006jj}.}
\begin{align}
 \begin{pmatrix} a&b\\c&d \end{pmatrix}
 =
 \begin{pmatrix}
  1+pq &p^2 \\
 -q^2 & 1-pq\end{pmatrix}
 \in SL(2,\bbZ).
\end{align}

One intriguing property of exotic branes is that they are generically
\emph{non-geometric} in the following sense.  Recall that $U$-duality
mixes internal components of higher-dimensional fields, including the
metric.
Therefore, the fact that exotic branes have non-trivial $U$-duality
monodromy around them implies that the metric is not single-valued in
the presence of exotic branes.  In this sense, exotic branes are
generically non-geometric \cite{deBoer:2010ud, deBoer:2012ma}. (There
are codimension-two branes that are geometric, such as the $(p,q)$
7-brane above though.)
Furthermore, these exotic branes are highly non-perturbative in the
sense that they typically have tension proportional to $g_s^{-3}$ or
$g_s^{-4}$.

\begin{table}[htbp]
\begin{center}
  \begin{tabular}{|@{\,}c@{\,}||@{\,}c@{\,}|@{\,}c@{\,}|@{\,}c@{\,}|@{\,}c@{\,}|@{\,}c@{\,}|@{\,}c@{\,}|}
  \hline
  $d$& $G(\bbR)$ & $G(\bbZ)$ & $K$  & $\dim G$ & $\rank G$ & $\dim K$ \\  \hline  \hline
  10A& $\bbR_+$                     & $\bf 1$                      & $\bf 1$             & 1   &1    & 0   \\ \hline
  10B& $SL(2,\bbR)$                 & $SL(2,\bbZ)$                 & $SO(2)$             & 3   &1    & 1   \\ \hline
  9  & $SL(2,\bbR)\times \bbR_+$    & $SL(2,\bbR)\times \bbZ_2$    & $SO(2)$             & 4   &2    & 1   \\ \hline
  8  & $SL(3,\bbR)\times SL(2,\bbR)$& $SL(3,\bbZ)\times SL(2,\bbZ)$& $SO(3)\times SO(2)$ &$8+3$&$2+1$&$3+1$\\ \hline
  7  & $SL(5,\bbR)$                 & $SL(5,\bbZ)$                 & $SO(5)$             & 24  &4    & 10  \\ \hline
  6  & $SO(5,5,\bbR)$               & $SO(5,5,\bbZ)$               & $SO(5)\times SO(5)$ & 45  &5    & 20  \\ \hline
  5  & $E_{6(6)}$                   & $E_{6(6)}(\bbZ)$             & $USp(8)$            & 78  &6    & 36  \\ \hline
  4  & $E_{7(7)}$                   & $E_{7(7)}(\bbZ)$             & $SU(8)$             &133  &7    & 63  \\ \hline
  3  & $E_{8(8)}$                   & $E_{8(8)}(\bbZ)$             & $SO(16)$            &248  &8    & 120  \\ \hline
 \end{tabular}
\end{center}
 \vspace*{-3ex}
\caption{\sl The $U$-duality groups in maximally supersymmetric
supergravity and string theory in various ($d$) dimensions. $G(\bbR)$ is
the classical $U$-duality group while $G(\bbZ)$ is the quantum one
\cite{Hull:1994ys}. $K$ is the maximal compact subgroup of
$G(\bbR)$.\label{table:CJgroups}}
\end{table}

Because of these facts, exotic branes are difficult to analyze in
perturbative string theory.\footnote{Note however some recent work on
sigma-model description of the $5^2_2$ exotic brane which has tension
proportional to $g_s^{-2}$ and presumably allow for perturbative description
\cite{Kimura:2013fda, Kimura:2013zva, Kimura:2013khz}.} However, in
supergravity, they are represented simply by solutions with non-trivial
monodromies for scalars around them, just as in \eqref{SL(2,Z)_on_tau}.
Therefore, it is very interesting to ask what is the most general
solution possible in supergravity with non-trivial $U$-duality
monodromies.  In this paper, we attempt such a classification of
codimension-two branes.  The expectation is that the solutions we find
correspond to non-perturbative objects in the full string theory.

Note that the problem of classifying codimension-two branes is more
non-trivial than that of classifying higher codimension branes.  This is
because the charge of higher codimension branes is measured by usual
gauge field flux and lives in a linear lattice $\bbZ^n$ where $n$ is the
number of gauge fields, while the charge of codimension-two branes is
measured by its monodromy which lives in the $U$-duality group
$G(\bbZ)$.  The discrete non-Abelian group $G(\bbZ)$ has more intricate
structure than $\bbZ^n$ and the possible configurations of
codimension-two branes are more complicated than those of higher
codimension branes. Note that in this paper we are considering the (classical) supergravity moduli space, with the classical $U$-duality group $G(\bbR)$.

If we compactify type II string theory on $T^{10-d}$ or M-theory on
$T^{11-d}$ down to $d$ dimensions, then the $U$-duality group $G(\bbZ)$
becomes larger (see Table \ref{table:CJgroups}), the number of scalars
that span the group's representation increases, and, consequently,
the spectrum of codimension-two exotic branes becomes much richer. As can be seen in Table \ref{table:CJgroups}, the $U$-duality group in
lower dimensions contains the $U$-duality group in higher dimensions.
As a result, codimension-two objects in higher dimensional theory always
exist in lower dimensional theory too.  Therefore, for the purpose of
classifying codimension-two branes, we can just focus on the lowest
possible dimension, namely $d=3$.  For $d=3$, codimension-two objects
are point particles and therefore our goal is to classify configurations
of particle-like objects in 2+1 dimensions. Note that a point particle in $d=3$ is going to destroy flat
asymptotics. Therefore, we should regularize geometry at long distance
by closing it to $S^2$ as in F-theory \cite{Greene:1989ya, Vafa:1996xn}
or regard our solutions as near-source approximation of some higher
codimension configurations such as supertubes \cite{deBoer:2010ud,
deBoer:2012ma} which have flat asymptotics.

To achieve the goal of classifying all possible point particles in $d=3$, a first step is to classify all possible \emph{supersymmetric} point particles. A classification of all half-supersymmetric point particles in $d=3$ was already achieved in \cite{Bergshoeff:2013sxa,Bergshoeff:2012ex,Bergshoeff:2011se}, but a full classification of all possible point particles preserving any supersymmetry is still lacking. In this paper, we try to remedy this somewhat by finding a full classification of supersymmetric solutions in $d=3$ maximally supersymmetric supergravity, which should be the low-energy limit of the compactification of type II string theory on a $T^7$. This full classification entails the precise necessary and sufficient conditions for supersymmetry in this supergravity theory. While our classification is thus complete in this sense, unfortunately this does not specify the allowed monodromies/charges of supersymmetric point particles. This is because our classification of supersymmetry is given in terms of a quantity $P_z$ which is not immediately related to the monodromy/charge of point particles.\footnote{This can be related to the discussion in \cite{Bergshoeff:2013sxa}, in which is shown using group theory arguments that the BPS condition is degenerate for branes with codimension two or less, i.e.\ that there are multiple branes that preserve exactly the same supersymmetries.} Moreover, even if a local supersymmetric 
solution exists, there may exist global obstructions towards extending the solution all the way to the location of the point
particle.

In lieu of a full classification of the allowed monodromies of supersymmetric point particles, we will give a plethora of (not only supersymmetric) examples of point particle solutions with various monodromies, which will also further illustrate the difficulty of achieving a full classification of allowed monodromies of supersymmetric solutions.

A solution in $d=3$ dimensions can be uplifted to a configuration in 10
or 11 dimensions involving various branes.  Note, however, such an
uplift is not unique but is determined only up to an overall $U$-duality
transformation.  Therefore, an interesting question to ask is if there
are 3d configurations whose uplift involve non-geometric exotic branes
in all possible $U$-duality frames. However, since, as we will show, the
relation between the brane monodromy and supersymmetry is unclear, we
have no straightforward way to answer to this question. The only thing
we can say is that we have succeeded in finding a ``standard''
(non-exotic) brane representative for each of the supersymmetry
preserving orbits (see especially Table \ref{tab:orbits}). However, even if all supersymmetric 
branes were to admit a standard representative, this would still not imply that combinations
of different supersymmetric branes would simultaneously admit a representation in terms of standard branes.
Unfortunately, the study of multi-centered solutions appears to be at least one order of magnitude
more complicated than the construction of single point particles, and we will not attempt to do so
in this paper.

As already mentioned, we \emph{do} manage to find a complete classification in terms of explicit necessary and sufficient conditions for supersymmetry in 3D maximal supergravity. Our initial analysis follows the strategy pioneered in \cite{Tod:1983pm,Tod:1995jf}; we assume the existence of a Killing spinor and construct a Killing vector $V$ from it. This divides the supersymmetric solutions into a ``timelike'' or ``null'' class, depending on if $V$ is timelike or null. The situation is reminiscent of classifications of solutions in minimal supergravity in $d=5,6$ and $\mathcal{N}=2$ supergravity in $d=4$, see e.g.\ \cite{Gauntlett:2002nw,Gauntlett:2003fk,Gutowski:2003rg,Bellorin:2006yr}. However, the structure of maximal supergravity is more complicated than in these situations with considerably less supersymmetry, and we must resort to a detailed analysis of a quantity $P_z$ (which takes values in the algebra $\la{e}_8$) to further determine conditions of supersymmetry preservation. It turns out that $P_z$ must be a nilpotent algebra element to preserve supersymmetry, and moreover there are only a few nilpotent orbits that preserve a given amount of supersymmetry (all listed in table \ref{tab:orbits}). This, in turn, is reminiscent of \cite{Bossard:2009at,Bossard:2009we}, where the classification of supersymmetric stationary solutions in 4D is studied via a timelike reduction from to 3D, giving a \emph{Euclidean} supergravity theory in which nilpotent orbits also play a crucial role in the classification of supersymmetry preservation. We do note that the analysis presented in \cite{Bossard:2009at,Bossard:2009we} is quite different, as e.g.\ the $U$-duality structure in this Euclidean theory admits a non-compact version of $\la{so}(16)$ (which should be compared to the compact $\la{so}(16)$ of table \ref{table:CJgroups} for 3D)\@.
We also note that a partial classification of supersymmetric solutions in
half-maximal gauged supergravity has been done in \cite{Deger:2010rb},
where $G=SO(8,n)$, $H=SO(8)\times SO(n)$.  Their analysis is
complementary to ours in that they did a more detailed analysis of the spinor
bilinears while we focused on Lie-algebraic structures.  It
would be interesting to combine the techniques of that paper with ours
to clarify more structure of supersymmetric solutions.

%

%
%

The organization of the rest of this paper is as follows. In section
\ref{sec:preliminaries}, we first review some important mathematical and
physical structures we need for our story. We review facts about general
scalar cosets in $d$ dimensions before specializing to the case at hand
in $3d$ with coset $E_{8(8)}/SO(16)$; finally, we also review concepts
in Lie algebras that are crucial for our story---most notably about the
(nilpotent) orbit structure in real and complex Lie algebras. Section
\ref{sec:allresults} contains the classification of supersymmetric
solutions in 3D maximal supergravity; all results are formulated and
summarized in section \ref{sec:statementresults}, which can be seen as
the main result in this paper; the proofs of the results are given in
section \ref{sec:derivationresults}. In sections \ref{sec:simplebranes}
and \ref{sec:morebranes}, we search for explicit single-center brane
solutions with various scalar monodromies. Section
\ref{sec:simplebranes} deals with setting up an ansatz for such
single-center solutions, and finding recipes to construct branes with
any semisimple or nilpotent monodromy; it is also explained how the
brane representations of table \ref{tab:orbits} are obtained. In section
\ref{sec:morebranes}, we try to find more complicated single-center
brane solutions that live in $\la{sl}(3)$. The results of sections
\ref{sec:simplebranes} and \ref{sec:morebranes} are summarized in
section \ref{sec:potformtaxonomy} and especially table
\ref{tab:taxonomy}. Finally, we summarize and discuss our results in
section \ref{sec:conclusion}.

For a first reading of this paper, we suggest first browsing the
preliminaries in section \ref{sec:preliminaries}, if necessary. A
self-contained overview of our results can be obtained by reading the
main results of the supersymmetry classification in section
\ref{sec:statementresults}, and the summary of the single-centered brane
solutions (of sections \ref{sec:simplebranes} and \ref{sec:morebranes})
in section \ref{sec:potformtaxonomy} and especially table
\ref{tab:taxonomy}.

\section{Preliminaries}
\label{sec:preliminaries}

Below, we first review maximally supersymmetric supergravity (maximal
supergravity), focusing on the structure of the scalar sector.  After
discussing general $d$ dimensions, we will turn to the $d=3$ theory
which we are interested in. Then, we give a brief summary of some facts
about Lie algebras of relevance to us later.

\subsection{Scalar cosets and maximal supergravity}

If we compactify 10-dimensional type IIA/B supergravity on $T^{10-d}$ or
11-dimensional supergravity on $T^{11-d}$ down to $d$ non-compact
dimensions, we obtain maximal (ungauged) supergravity with the duality symmetry
groups $G(\bbR)$ as summarized in Table \ref{table:CJgroups}.  These
theories have scalar fields parametrizing the coset space
$\cM_c=G(\bbR)/K(\bbR)$, where $K(\bbR)$ is the maximal compact subgroup
of $G(\bbR)$.  The isometry group of $\cM_c$ is $G(\bbR)$.  In string
theory, the duality group reduces to the discrete subgroup $G(\bbZ)$
and, at the same time, the scalar moduli space becomes $\cM_q=G(\bbZ)
\backslash G(\bbR)/K(\bbR)$.  Namely, points in the classical moduli
space $\cM_c$ related by an element of $G(\bbZ)$ are identified in the
quantum moduli space $\cM_q$. We will not see these quantum effects in
our analysis, as we are considering classical solutions in supergravity,
i.e.\ working in the classical moduli space $\cM_c$.

Let us study the coset structure of the scalar sector of the theory in
more detail \cite{Marcus:1983hb, Pope:KKtheory}.  Let us denote the Lie
algebra of $G$ and $K$ by $\la{g}$ and $\la{k}$,
respectively. In $\la{g}$, we can define the Cartan involution
$\theta$ which reverses the sign of every non-compact generator while
leaves unchanged the sign of every compact generator.  Using $\theta$,
all the Cartan generators of the Lie algebra $\la{g}$ decompose as
(Cartan decomposition)
\begin{align}\label{eq:cartandecomp}
\la{g}=\la{k}\oplus \la{p},\qquad\qquad
 \la{p}\equiv  \la{g}\ominus \la{k},
\end{align}
where
\begin{align}
\theta(\la{k})=\la{k}\quad \text{(compact)},\qquad
\theta(\la{p})=-\la{p}\quad\text{(non-compact)}.
\end{align}
$\theta$ gives a $\mathbb{Z}_2$ grading because $[\la{k},\la{k}]\in \la{k}$, $[\la{k},\la{p}]\in \la{p}$, $[\la{p},\la{p}]\in \la{k}$;
we will call elements of $\la{k}$ even generators and ones of $\la{p}$ odd
generators.  The real Lie algebra $\la{g}$ that appears in maximal
supergravity is the maximally non-compact real form (also known as the
split real form). In this case, all Cartan generators and half of the
other generators are non-compact, while the other half are compact.
More precisely, let us denote the positive-root generators,
negative-root generators and Cartan generators in the Cartan-Weyl basis  by
$(E_{\alpha},E_{-\alpha},H^i)$, where $\alpha$
ranges over all the positive roots.  For $\la{g}$ maximally non-compact,
$\theta$ can be taken to act as
\begin{align}
 \theta~:~~ (E_{\alpha},E_{-\alpha},H^i)
 ~\to~
 (-E_{-\alpha},-E_{\alpha},-H^i).
\end{align}
Therefore, $\vec{H}$ and $E_{\alpha}+E_{-{\alpha}}$ are
non-compact (odd) while $E_{\alpha}-E_{-\alpha}$ are compact
(even).
We can take $\theta(x)=-x^T$ ($x\in \la{g}$) if $K$ is
an orthogonal group
while
 $\theta(x)=-x^\dagger$  if $K$ is
a unitary group.
Note that $\theta$ satisfies $\theta([x,y])=[\theta(x),\theta(y)]$.

The scalar fields of maximal supergravity parametrize the coset space
$\cM_c=G(\bbR)/K(\bbR)$.  They can be represented by a matrix $V \in
G(\bbR)$ if we assume that $g\in G(\bbR)$ and $k\in K(\bbR)$ act on $V $ as
follows:
\begin{align}
 V \to g V  k.
 \label{V->GVH}
\end{align}
Two matrices $V ,V '$ must be identified if they are related by
right-multiplication of $k\in K$.  To fix this ambiguity, we must
choose some specific gauge.  One convenient gauge choice is the ``unitary
gauge''
\cite{Marcus:1983hb}
in which
\begin{align}
 V =e^{\phi}, \qquad \phi\in \la{p}=\la{g}\ominus \la{k}.\label{unitary_gauge}
\end{align}
When we act on $V$ with a global $G$-duality transformation $g\in
G(\bbR)$ according to \eqref{V->GVH}, we must also act with a
compensating local $K$ transformation so that we remain in the gauge
\eqref{unitary_gauge}.  The local $K$ transformation in general
depends on the field $\phi$.

Another useful gauge uses the Iwasawa decomposition which says that we
can always write $V \in G$ as
\begin{align}
 V =n a k,
\label{eq:V=nak}
\end{align}
where $n$ is
generated by the positive roots of $\la{g}$,  $a$ is generated by the Cartan subalgebra of $\la{g}$, and $k\in K$.  By an
appropriate choice of the local $K$ transformation, we can always take
the ``Borel gauge'' in which the scalar moduli matrix takes the form
\begin{align}
 V =na.
\end{align}
One possible choice is to take
\begin{align}
 V =na=\Bigl[\prod_{\alpha>0}e^{A_{\alpha} E_{\alpha}}\Bigr]\, e^{\phi_i H^i },
\end{align}
where the product is over positive roots (for precise ordering of roots,
see \cite{Pope:KKtheory}).    The relation between the
lower dimensional scalars and the
internal components of higher dimensional fields
is easier to see in the Borel gauge \cite{Pope:KKtheory}.

An important quantity for constructing a covariant action is
 \begin{align}
  V ^{-1}\partial_{\mu} V =
 P_\mu+Q_\mu ,
\end{align}
where $P_\mu$ and $Q_\mu$ are the projection of $V^{-1}\partial_\mu V$ onto
$\la{p}$ and $\la{k}$, respectively; namely,
\begin{align}\label{eq:defPQ}
 P_\mu\equiv {1-\theta\over 2}V ^{-1}\partial_\mu V,\qquad
 Q_\mu\equiv {1+\theta\over 2}V ^{-1}\partial_\mu V.
\end{align}
These can be shown to satisfy Bianchi identities
\begin{align}
\begin{split}
 [D_\mu,D_\nu]&=-[P_\mu,P_\nu],\qquad D_\mu \equiv \partial_\mu + Q_\mu;\\
 D_\mu P_\nu &\equiv  \partial_\mu P_\nu +[Q_\mu,P_\nu]=D_\nu P_\mu.
\end{split}
\end{align}
Or equivalently, in form language,
\begin{align}
\begin{split}
  dQ+\half [Q,Q]+\half [P,P]&=0,\\
 dP+[Q,P]&=0,
\end{split}\label{bianchi_id_form}
\end{align}
where $P\equiv P_\mu dx^\mu, Q\equiv Q_\mu dx^\mu$, so that e.g.\ $[Q,P]=[Q_{\mu},P_{\nu}]dx^{\mu}\wedge dx^{\nu}$.

As explained above, a global $G$ transformation will induce a local $K$
transformation.  Under this local $K$ transformation, it can be shown
that $P_\mu$ transforms covariantly while $Q_\mu$ transforms as a $K$
gauge field.  This fact makes it possible to write down an invariant
action.  Because $\tr(P^\mu P_\mu)$ is invariant, the general form of
the metric and scalar part of the maximal supergravity action is
\begin{align}
 S_{\text{scalar}}
 ={1\over 4\kappa^2}\int d^dx \sqrt{-g}\left[R
 -g^{\mu\nu}\tr(P_\mu P_\nu)\right].\label{S_scalar_ito_P}
\end{align}
Our convention is that the signature of the metric is mostly plus.
Also, if we have a quantity that transforms under $K$ transformation,
i.e., $R$-symmetry, such as the gravitino, we can covariantize the
derivatives acting on it with respect to $K$ transformations using $Q$
as the gauge connection.
The total action has, in addition to the metric and scalars appearing in
$S_{\text{scalar}}$ \eqref{S_scalar_ito_P}, terms that involve other bosonic
form fields as well as fermions, all covariantized by the procedure just
outlined.
However, in this paper, we focus on the $d=3$ case where there are no
form fields but only scalars, and hence \eqref{S_scalar_ito_P} is the
full bosonic action.
Even for $d>3$, as long as one considers configurations with vanishing
form fields, \eqref{S_scalar_ito_P} is the relevant bosonic action.
In such situations, the equations of motion derived from this action are
\begin{align}
  R_{\mu\nu}-\tr(P_\mu P_\nu)&=0,\label{einstein_eq}\\
 \nabla_\mu P^\mu + [Q_\mu,P^\mu]&=0.\label{scalar_eom_P,Q}
\end{align}
where $\nabla_\mu$ is the covariant derivative with respect to the
Levi-Civita connection.

Rather than $P_\mu$ and $Q_\mu$ which depend on the gauge choice, it is
sometimes more useful to work with gauge independent quantities.  It is
clear that the quantity
\begin{align}
 M\equiv  V  V ^T.
\end{align}
is independent of the $K$ gauge transformation in
\eqref{V->GVH} if $K$ is the orthogonal group (if $K$ is unitary, use
$M=VV^\dagger$ instead).   Under the transformation
\eqref{V->GVH}, the matrix $M$ transforms as
\begin{align}
\label{eq:Mmong} M\to gMg^T.
\end{align}
In the unitary gauge \eqref{unitary_gauge}, $M$ can be written as
\begin{align}
\label{eq:Mdef}
 M=e^\phi (e^\phi)^T
 =e^\phi e^{-\theta(\phi)}
 =e^{2\phi}.
\end{align}
The action \eqref{S_scalar_ito_P} can be written in terms of $M$ as (using $2e^{\phi} P_{\mu} e^{-\phi} = \partial_{\mu}M M^{-1}$):
\begin{align}
 S_{scalar}
 ={1\over 4\kappa^2}\int d^dx \sqrt{-g}\left[R
 +{1\over 4}g^{\mu\nu}\tr(\partial_\mu M^{-1}\partial_\nu M)\right].
 \label{S_scalar_ito_M}
\end{align}
The equations of motion derived from this action are
\begin{align}
 R_{\mu\nu}-{1\over 4}
 \tr(\partial_\mu M\, M^{-1} \partial_\nu M\, M^{-1})&=0,\\
 \nabla_\mu (\nabla^\mu M\, M^{-1})&=0.\label{M_EOM}
\end{align}

The $M$ equation of motion \eqref{M_EOM} can be thought of as the
conservation law for the current 1-form $j$ as follows
\begin{align}
d*j=0,\qquad
 j\equiv \half dM\, M^{-1}= V P V^{-1}.\label{j_EOM}
\end{align}
From the definition of $j$, we can show 
that, if we move along a path parametrized by $\lambda$,
\begin{align}
 M(\lambda)=
 \cP e^{2\int_0^\lambda j} M(0),\label{M_transport}
\end{align}
where $\cP$ is path ordering with
respect to $\lambda$ and gives the monodromy\footnote{This monodromy is not the same as the usual scalar monodromy $g$ as in (\ref{eq:Mmong}), which is the usual definition of the scalar monodromy and is the one we will use in the rest of the paper.} of the matrix $M$.
From the definition of $j$, it follows that
$
dj+2j\wedge j=0.$
Therefore, $2j$ is a ``flat connection'' and the ``Wilson line''
$\cP e^{2\int_0^\lambda j}$ depends only
on the endpoints of the path.
Note that, using the relation $j M =M j^T$ which
immediately follows from the definition of $j$ and from the fact that
$M^T=M$, we can rewrite \eqref{M_transport} as
\begin{align}
 M(\lambda)=
 g(\lambda) M(0) g(\lambda)^T,\qquad
 g(\lambda)=\cP e^{\int_0^\gamma j}.\label{M_transport2}
\end{align}
Unlike $ \cP e^{2\int_\gamma j} $, the quantity $g(\lambda)$
does depend on the path $\gamma$, since $j$ is not a flat connection. 
Note that the relations \eqref{M_transport}, \eqref{M_transport2} follow
from definitions and are independent of the equation of motion
\eqref{j_EOM}.

\subsection{Maximal supergravity in three dimensions}\label{sec:maxsugra}

Thus far, we have been considering general $d$.  From now on, let us 
focus on the $d=3$ case which we are interested in.  Maximal
supergravity in $d=3$ was first constructed in \cite{Marcus:1983hb}.

For $d=3$, the duality group is $G=E_{8(8)}(\bbR)$ with the maximal
compact subgroup $K(\bbR)=SO(16)$.  The 248 generators of
$\la{g}=\la{e}_{8(8)}$ consist of the 120 compact
$\la{k}=\la{so}(16)$ generators $X^{IJ}$ ($I,J=1,\dots,16$,
$X^{IJ}=-X^{JI}$) and the 128 non-compact $\la{p}=\la{g}\ominus \la{k}$ generators $Y^A$
($A=1,\dots,128$) which transform in the Majorana-Weyl spinor
representation $\bf 128$ of $SO(16)$.
Correspondingly, the $P,Q$ fields introduced before can be expanded as
$P_\mu=P_\mu^A Y^A$, $Q_\mu=\half Q_\mu^{IJ}X^{IJ}$.  The scalar field
$\phi$ in \eqref{unitary_gauge} can be expanded as $\phi=\phi^A Y^A$.
For more details about the $\la{e}_{8(8)}$ Lie algebra, see appendix \ref{sec:appconstruction}.

The $d=3$ spacetime spinors can be taken to be two-component
Majorana spinors.  Since we have $\cN=16$ supersymmetry in maximal
supergravity, we have 16 gravitinos $\psi^I_\mu$ and 16 supersymmetry
transformation parameters $\epsilon^I$, where $I=1,\dots,16$.  In
addition, we have dilatinos $\chi^{\dot{A}}$ where $\dot{A}=1,\dots,128$
is the index for the other Majorana-Weyl spinor representation $\bf
128'$ of $SO(16)$.
We take the $d=3$ gamma matrices in
Minkowski spacetime to be
\begin{align}
 \gamma_{\hat{0}}&=\begin{pmatrix}0&1\\-1& 0\end{pmatrix},\qquad
 \gamma_{\hat{1}} =\begin{pmatrix}0&1\\ 1& 0\end{pmatrix},\qquad
 \gamma_{\hat{2}} =\begin{pmatrix}1&0\\ 0&-1\end{pmatrix}
\end{align}
where the hats mean flat indices.  In this basis, Majorana spinors are
really real.

The bosonic action \eqref{S_scalar_ito_P} must be supplemented with
fermionic terms for the total action to be supersymmetric.  We do not
need the form of the full action but let us note that, when the fermion
background vanishes, the supersymmetry transformations of the fermionic
fields are
\begin{align}
 \delta \psi^I_\mu &
 ={1\over \kappa}\left(\partial_\mu \epsilon^I
 +{1\over 4}\omega_{\mu}^{\hat{a}\hat{b}}\gamma_{\hat{a}\hat{b}}\epsilon^I
 +Q^{IJ}_\mu \epsilon^J
 \right),
 \label{susy_var_psi}
 \\
 \delta \chi^{\dot{A}}&={i\over 2\kappa}\gamma^\mu \epsilon^I \Gamma^I_{\dot{A}A}P^A_\mu.\label{susy_var_chi}
\end{align}
Here, $\omega_\mu^{\hat{a}\hat{b}}$ is the spin connection,
$\gamma_{\hat{a}\hat{b}}\equiv \half(\gamma_{\hat{a}} \gamma_{\hat{b}}-\gamma_{\hat{b}} \gamma_{\hat{a}})$, and
$\Gamma^I_{\dot{A}A}$ is the chiral block of the $SO(16)$ gamma matrices
defined in appendix \ref{app:appgammaso16} and can be taken to be real matrices.  The last term
in \eqref{susy_var_psi} is the $K$-covariantization mentioned above. For
the background to preserve supersymmetry, the above supersymmetry
variation must vanish.

In particular, let us consider the following configuration:
\begin{align}
 ds^2&=-dt^2+e^{U(z,\zb)}dz d\zb,\qquad
 P_t=Q_t=0,
 \label{static_config_a=1}
\end{align}
where $z=x^1+ix^2$, $\zb=x^1-ix^2$.  Later, we will see that requiring
supersymmetry leads to an ansatz of the form
\eqref{static_config_a=1}.\footnote{More precisely, this corresponds to
supersymmetric solutions in the timelike class.  There is also the null
class of solutions which is discussed in section \ref{sec:allresults}.} 
%
For the ansatz \eqref{static_config_a=1}, it is easy to see that the
field equations \eqref{einstein_eq}, \eqref{scalar_eom_P,Q} become
 \begin{align}
 \label{eq:EOMtimelike} \tr(P^2)= \tr(\Pb^2)&=0,\qquad \tr(P\Pb) = R_{z\zb} = -\partial\overline{\partial} U.\\
 \partial\Pb + [Q,\Pb] + \overline{\partial} P + [\Qb,P] &=0
 \end{align}
and the Bianchi identities \eqref{bianchi_id_form} are
\begin{align}
 \p\Pb-\pb P+[Q,\Pb]-[\Qb,P]=0,\qquad
 \p\Qb-\pb Q+[Q,\Qb]+[P,\Pb]=0,
\end{align}
where we used the shorthand notation
\begin{align}
 P\equiv P_z,\quad  \Pb\equiv P_\zb,\quad
 Q\equiv Q_z,\quad  \Qb\equiv Q_\zb,\quad
 \p\equiv \p_z,\quad \pb\equiv \p_\zb.
\end{align}
It will be important to note that $P=P_z$ and $Q=Q_z$ no longer
live in the real algebras
$\la{p}_\bbR = (\la{e}_{8(8)})_\bbR \ominus \la{so}(16)_\bbR$ and
$\la{k}_\bbR = \la{so}(16)_\bbR$, but in their complexified version
$\la{p}_\bbC = (\la{e}_{8(8)})_\bbC\ominus \la{so}(16)_\bbC$ and
$\la{k}_\bbC = \la{so}(16)_\bbC$,
respectively.

The condition for the field configuration to preserve supersymmetry is
that the supersymmetry variation \eqref{susy_var_psi},
\eqref{susy_var_chi} for fermions vanish.  Namely,
\begin{align}
 \zetab^I\Gamma^I_{\dot{A}A}P^A&=
 \zeta^I\Gamma^I_{\dot{A}A}\Pb^A=0,
 \label{delta_chi=0}
 \\
 D(e^{-U/4}\zeta)
 &=\Db(e^{U/4}\zeta)
 =D(e^{U/4}\zetab)
 =\Db(e^{-U/4}\zetab)=0,
 \label{delta_psi=0}
\end{align}
where we defined
\begin{align}
 \zeta^I\equiv \epsilon^I_1+i\epsilon^I_2,\qquad
 \zetab^I\equiv \epsilon^I_1-i\epsilon^I_2.
\end{align}
Note that the subscript of $\epsilon^I_{1,2}$ is the 3D spinor index.
Also, in \eqref{delta_psi=0}, $D,\Db$ are the $K$-covariant derivatives, $D
\xi^{I}=\p \xi^I+Q^{IJ}\xi^J$, $\Db \xi^{I}=\pb \xi^I+\Qb^{IJ} \xi^J$ with $\xi=\zeta$ or $\xi=\zetab$. (These covariant derivatives act only as normal derivatives when acting on $e^{\pm U/4}$.)

Let us now reason that, for this configuration, satisfying the projection
equations involving $P$, i.e.\ (\ref{delta_chi=0}), is necessary and
sufficient for a given amount of supersymmetry to be preserved on-shell. The integrability of the other supersymmetry equations
(\ref{delta_psi=0}) is assured if $[e^{U/4} D e^{-U/4}, e^{-U/4} \Db e^{U/4}]\zeta = 0$, which gives us the condition:
\be (\partial \overline{Q} - \overline{\partial}Q + Q\overline{Q} - \overline{Q}Q)\zeta = -\frac12 (\partial\overline{\partial} U) \zeta.\ee
On the right hand side, the expression is just $R_{z\overline{z}}/2$. From the
first Bianchi identity of (\ref{bianchi_id_form}), the left hand side is just
$-[P,\overline{P}]\zeta = -P^A\Gamma^{IJ}_{AB}\overline{P}^B
\zeta^J$. Using this and multiplying the projection equation
(\ref{delta_chi=0}) by $\Gamma_{\dot{A}B}^J$ to get the identity:\footnote{We wish to thank the anonymous referee for pointing out that this
equation can be used in an alternative proof of parts of Main Result \ref{thm:Pstructure}. Upon multiplication by $\overline{\zeta}^J$, one realizes that the equation essentially gives us the equation $[H,X] = 2X$ with $H\sim \zeta^I \overline{\zeta^J}$ and $X\sim P_z$, implying that $P_z$ is nilpotent and moreover giving an indication of which nilpotent orbits are allowed through an analysis of the stabilizer of $H$.}
:
\be 2 \zeta^I \Gamma^{IJ}_{A B} \overline{P}^A + \zeta^J \overline{P}^B =0,\ee
we can rewrite the integrability condition  as:
\be \frac12 \tr(P \overline{P})\zeta^I = \frac12 R_{z\overline{z}}
\zeta^I.\ee
The resulting equation is just the Einstein equation of motion
(\ref{eq:EOMtimelike}) and will always be satisfied on-shell, assuring
us that the other supersymmetry equations (\ref{delta_psi=0}) can be
integrated. The reverse reasoning can also be applied, i.e.\ if we have a solution $\zeta$ to the $Q$ equations, then using $[e^{U/4} D e^{-U/4}, e^{-U/4} \Db e^{U/4}]\zeta = 0$, the Bianchi identity, and the Einstein equation of motion, we get:
\be -P^A\Gamma^{IJ}_{AB}\overline{P}^B \zeta^J = \frac12 P^A \overline{P}^A \zeta^I,\ee
which can be multiplied by $\overline{\zeta}^I$ and rewritten as:
\be \| M^{\dot{A}} \|^2 = 0,\ee
where $\|\cdot\|$ is the complex vector norm and $M^{\dot{A}} = \Gamma^J_{\dot{A}B} \overline{P}^B \zeta^J$. It follows that $M^{\dot{A}} = 0$, which are exactly the $P$ projection equations. We can conclude that studying the $P$ projection equations (\ref{delta_chi=0}) is necessary and sufficient to guarantuee that a certain fraction of SUSY is preserved (on-shell): if we can pick a spinor at a point in spacetime, $\zeta(x_0)$, which satisfies the $P$ projection equations at that point, then we are guarantueed that the $Q$ equations can be integrated, i.e.\ we can extend $\zeta$ to a function over spacetime; but we are also assured that this resulting function $\zeta$ will satisfy the $P$ projection equations at every point in spacetime.\footnote{This is without taking into acount possible singular points where e.g.\ $P$ blows up (for instance, at points where singular brane sources sit), which may need to be excised from the spacetime.}

We have not discussed the precise boundary conditions for the fermions. The fermions transform both under the compact subgroup $K$ and as space-time spinors. If we are looking for solutions with monodromy $g_0\in G(\mathbb{Z})$, then $V=e^{\phi}$ will have the property that
\begin{equation}
V(e^{2\pi i} z) = g_0 V(z) K_0(z)
\end{equation}
and the natural boundary condition for the fermions is that they should
transform with $K_0(z)$ as
we go around the origin. Since
\begin{equation}
P(e^{2\pi i} z) = K_0(z)^{-1} P(z) K_0(z),
\end{equation}
this boundary condition is indeed compatible with the projection equations for unbroken supersymmetry generators (see (\ref{eq:Peqstatement}) and the discussion immediately after). We therefore apparently do not need to explicitly check the boundary conditions for the fermions and will not discuss this point explicitly in what follows.

\subsection{Lie Algebra Concepts}
\label{sec:lieconcepts}

We will be needing a few important concepts in the theory of Lie
algebras in our classification of supersymmetric solutions in 3D
(especially in the \emph{timelike} class). We introduce these concepts
here and illustrate them with simple $\la{sl}(n)$ examples.


\subsubsection{Root Decomposition of Lie Algebras}
\label{sec:lieconceptsroots} The root decomposition of Lie algebras is
well-known, but we review it here very quickly for reference as well as
discuss the root system of $\la{e}_8$.

In every Lie algebra $\la{g}$, we can select a number of commuting
(semi-simple) elements $H_i$. The maximum number of such elements that we can
select is called the \emph{rank} of $\la{g}$, and the collection
$\{H_i\}$ is called a Cartan subalgebra of $\la{g}$. Then, we can pick a
basis for the rest of $\la{g}$ that simultaneously diagonalizes all of
the generators $H_i$. This diagonalizing basis can be given by
$\{E_{\alpha}\}$, where $\alpha$ is a $\rank (\la{g})$-length vector
that denotes the eigenvalues of $E_{\alpha}$ under commutation with the
$H_i$'s, called a \emph{root} (vector). Roots have many properties,
e.g.\ if $\alpha$ is a root, then $-\alpha$ is also a root; the
collection of all vectors $\alpha$ that are roots is called a \emph{root
system}. In the end, all of the commutation relations of $\la{g}$ are
then given by:
\begin{align}
 [H_i, E_{\alpha}] &= \alpha_i E_{\alpha},\\
 [E_{\alpha}, E_{-\alpha}] &= \alpha\cdot H,\\
 [E_{\alpha}, E_{\beta}] &= N_{\alpha, \beta} E_{\alpha+\beta},
\end{align}
where $N_{\alpha,\beta}\neq 0$ only if $\alpha+\beta$ is a root. These
$N_{\alpha,\beta}$'s satisfy a number of consistency conditions (e.g.\
through the Jacobi identity), but there are still overall factors that
can be chosen arbitrary.

The Lie algebra $\la{e}_8$ has rank 8, so there are 8 generators in the
Cartan subalgebra; additionally there are 240 root generators
$E_{\alpha}$. The root system we will use consists of all 8-vectors with
two entries $\pm1$ and all other entries $0$ (112 such roots), and all
8-vectors with all entries $\pm \half$ with an even number of $+\half$ (128
such roots). Explicitly, we have all permutations of:
$(1,1,0,0,0,0,0,0)$, $(1,-1,0,0,0,0,0,0)$, $\half(1,1,1,1,1,1,1,1)$,
$\half(-1,-1,1,1,1,1,1,1)$, $\half(-1,-1,-1,-1,1,1,1,1)$, as well as their
negatives.

The Cartan involution $\theta$ has been mentioned above already. We repeat here that we always take it to act on the Lie algebra as:
\be \theta(E_{\alpha}) = -E_{-\alpha}, \qquad \theta(H_i) = -H_i,\ee
so that the compact generators (those with eigenvalue $+1$ under $\theta$) are given by $E_{\alpha}-E_{-\alpha}$, and the non-compact generators (those with eigenvalue $-1$ under $\theta$) are given by $H_i, E_{\alpha}+E_{-\alpha}$.

\subsubsection{The Adjoint Representation; Nilpotent and Semi-simple Elements}
The well-known \emph{adjoint representation} of a Lie algebra takes an element to its action via the commutator:
\be \adj: \quad\la{g} \mapsto \End(\la{g}): \, X \rightarrow \ad_X;\qquad \ad_X(Y) = [X,Y].\ee
Using the adjoint representation, we can divide elements of the Lie
algebra into three distinct classes (the only overlap between the three
classes is the element $0$):
\begin{itemize}
 \item \emph{Nilpotent} elements: elements $X$ for which $(\ad_X)^n
       \equiv 0$ for some finite $n$.
       \\
       The \emph{Jacobson-Morozov}
       theorem tells us that for any nilpotent element $X$ in an algebra
       $\la{g}$, we can always find elements $H,Y\in \la{g}$
       which satisfy
       \begin{align}
        [H,X]=2X,\qquad [H,Y]=-2Y,\qquad [X,Y]=H.
       \end{align}       
       Such a triple
       $\{H,X,Y\}$ is called a \emph{standard triple} and it generates
       an $\la{sl}(2)$ subalgebra of $\la{g}$.\\ In
       $\la{sl}(n)$, typical examples of nilpotent elements are the upper-triangular
       matrices (with zero on the diagonal).
\item \emph{Semi-simple} elements: elements $X$ for which $\ad_X$ is
      diagonalizable (in the complexified version of the Lie algebra).\\ In $\la{sl}(n)$, obvious examples are any
      traceless diagonal matrix.
\item \emph{Other} elements: elements that are neither nilpotent nor
      semi-simple. For all elements $X$, the unique Jordan decomposition
      is given by: \be X = X_S + X_N,\ee where $X_S$ is semi-simple,
      $X_N$ is nilpotent, and $[X_S,X_N]=0$. Lie algebra theory
      guarantees that such a unique splitting always exists and $X_S, X_N$ lie
      in the same algebra as $X$.\\ An interesting fact is that all
      elements of $\la{sl}(2)$ are either nilpotent or semi-simple
      \cite{collingwood}. So the ``smallest'' algebra where we can find
      such a non-semi-simple and non-nilpotent element is
      $\la{sl}(3)$; an example is $X=\left(\begin{array}{ccc} 1 & 1
      & 0\\ 0 & 1 & 0\\ 0&0&-2\end{array}\right)$, where $X_S$ is the
      diagonal part and $X_N$ the off-diagonal part.
\end{itemize}
For semisimple Lie algebras, $X$ is nilpotent (resp. semisimple) if and only if $X$ is nilpotent (resp. semisimple) in every finite dimensional representation of the Lie algebra. It follows that an embedding of an algebra into another algebra preserves nilpotency or semi-simplicity of each element \cite{collingwood}. One corrollary that this implies is that if an element $X$ is a part of an $\la{sl}(2)$ subalgebra of a given Lie algebra $\la{g}$, then $X$ must be either semi-simple or nilpotent.

\subsubsection{Orbits; Topology of Orbits}
\label{sec:lieconceptsorbits}
Another very important concept in Lie algebras is that of an \emph{orbit} $\mathcal{O}_X$ of an element $X$, also known as its \emph{conjugacy class}. We refer to the excellent book \cite{collingwood} for more details on orbits (especially nilpotent orbits) in a Lie algebra; we will only give an intuitive sketch of the subject here. Unless otherwise specified, the results mentioned are valid both for orbits in real and complex Lie algebras.

For an element $X$ in a matrix Lie algebra $\la{g}$ with associated connected matrix Lie group $G$, the conjugacy class or orbit is defined naturally as:
\be \mathcal{O}_X = G \cdot X = \{ M\cdot X \cdot M^{-1}, \; M\in G \}.\ee
There is a natural extension of this definition to define orbits in non-matrix Lie algebras as well using the natural action of the Lie group $G$ on the Lie algebra $\la{g}$ \cite{collingwood}.

It is very important to realize that orbits always only contain one type of elements (nilpotent, semi-simple, or other); thus, the study of e.g.\ classifying all nilpotent elements in a Lie algebra can be reduced to the somewhat easier problem of classifying all nilpotent \emph{orbits} in a Lie algebra.

Let us collect a few important and interesting facts on the orbits of the semi-simple and nilpotent types (not much can be said about the other type):
\begin{itemize}
 \item \emph{Semi-simple orbits.} There are infinitely many of these orbits. For complex Lie algebras, every semi-simple orbit contains exactly one element of a given Cartan subalgebra (up to Weyl reflections). In other words, every semi-simple element in a complex Lie algebra is conjugate to an element in a given Cartan subalgebra. In particular, the well-known result that all Cartan subalgebras are conjugate follows from this.\\
For real Lie algebras, the same is not true. As an example, consider the Cartan subalgebras generated by elements of the form $\left(\begin{array}{cc} \lambda&0\\0&-\lambda \end{array}\right)$ in $\la{sl}(2)$. This Cartan subalgebra is complex conjugate to the Cartan subalgebra generated by $\left(\begin{array}{cc} 0&\lambda\\-\lambda&0 \end{array}\right)$ with $\lambda\in\mathbb{C}$, but these are not two conjugate subalgebras in $\la{sl}(2)_{\mathbb{R}}$ where we take $\lambda\in\mathbb{R}$ in both subalgebras.\\
For the real algebra $\la{e}_{8(8)}$ (the case we are ultimately interested in), there are 10 distinct conjugacy classes of Cartan subalgebras \cite{sugiura59}.
\item \emph{Nilpotent orbits.} One surprising fact is that there are
       only finitely many nilpotent orbits in any (real or complex) Lie
       algebra. They can and have been enumerated for all semi-simple
       complex and real Lie algebras \cite{collingwood}. In general, the
       intersection of a complex nilpotent orbit with the real algebra
       is a union of multiple real nilpotent orbits.\\
For example, in $\la{sl}(2)_\bbC$, there is only one nilpotent orbit, generated by the element $\left(\begin{array}{cc} 0&1\\0&0\end{array}\right)$. However, in $\la{sl}(2)_\bbR$, there is also a second nilpotent orbit generated by $\left(\begin{array}{cc} 0&-1\\0&0\end{array}\right)$.
\end{itemize}

There is a natural topology which one can impose on a Lie algebra called the Zariski topology. The closed sets in the Zariski topology are the sets of zeros of polynomials in the elements. We will not spend too much details on the specifics here (although we will need this topology explicitly in e.g.\ the proof of Result \ref{prop:Pclosure}); however, a number of results concerning the topology of orbits will be important in the following.\\
\emph{Note: The closure $\overline{\mathcal{O}}$ of an orbit $\mathcal{O}$ should not be confused with the complex conjugate $\overline{P}$ of an element $P$!}

First of all, let us stress that the closure of an orbit is always a union of orbits (because the closure of an orbit is $G$-invariant). Further, we can collect some important facts about the three types of orbits:
\begin{itemize}
 \item \emph{Semi-simple orbits.} These are the only orbits that are closed sets by themselves. This means that for a semi-simple $X$, $\overline{\mathcal{O}}_X = \mathcal{O}_X$. This also implies that every closure of an orbit contains a semi-simple orbit. The zero orbit $\mathcal{O}_0=\{0\}$ is by convention the only orbit which is both semi-simple and nilpotent.
\item \emph{Other orbits.} An element $X$ with unique Jordan decomposition $X=X_S+X_N$ has the important property that $\mathcal{O}_{X_S}\subset\overline{\mathcal{O}}_X$.
\item \emph{Nilpotent orbits.} These orbits are the only orbits that
contain the zero orbit, $\mathcal{O}_0=\{0\}$, in their
closure. Further, it is very non-trivial that there exists a
well-defined partial ordering structure on nilpotent orbits as follows:
$\mathcal{O}_i\leq \mathcal{O}_j$ if $\overline{\mathcal{O}}_i
\subseteq\overline{\mathcal{O}}_j$. This partial ordering allows us to
draw a so-called Hasse diagram, where orbits are ordered from left to
right by dimension of the orbit and a line is drawn between two orbits
if one is contained in the closure of the other. The (partial) Hasse
diagram for nilpotent orbits in $\la{e}_{8(8)}$ is given in Fig.\
\ref{fig:hassediagram} in appendix \ref{sec:theorems}\@. Note how e.g.\
$\mathcal{O}_1\leq \mathcal{O}_3$ as we can trace a line backwards from
$\mathcal{O}_3$ to $\mathcal{O}_1$.
\end{itemize}

Finally, there is one more type of orbit that we will be interested
in. Take a Cartan
decomposition $\la{g}=\la{k}\oplus\la{p}$ of $\la{g}$ as in (\ref{eq:cartandecomp}), i.e.\ when
$\la{g}$ is viewed as a real Lie algebra, $\la{k}$ contains
compact generators and $\la{p}$ non-compact generators; and call
$G,K$ the Lie groups that are associated to, respectively,
$\la{g},\la{k}$. Then we will also be interested in
complex $K(\bbC)$-orbits in $\la{p}_\bbC$. Note that
these objects are well-defined, as the action of $K$ on $\la{p}$ is
internal. We will see later on that these orbits are crucial in the
characterization of the \emph{timelike} class of supersymmetric
solutions.

Up until now, we have only discussed complex and real $G$-orbits in
$\la{g}$; one might \emph{a priori} think that these $K$-orbits in
$\la{p}$ are completely different animals. Thankfully, there is the
\emph{Kostant-Sekiguchi bijection} which is a natural one-to-one
correspondence between $G(\bbR)$-orbits in $\la{g}_\bbR$ and
$K(\bbC)$-orbits in $\la{p}_\bbC$ \cite{collingwood}.\footnote{Note that this correspondence is between orbits and not between elements in the orbits.} This natural
bijection also carries over the topology structure of the
orbits.\footnote{At least for classical Lie algebras \cite{collingwood}
and for $E_{8(8)}$ \cite{Djokovic:hasse03}; presumably this is true for
other exceptional Lie algebras as well.} Thus, when we talk about
$K(\bbC)$-orbits in $\la{p}_\bbC$, we will be able to use many
results concerning $G(\bbR)$-orbits in $\la{g}_\bbR$. A few other
results concerning $K(\bbC)$-orbits in $\la{p}_\bbC$ that are
needed are collected in appendix \ref{sec:theorems}\@. Sometimes, we will want to relate an element of a complex $K$-orbit in $\la{p}$ to an element in the corresponding (through the Kostant-Sekiguchi bijection) real $G$-orbit in $\la{g}$ (or vice versa). We will denote such a relationship as:
\be \label{eq:defKSbij} P_z \bowtie X,\ee
which should be read as ``the complex $K$-orbit of $P_z$ corresponds to the real $G$-orbit of $X$ through the Kostant-Sekiguchi bijection''. Intuitively, one can think of this as being a kind of generalization of the concept of ``conjugate to'', since the elements in orbits connected by this bijection share many important properties. 

For $\la{g}_\bbR=\la{e}_{8(8)}$, there are 116 nilpotent
$G(\bbR)$-orbits in $\la{g}_\bbR$ (and infinitely many
semi-simple and other orbits, as explained above); these are typically
labelled $\mathbf{0}$--$\mathbf{115}$, where $\mathbf{0}=\{0\}$ is the
trivial orbit. Because of the Kostant-Sekiguchi bijection, this same
numbering applies to the nilpotent $K(\bbC)$-orbits in
$\la{p}_\bbC$.

\subsubsection{Standard Triples; Cayley Triples; Cayley Transforms}
\label{sec:lieconceptstriples}
A few times in this paper, we will be referring to a ``standard triple''. A \emph{standard triple} is a triple $\{H,X,Y\}$ in Lie algebra $\la{g}$ which satisfies:\footnote{The conventions of the standard triple sometimes differ in the sign of the third commutator. We follow the convention of e.g.\ \cite{collingwood}, while e.g.\ \cite{Djokovic:reps} has an extra minus sign in this commutator.}
\be [H,X]= 2X; \qquad [H,Y]=-2Y; \qquad [X,Y] = H.\ee
Clearly, $\mathrm{span}\{H,X,Y\}$ is an $\la{sl}(2)$-subalgebra of $\la{g}$.

Given a Cartan involution $\theta$, a Cayley triple is defined as a standard triple which satisfies:
\be \theta(X)=-Y; \qquad \theta(Y)=-X; \qquad \theta(H)=-H.\ee
For example, for any root $\alpha$ and the Cartan involution as specified at the end of section \ref{sec:lieconceptsroots}, $\{\alpha\cdot H, E_{\alpha}, E_{-\alpha}\}$ is such a Cayley triple.

Given a Cayley triple, the Cayley transform of the Cayley triple is given by the standard triple $\{H',X',Y'\}$, with:
\be X' = \frac12(X+Y + iH); \qquad Y' = \frac12(X+Y-iH); \qquad H' =i(X-Y).\ee
It is easy to see that this new triple consists of eigenvectors of $\theta$, i.e.\ $\theta(X') = -X'$, $\theta(Y')=-Y'$; $\theta(H')=-H'$. (A triple with this property is also called a \emph{normal} triple.)

As mentioned above, the Jacobson-Morozov theorem guarantuees us that
every nilpotent element can be embedded in a standard triple. This is
crucial in the study and classification of complex and real nilpotent
orbits, which relies heavily on this standard triple and in particular
the neutral element $H$ that accomponies the nilpotent $X$. Any
nilpotent $G(\bbR)$-orbit in $\la{g}_{\bbR}$ contains a representative
$X$ which is the nilpositive element of a real Cayley triple. The
Kostant-Sekiguchi bijection between $G(\bbR)$-orbits in $\la{g}_\bbR$
and $K(\bbC)$-orbits in $\la{p}_\bbC$ sends the orbit through this
nilpositive element $X$ (in a real orbit) of a real Cayley triple into
the complex orbit of its Cayley transform $X'$, i.e., $X\bowtie X'$ using
the notation defined above.


\section{Structure of SUSY solutions}\label{sec:allresults}
In this section, we will discuss our results on characterizing
supersymmetric solutions in 3D maximal supergravity as described
above. This section will always use the notation
$G(\bbR)\equiv E_{8(8)};\ G(\bbC)\equiv E_8(\bbC);\ 
K(\bbC/\bbR)\equiv SO(16,\bbC/\bbR);\ 
\la{k}_{\bbC/\bbR} \equiv \la{so}(16)_{\bbC/\bbR};\
\la{p}_\bbR \equiv \la{e}_{8,8}\ominus\la{so}(16)_\bbR;\ 
\la{p}_\bbC\equiv (\la{e}_{8})_\bbC\ominus\la{so}(16)_\bbC$.

\subsection{Statement of Results}\label{sec:statementresults}
Our first result is about the spacetime metric of a SUSY solution:
\begin{theorem}\label{thm:spacetime}
A spacetime that preserves some supersymmetry is of one of two classes:
\begin{itemize}
\item The \emph{timelike} class, for which the metric is static and can
      always be brought to the form:
\be \label{eq:timelikemetric} ds^2 = -dt^2 + e^{U(z,\bar{z})}( dz d\bar{z}).\ee
In these spacetimes, $P_t=0$; the scalars do not depend on $t$.
 \item The \emph{null} class of \emph{pp-waves}, for which the metric can always be brought to the form:
\be \label{eq:nullmetric} ds^2 = -2du dv - 2\omega(v,x) dv dx + h(v,x)dx^2.\ee
The scalars only depend on $v$; the only non-zero component of $P$ is $P_v$.
\end{itemize}
\end{theorem}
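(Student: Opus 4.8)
The plan is to follow the Tod-style bilinear construction adapted to the 3D maximal theory. Starting from a Killing spinor $\epsilon^I$, I would first build the vector bilinear $V^\mu$ out of the sixteen 3D Majorana spinors $\epsilon^I$ — concretely something like $V^\mu = \sum_I \bar\epsilon^I \gamma^\mu \epsilon^I$ — and establish its basic properties: (i) $V^\mu$ is causal (timelike or null) everywhere, which follows from a Fierz-type positivity argument using the explicit 3D gamma matrices given in section \ref{sec:maxsugra} (in the real Majorana basis, $V^0$ is a sum of squares of components); and (ii) $V^\mu$ is a Killing vector, which follows by differentiating the bilinear and using the gravitino variation \eqref{susy_var_psi} together with the fact that $Q_\mu^{IJ}$ is antisymmetric in $I,J$ (so the $Q$-terms cancel in the symmetrized derivative $\nabla_{(\mu}V_{\nu)}$). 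One also checks that $\mathcal{L}_V$ annihilates the scalar data $P_\mu, Q_\mu$, since the Killing spinor identity forces the scalars to be invariant along $V$.

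The argument then splits on whether $V$ is timelike or null. \textbf{Timelike case:} adapt coordinates so that $V = \partial_t$; staticity (rather than mere stationarity) comes from showing the twist $V_{[\mu}\nabla_\nu V_{\rho]}$ vanishes — in $d=3$ this is a single scalar equation, and it should follow from contracting the gravitino variation with $\gamma$-matrices and using the dilatino equation \eqref{susy_var_chi}. With $V=\partial_t$ hypersurface-orthogonal, the metric reduces to $-N^2 dt^2 + g_{ij}dx^i dx^j$ on a 2D base; a further consequence of the spinor equations (essentially that the norm $V^\mu V_\mu$ is constant — the gravitino variation gives $\nabla_\mu(V^\nu V_\nu)=0$ because the connection and $Q$ terms drop) lets me normalize $N\equiv 1$, and conformal gauge on the 2D Riemannian base $g_{ij}$ brings it to $e^{U(z,\bar z)}dz\,d\bar z$. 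Since everything is $t$-independent and $P_t = g_{t\mu}P^\mu = 0$ once $N=1$ and $P$ has no $t$-leg, we land on \eqref{eq:timelikemetric}. \textbf{Null case:} here $V^\mu V_\mu = 0$; introduce $v$ with $V=\partial_u$ being null and $dv$ dual to it, so that the metric takes the Brinkmann-like form $-2du\,dv - 2\omega\,dv\,dx + h\,dx^2$ with all functions depending only on $(v,x)$ (the $u$-independence is again the statement that $V$ is Killing and the scalars are $V$-invariant). The spinor then satisfies a chirality-type projection $\gamma^u\epsilon = 0$ (or $V_\mu\gamma^\mu\epsilon=0$), and feeding this back into \eqref{susy_var_chi} kills all components of $P$ except $P_v$, and into \eqref{susy_var_psi} forces the scalars to depend on $v$ only; I should also check no $dx^2$-type pp-wave profile term $H(v,x)du^2$ is needed — it can be removed by a coordinate shift, or more precisely the form written already absorbs it into $\omega, h$.

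The main obstacle I anticipate is the \emph{case distinction and closure argument} in the null sector: showing that a genuinely null Killing vector with the accompanying projection forces the metric into exactly the stated pp-wave form with only $P_v\neq 0$, rather than some more general null configuration. In particular one must rule out the possibility that the base "transverse" data is two-dimensional and carries its own scalar flux; this requires carefully using both the $\delta\psi=0$ integrability (the $Q$-curvature equation derived in section \ref{sec:maxsugra}) and the $\delta\chi=0$ projection simultaneously, rather than either one alone. A secondary subtlety is the normalization $N\equiv 1$ and constancy of $\|V\|$ in the timelike case: one must verify that the relevant spinor bilinear scalar is actually covariantly constant and not merely divergence-free, which uses that in $d=3$ there are enough gamma-matrix identities to pin down $\nabla_\mu \|V\|$ completely from $\delta\psi^I_\mu=0$. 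Everything else — the Fierz positivity of $V^0$, the antisymmetry of $Q$ making $V$ Killing, and conformal gauge in two dimensions — is routine.
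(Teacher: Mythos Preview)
Your overall strategy---build the vector bilinear $V_\mu$ from the Killing spinor, split into timelike and null cases, and read off the metric---is the same as the paper's. But there are two places where your plan diverges from what actually happens, and one of them is a genuine gap.

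\textbf{First, the missed simplification.} You aim to show that $V$ is Killing and then separately argue that the twist vanishes and the norm is constant. In fact the gravitino variation \eqref{susy_var_psi} in this theory contains only the spin connection and the $K$-connection $Q^{IJ}_\mu$; there is no $P$-term or flux term. When you differentiate $V_\nu=(\epsilon^I)^T\gamma_{\hat 0}\gamma_\nu\epsilon^I$ and substitute $\nabla_\mu\epsilon^I=-Q^{IJ}_\mu\epsilon^J$, the $Q$-contributions cancel \emph{completely} (not only in the symmetrized part), because $Q^{IJ}$ is antisymmetric while the bilinear is symmetric in $I\leftrightarrow J$. Hence $\nabla_\mu V_\nu=0$ outright: $V$ is covariantly constant. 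This single fact gives you, in the timelike case, both the vanishing twist and the constancy of $|V|^2$ at once (so $f\equiv\mathrm{const}$ and $A_i=\partial_i A$ in $ds^2=-f(dt+A_i dx^i)^2+h_{ij}dx^idx^j$, yielding the form \eqref{eq:timelikemetric} after a shift of $t$ and conformal gauge), and in the null case it fixes the Brinkmann coefficients directly. Your separate arguments for twist and norm would work, but they are detours around a one-line computation.

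\textbf{Second, the vanishing of $P_t$, $P_u$, $P_x$.} Here your proposal has a real gap. You claim that ``the Killing spinor identity forces the scalars to be invariant along $V$'' and, in the null case, that feeding the projection $V_\mu\gamma^\mu\epsilon=0$ into the dilatino variation ``kills all components of $P$ except $P_v$''. Neither statement is justified as written. The dilatino equation \eqref{susy_var_chi} is a single constraint on the combination $\gamma^\mu P_\mu$ acting on $\epsilon$; it does not by itself set individual components $P_u$, $P_x$ to zero. The paper instead uses the Einstein equation \eqref{einstein_eq}: once the metric is in the stated form, one has $R_{tt}=0$ (timelike) and $R_{uu}=R_{xx}=0$ (null), and since $R_{\mu\nu}=P^A_\mu P^A_\nu$ with $P^A_\mu$ real, the diagonal entries force $P^A_t=0$ and $P^A_u=P^A_x=0$ respectively. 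This is the step you should replace your spinor argument with.
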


Our second main result is to classify the \emph{timelike} class of solutions and will feature the quantity $P_z$ defined above in section \ref{sec:maxsugra} prominently. As we saw above at the end of section \ref{sec:maxsugra}, satisfying the projection equations  (\ref{delta_chi=0}) involving $P_z$ is a necessary and sufficient condition for preserving a given amount of supersymmetry, as the other supersymmetry equations (\ref{delta_psi=0}) are guaranteed to be integrable on-shell. This $P_z$ is an element of the complexified coset algebra, $\la{p}_\bbC$. The equations we are studying are the projection equations:
\be \label{eq:Peqstatement} P_z^A \Gamma^I_{A\dot{A}} \overline{\zeta}^I = 0. \ee
We want to investigate for which $P^A$ there can be non-trivial complex spinor solutions $\zeta^I$ to these equations.

It is immediately clear that only the conjugacy class of $P_z$, called its \emph{orbit}, is important: the existence of non-trivial solutions $\zeta^I$ is related to the rank of the matrix $M^{\dot{A}}_{\ I}=P^A \Gamma^I_{A\dot{A}}$ (see especially the proof of result \ref{prop:Pclosure} below), and this rank is unaffected by conjugating $P^A$ with any group element of $K(\bbC)$.

So we are interested in $K(\bbC)$-conjugacy classes, i.e.\ $K(\bbC)$-orbits in $\la{p}_\bbC$. As explained above at the end of section \ref{sec:lieconceptsorbits}, there is a natural bijection between these orbits and $G(\bbR)$-orbits in $\la{g}_\bbR$ \cite{collingwood,Djokovic:hasse03} which provides many results on the structure of these orbits. There are also some additional results for these orbits \cite{KostantRallis71} that we will need; these are collected in appendix \ref{sec:theorems} with the appropriate references there; in the proofs of our results below we will reference these theorems in the appendix as needed.

The main fact needed to understand our result below (also discussed at the end of section \ref{sec:lieconceptsorbits}) is that there are 116 (i.e.\ a finite number of) non-trivial nilpotent $K(\bbC)$-orbits in $\la{p}_\bbC$, which are all numbered as \textbf{0} through \textbf{116} \cite{collingwood, Djokovic:hasse03}.

Our classification for the \emph{timelike} class of supersymmetric solutions can then be summarized as follows:
\begin{theorem}\label{thm:Pstructure}
For a supersymmetric solution of the \emph{timelike} class, the quantity $P_z$, at every point in spacetime, is a (nilpotent) element of one of the ten nilpotent orbits in $\la{p}_\bbC$ labelled \textbf{0}, \textbf{1}, \textbf{2}, \textbf{3},\textbf{ 4}, \textbf{6}, \textbf{7}, \textbf{9}, \textbf{12}, or \textbf{14}. Which of these orbits that $P_z$ sits in determines how much supersymmetry is preserved (see Table \ref{tab:orbits}).
\end{theorem}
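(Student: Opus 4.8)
The plan is to reduce Main Result \ref{thm:Pstructure} to a question about $K(\bbC)$-orbits in $\la{p}_\bbC$, to show that supersymmetry forces $P_z$ to be nilpotent, and then to decide orbit by orbit which nilpotent orbits carry Killing spinors. The starting point is the observation made after \eqref{eq:Peqstatement}: if $k\in K(\bbC)$ then conjugating $P_z\mapsto k\,P_z\,k^{-1}$ while acting on the spinor by $\zetab\mapsto k\,\zetab$, using that $\Gamma^I_{A\dot A}$ intertwines the vector and spinor representations of $SO(16,\bbC)$, maps solutions of \eqref{eq:Peqstatement} to solutions. Hence the complex dimension of the space of Killing spinors — which equals $16-\rank_\bbC M$ for the matrix $M^{\dot A}{}_I\equiv P_z^A\Gamma^I_{A\dot A}$, and which is $16$ times the preserved fraction of supersymmetry — depends only on the orbit $\cO_{P_z}\subset\la{p}_\bbC$. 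Main Result \ref{thm:Pstructure} is therefore equivalent to two statements: (i) $\cO_{P_z}$ must be nilpotent, and (ii) among the $116$ nontrivial nilpotent $K(\bbC)$-orbits, only the ten listed (together with the trivial orbit $\mathbf 0$) have $\ker M\neq 0$, with the kernel dimensions realising the fractions $\tfrac{1}{2},\tfrac{1}{4},\tfrac{1}{8},\tfrac{1}{16}$ recorded in Table \ref{tab:orbits}.

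For step (i) I would follow the spinor-bilinear argument indicated in the footnote to \eqref{eq:Peqstatement}. Contracting the identity $2\,\zeta^I\Gamma^{IJ}_{AB}\overline P^A+\zeta^J\overline P^B=0$ with $\overline\zeta^J$, and reading $H^{IJ}\equiv\zeta^{[I}\overline\zeta^{J]}$ as an element of $\la{so}(16)_\bbC$ acting on $\la{p}_\bbC=\mathbf{128}$ through $\Gamma^{IJ}$, produces a relation of the form $[\Ht,P_z]=2P_z$ with $\Ht\in\la{so}(16)_\bbC$ built from $\zeta$. Restricting $H^{IJ}$ to $\mathrm{span}\{\zeta,\overline\zeta\}$ one checks that, for $P_z\neq 0$, $\Ht$ cannot be nilpotent, and since $H^{IJ}$ has rank at most $2$ it is in fact semisimple; a trace argument then shows that the semisimple part of $P_z$ has to vanish, so $P_z$ is nilpotent. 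Equivalently, one completes $\{\Ht,P_z\}$ to a normal $\la{sl}(2)$-triple and invokes the Jacobson–Morozov theorem together with the corollary in section \ref{sec:lieconcepts} that any element of an $\la{sl}(2)$-subalgebra is semisimple or nilpotent.

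For step (ii), which is the real work, I would go through the $116$ nilpotent $K(\bbC)$-orbits in $\la{p}_\bbC$, using the Kostant–Sekiguchi correspondence $P_z\bowtie X$ to carry over the classification data for nilpotent $\la{e}_{8(8)}$-orbits \cite{collingwood,Djokovic:hasse03} and the structural results of \cite{KostantRallis71} collected in appendix \ref{sec:theorems}. For a representative $P_z$ in a normal triple $\{H,P_z,Y\}$ the neutral element $H\in\la{so}(16)_\bbC$ is an orbit invariant, $M=\Gamma(P_z)$ raises the $\ad_H$-grading by $2$, and the $SO(16,\bbC)$-representations $\mathbf{16}$ and $\mathbf{128}'$ decompose into $\ad_H$-weight spaces; this, together with the structure of the stabiliser of $H$, makes $\dim_\bbC\ker M$ computable case by case. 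The answer is that $\ker M=0$ except on the ten orbits $\mathbf 0,\mathbf 1,\mathbf 2,\mathbf 3,\mathbf 4,\mathbf 6,\mathbf 7,\mathbf 9,\mathbf{12},\mathbf{14}$, with $\dim_\bbC\ker M$ as in Table \ref{tab:orbits}. A useful consistency check, which also organises the computation, is that $\rank_\bbC M$ is lower semicontinuous on $\la{p}_\bbC$, so the supersymmetric orbits form a lower set for the closure partial order: any orbit in the closure of a supersymmetric orbit is again supersymmetric, and preserves at least as much supersymmetry — consistent with the placement of these ten orbits in the Hasse diagram of Fig.\ \ref{fig:hassediagram}. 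The main obstacle is precisely this orbit-by-orbit bookkeeping over all $116$ cases: keeping track of the $\la{so}(16)_\bbC$-conjugacy class of $H$, of the induced $\ad_H$-decompositions of $\mathbf{16}$ and $\mathbf{128}'$, and of the resulting rank of $M$ — which is where the Kostant–Sekiguchi dictionary, the tables of \cite{Djokovic:hasse03}, and the results of appendix \ref{sec:theorems} are indispensable.
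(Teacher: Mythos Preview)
Your proposal is sound in outline and would yield the theorem, but it differs from the paper's proof in two places, one of which is a significant missed shortcut.

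For nilpotency of $P_z$ you use the spinor-bilinear relation $[\Ht,P_z]=2P_z$; the paper mentions this route in a footnote but actually proves nilpotency differently. It first establishes the semicontinuity statement (your ``consistency check'') as a standalone result~(a), then argues that if $P_z$ were not nilpotent its orbit closure would contain a nonzero element $c_iH^i$ of a fixed Cartan subspace of $\la{p}_\bbC$ (via the Kostant--Rallis results in appendix~\ref{sec:theorems}), and a direct computer check shows that no such Cartan element admits a Killing spinor. Your route is more conceptual but has a small gap: ``rank at most $2$ $\Rightarrow$ semisimple'' is false in $\la{so}(16)_\bbC$ (a null bivector is nilpotent). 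What you actually need is that in the timelike class $\zeta\cdot\zeta=0=\overline\zeta\cdot\overline\zeta$ while $\zeta\cdot\overline\zeta=|\epsilon_1|^2+|\epsilon_2|^2>0$, so $\Ht=\zeta\wedge\overline\zeta$ has nonzero eigenvalues $\pm|\zeta|^2$ on $\mathrm{span}\{\zeta,\overline\zeta\}$ and is therefore semisimple; then $[\Ht,P_z]\propto P_z$ with nonzero eigenvalue already forces $P_z$ nilpotent, and the appeal to completing a normal $\la{sl}(2)$ triple is unnecessary.

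The larger divergence is in step~(ii). You propose to run through all $116$ nilpotent orbits, computing $\ker M$ from the $\ad_H$-grading of $\mathbf{16}$ and $\mathbf{128}'$ in each case. The paper avoids this entirely: it computes $\ker M$ (by explicit machine computation on a Kostant--Sekiguchi representative, appendix~\ref{sec:appproofPorbits}) only for the ten orbits $\mathbf{0},\mathbf{1},\mathbf{2},\mathbf{3},\mathbf{4},\mathbf{6},\mathbf{7},\mathbf{9},\mathbf{12},\mathbf{14}$ and for orbit~$\mathbf{5}$, which it finds breaks all supersymmetry. It then invokes the Hasse diagram (lemma~\ref{lemma:E8hasse}): every nilpotent orbit other than these ten contains $\mathbf{5}$ in its closure. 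By the semicontinuity result~(a)---which you relegate to a consistency check but the paper uses as the key lemma---any orbit whose closure contains $\mathbf{5}$ preserves at most as much supersymmetry as $\mathbf{5}$, namely none. This cuts $116$ computations to $11$. Your $\ad_H$-grading approach is plausible, but note that knowing the weight decomposition of $\mathbf{16}$ and $\mathbf{128}'$ under $H$ does not by itself fix $\rank M$: the map $M$ raises $H$-weight by $2$, but its rank on each weight pair still depends on the specific $P_z$ within the $2$-eigenspace of $\ad_H$, not just on the conjugacy class of $H$. In practice you would end up working with explicit representatives anyway, so promoting semicontinuity from check to tool is the real saving.
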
 
The concepts of nilpotency and orbits are introduced in section \ref{sec:lieconcepts}.

Main Result \ref{thm:Pstructure} can be obtained by splitting it into three smaller results:
\begin{enumerate}[(a)]
 \item \label{prop:Pclosure} \emph{Say a $P_z$ in orbit $\mathcal{O}_1$ preserves $x$ supercharges, and a $P_z$ in orbit $\mathcal{O}_2$ preserves $y$ supercharges. Then $\overline{\mathcal{O}_1} \subseteq \overline{\mathcal{O}_2}$ implies $x\geq y$.}
\item \label{prop:Pnilpotent}
\emph{For all spacetimes preserving supersymmetry, $P_z$ must be a nilpotent element in $\la{p}_\bbC$ (at every point in spacetime).}
\item \label{prop:Porbits}
\emph{If $P_z$ is in nilpotent orbits \textbf{0}, \textbf{1}, \textbf{2}, \textbf{3},\textbf{ 4}, \textbf{6}, \textbf{7}, \textbf{9}, \textbf{12}, or \textbf{14}, then the spacetime can preserve supersymmetry. If $P_z$ is in any other nilpotent orbit, the spacetime always breaks all supersymmetry. The amount of supersymmetry preserved by $P_z$ depends on which orbit $P_z$ sits in (see Table \ref{tab:orbits}).}
\end{enumerate}

Given these three results, it is easy to see how the Main Result \ref{thm:Pstructure} is obtained: for spacetimes in the \emph{timelike} class that preserve supersymmetry, result \ref{prop:Pnilpotent} tells us that $P_z$ must be nilpotent, while result \ref{prop:Porbits} tells us which nilpotent orbits $P_z$ can be in.

A few remarks are in order about this timelike class:
\begin{itemize}
 \item All of the orbits that preserve some supersymmetry are not only nilpotent but are of the type $n\times \la{sl}(2)$ as is apparent from the `Structure' column in Table \ref{tab:orbits}. The `Structure' column lists the type of minimal regular (i.e.\ normalized by a Cartan subalgebra of $\la{g}$) subalgebra $\la{s}\subseteq\la{e}_8$ that meets the nilpotent orbit $\mathcal{O}$. (See \cite{Djokovic:reps} for the explicit precise definition.) Intuitively (and loosely speaking), nilpotent elements in such orbits with structure $n\times A_1$ can be seen as constructed as sums of nilpotent elements from commuting $\la{sl}(2)$ subalgebras.
 \item We note that a classification along the lines of table VIII in \cite{Bossard:2009at} should be possible, where the nilpotent orbit of $P_z$ is completely determined by its properties in various representations. Since we do not need this kind of classification here, we will just note that in the adjoint representation (where only the trivial orbit $\mathbf{0}$ satisfies $\left(\textrm{ad}_{P_z}\right)^2 = 0$), we have $\left(\textrm{ad}_{P_z}\right)^3 = 0$ for orbits \textbf{1} and \textbf{2}; $\left(\textrm{ad}_{P_z}\right)^4 = 0$ for orbits \textbf{3} and \textbf{6}; and $\left(\textrm{ad}_{P_z}\right)^5 = 0$ for orbits \textbf{4}, \textbf{7}, \textbf{9}, \textbf{12}, and \textbf{14}. However, also some non-supersymmetric nilpotent orbits have $\left(\textrm{ad}_{P_z}\right)^5 = 0$ - for example, orbit \textbf{5}.
 \item Although our results completely classify supersymmetric solutions, this classification is given in terms of $P_z$. It is unclear what the physical meaning of $P_z$ is, and what $P_z$ being in a particular orbit means for restricting possible solutions. Investigating this is the subject of sections \ref{sec:simplebranes} and \ref{sec:morebranes}, where we find a plethora of explicit single-center brane solutions.
\end{itemize}

Finally, we can completely specify the \emph{null} class of solutions in a relatively straightforward way:
\begin{theorem}\label{thm:nullclass}
A supersymmetric solution of the \emph{null} class, with a metric as given by (\ref{eq:nullmetric}), preserves 1/2 of the supersymmetries. The scalars (which are only functions of $v$ as stated above) are related to the metric functions as:
\be \label{eq:nullclassEOM} \frac{(\partial_v h)^2 +2\partial_x h\partial_v\omega-2h(2\partial_x\partial_v\omega+\partial_v^2 h)}{4h^2} =  P_v^A(v) P_v^A(v).\ee
Further, if the equation:
\be \label{eq:nullfcondition} -\omega +\partial_x f + c\partial_v\omega - h\frac{\partial^2_v f}{\partial_v\omega}+h\frac{\partial_v f\partial^2_v\omega}{(\partial_v\omega)^2}=0,\ee
has a non-trivial solution for the function $f(x,v)$ and constant $c$, then the metric can always be put by a coordinate transformation in the simple form:
\be \label{eq:nullmetricnew} ds^2 = -2du dv + h_0(v)^2 dx^2.\ee
In this case, equation (\ref{eq:nullclassEOM}) becomes:
\be \label{eq:nullclassEOMnew} h_0'' +  h_0 P_v^A(v) P_v^A(v)=0.\ee
\end{theorem}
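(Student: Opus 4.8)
The plan is to take as input the conclusion of the proof of Main Result~\ref{thm:spacetime} for the \emph{null} class: the metric is \eqref{eq:nullmetric}, the scalars depend only on $v$, and the only non-vanishing component of $P$ (hence of $Q$) is the $v$-component. The key kinematical observation I would establish first is that $g^{vv}=0$ for the metric \eqref{eq:nullmetric} (the relevant cofactor of $g_{\mu\nu}$ vanishes), so the Clifford relation $\{\gamma^\mu,\gamma^\nu\}=2g^{\mu\nu}$ forces $(\gamma^v)^2=0$; in the two-dimensional spinor space of $d=3$ this means $\gamma^v$ has rank one, with a one-dimensional kernel spanned by a fixed ``null spinor'' $\eta_0$. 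With only $P_v\neq 0$, the dilatino variation \eqref{susy_var_chi} becomes $\delta\chi^{\dot A}\propto \gamma^v\epsilon^I\Gamma^I_{\dot A A}P^A_v$; writing $\epsilon^I=a^I\eta_0+b^I\eta_1$ with $\eta_1\notin\ker\gamma^v$, the $\eta_0$-part drops out identically and $\delta\chi^{\dot A}=0$ reduces to $P_v^A\Gamma^I_{\dot A A}b^I=0$, a projection equation of exactly the type analyzed in the timelike class, which for $P_v\neq 0$ has only $b^I=0$. Hence the dilatino equation is solved precisely by the $16$-real-parameter family $\epsilon^I=a^I\eta_0$, half of the $32$ real supercharges. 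I would then check these are genuine Killing spinors by feeding $\epsilon^I=a^I\eta_0$ into \eqref{susy_var_psi} and re-running the integrability argument from the end of section~\ref{sec:maxsugra}, which again reduces to the $vv$-component of \eqref{einstein_eq}; so the null class is $1/2$-BPS.

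For the scalar--metric relation, I would use that for any pp-wave of the form \eqref{eq:nullmetric} the Ricci tensor has only a $vv$-component, while $P=P_v(v)\,dv$ makes $\tr(P_\mu P_\nu)$ have only the $vv$-component $P_v^A P_v^A$ (in the normalization implicit in \eqref{einstein_eq}). A direct computation of $R_{vv}$ for \eqref{eq:nullmetric} gives exactly the left-hand side of \eqref{eq:nullclassEOM}, so the single non-trivial Einstein equation \eqref{einstein_eq} is precisely \eqref{eq:nullclassEOM}. I would also note that the scalar equation of motion \eqref{scalar_eom_P,Q} is automatically satisfied: since $g^{vv}=g^{vx}=0$ one has $P^\mu=-\delta^{\mu u}P_v(v)$, so $\nabla_\mu P^\mu=\frac{1}{\sqrt{h}}\partial_u(\sqrt{h}\,P^u)=0$ by $u$-independence, and $[Q_\mu,P^\mu]=[Q_u,P^u]=0$ since $Q_u=0$.

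For the canonical form \eqref{eq:nullmetricnew}, the plan is to exploit the residual diffeomorphism freedom of \eqref{eq:nullmetric}: shifts $u\to u+f(v,x)$ together with reparametrizations of $(v,x)$, the latter carrying a one-parameter ambiguity which becomes the constant $c$. A $u$-shift generated by $f$ changes the $dv\,dx$ coefficient by $\partial_x f$ but also produces a $g_{vv}\,dv^2$ term; removing that term with the accompanying reparametrization feeds back into the $dv\,dx$ equation, and the requirement that this system closes — i.e.\ that $\omega$ is set to zero with no surviving $dv^2$ term — is, after the dust settles, precisely the linear PDE \eqref{eq:nullfcondition} for $f$. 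Once \eqref{eq:nullfcondition} is solvable the metric is in the form $ds^2=-2du\,dv+\tilde h(v,x)\,dx^2$; the $vv$-Einstein equation then reads $\partial_v^2\sqrt{\tilde h}=-R_{vv}(v)\sqrt{\tilde h}$, a linear ODE in $v$ at each fixed $x$, so $\sqrt{\tilde h}$ is an $x$-dependent combination of two $v$-profiles which (a consequence of \eqref{eq:nullfcondition} admitting a solution) factorizes, and a final $v$-independent rescaling $x\to x'(x)$ absorbs the $x$-dependence to give $\tilde h=h_0(v)^2$. Substituting $\omega=0$, $h=h_0(v)^2$ into \eqref{eq:nullclassEOM} collapses the left-hand side to $-h_0''/h_0$, yielding \eqref{eq:nullclassEOMnew}.

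The genuinely hard part, I expect, is the last paragraph: carefully tracking how $\omega$, $h$ and the emergent $g_{vv}$ term transform under the combined shift-plus-reparametrization, pinning down the closure condition as exactly \eqref{eq:nullfcondition} (including the role of the free constant $c$), and verifying that when $f$ exists the residual $\tilde h$ really does factorize so the $v$-independent $x$-rescaling works. Everything else is kinematics ($g^{vv}=0$, $\mathrm{rank}\,\gamma^v=1$), a curvature computation ($R_{vv}$), or a reuse of the integrability argument already supplied for the timelike class.
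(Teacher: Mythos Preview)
Your approach to the $1/2$-BPS count via $(\gamma^v)^2=0$ and the one-dimensional kernel of $\gamma^v$ is different from the paper's and is structurally fine, but the crucial step is not justified. You write that $P_v^A\Gamma^I_{\dot A A}b^I=0$ is ``a projection equation of exactly the type analyzed in the timelike class, which for $P_v\neq 0$ has only $b^I=0$.'' That appeal is backwards: the whole point of the timelike analysis is that this equation \emph{does} admit nontrivial solutions when $P_z$ sits in one of the special nilpotent orbits. What makes the null case different is that $P_v$ is \emph{real}, so $P_v^AP_v^A>0$ whenever $P_v\neq0$. The quick way to close the gap is to multiply your projection equation by $\Gamma^J_{\dot A B}$ to obtain $2b^I\Gamma^{IJ}_{AB}P_v^A+b^J P_v^B=0$, then contract with $P_v^B$: the first term drops by antisymmetry of $\Gamma^{IJ}_{AB}$ in $A,B$, leaving $b^J\,(P_v^AP_v^A)=0$, hence $b^J=0$. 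The paper sidesteps this entirely by reading $\epsilon_1=\epsilon_2$ directly off the components of the covariantly constant vector $V$ in an explicit vielbein, and then simply observing that with $\epsilon_1=\epsilon_2$ the dilatino variation vanishes identically; no projection argument on $P_v$ is needed.

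For the coordinate-transformation part your strategy is again different and considerably less concrete. The paper does not perform a finite $u$-shift followed by a $(v,x)$-reparametrization and then chase a ``closure condition''; instead it writes an infinitesimal diffeomorphism with parameter $\xi_\mu=(c,f,e)$, imposes that the metric stay in the form \eqref{eq:nullmetric} (which forces $c$ constant and fixes $e=-h\,\partial_v f/\partial_v\omega+c\,\omega$), and then the single remaining condition $g'_{xv}=0$ \emph{is} equation \eqref{eq:nullfcondition}. Your description (``after the dust settles'') leaves open exactly the part you flag as hard: how the constant $c$ enters, and why the condition is precisely \eqref{eq:nullfcondition} rather than some nonlinear variant. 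The factorization of the residual $\tilde h(v,x)$, by contrast, you handle the same way the paper does: once $\omega=0$ the $vv$-Einstein equation is $\partial_v^2\sqrt{\tilde h}+\sqrt{\tilde h}\,P_v^AP_v^A=0$, which forces the $x$-dependence to be purely multiplicative and hence removable by $x\to x'(x)$. Your treatment of \eqref{eq:nullclassEOM} itself (only $R_{vv}$ nonzero, scalar EOM automatic from $g^{vv}=g^{vx}=0$) matches the paper.
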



\subsection{Derivation of Results}\label{sec:derivationresults}

\subsubsection{Two Classes of Spacetimes}
\begin{proof}[Proof of Main Result \ref{thm:spacetime}]
Consider the covariant vector
\begin{align}
 V_\mu\equiv (\epsilon^I)^T\gamma_{\hat{0}}\gamma_\mu \epsilon^I.
\end{align}
By requiring that the supersymmetry variation for gravitino
\eqref{susy_var_psi} vanish, we can easily derive
\begin{align}
 \nabla_\nu V_\mu=0\label{nabla_V=0}
\end{align}
with $\nabla_\nu$ the standard covariant derivative in three dimensions.
Note that in the expression for $V_\mu$ the spinors are ordinary numbers
and therefore commute with each other.  If we choose coordinates at a
point so that the metric at that point is Minkowski, the norm of $V_\mu$
becomes
\begin{align}
 |V|^2=-4(\epsilon_1\cdot \epsilon_1)(\epsilon_2\cdot \epsilon_2)
 +4(\epsilon_1 \cdot\epsilon_2)^2
\end{align}
where $\epsilon_1\cdot \epsilon_2\equiv \epsilon_1^I\epsilon_2^I$ etc.
Therefore, there are two possibilities, either $\epsilon_1$ is parallel to $\epsilon_2$ and $V$ is null, or $\epsilon_1$
and $\epsilon_2$ are not parallel and $V$ is timelike.

\begin{itemize}
 \item
      \emph{$V$ timelike:}

If $V$ is timelike, then the condition \eqref{nabla_V=0} implies that $V$ is Killing, but is
actually stronger than that. If we locally choose coordinates $(t,x^1,x^2)$
such that $V=\partial_t$, the metric will not
depend on $t$ as a consequence of the fact that $V$ is Killing:
\be ds^2 = -f(x)(dt + A_i(x)dx^i)^2 + h_{ij}(x)dx^i dx^j.\ee
However, the condition \eqref{nabla_V=0} is much stronger actually implies that $f$ is a constant and $A_i(x)=\partial_i A(x)$, which means that by a redefinition of the coordinate $t\rightarrow |f|^{1/2}(t-A(x))$, we can bring the metric into the form:
\begin{align}
 ds^2 = - dt^2 + h_{ij}(x)dx^i dx^j, \qquad i=1,2,
\end{align}
in other words it must at least locally be a direct product. By choosing conformal gauge on the 2d surface, the metric becomes
\begin{align}
 ds^2 = -dt^2 + e^{U(x)} [(dx^1 )^2+ (dx^2 ) ^2 ] = -dt^2 + e^{U(z,\zb)} dzd\zb,
\end{align}
where in the last equality we introduced complex coordinates $z=x^1+i x^2$.

From the explicit form of $V_\mu$ , we can then also derive that
\begin{align}
 \epsilon_1\cdot \epsilon_1=\epsilon_2\cdot \epsilon_2={\rm const.},
 \qquad
 \epsilon_1\cdot \epsilon_2=0.
\end{align}

It is also easy to see that the scalar fields have to be time
independent. Indeed, the $(t,t)$ component of the Einstein equation
\eqref{einstein_eq} reads
\begin{align}
 P^A_t P^A_t=R_{tt}=0
\end{align}
which shows that $P^A_t=0$, and this in turn can only happen when $\partial_t \phi=0$.

 \item \emph{$V$ null:}

If $V$ is null, we can locally choose coordinates $(u,v,x)$ such that $V=\partial_u$. Using a similar reasoning as in \cite{Gutowski:2003rg}, the general form of such a metric is given by:
\be ds^2=-2 (H dv+\gamma dx)\left[ du +\omega dx + G(Hdv+\gamma dx)\right]+hdx^2,\ee
where $H,\gamma,\omega,G, h$ depend on $x,v$. Now, (\ref{nabla_V=0}) implies that $H=\partial_v K, \gamma=\partial_x K$. Using $K(x,v)$ as a coordinate and shifting $u$ by a function of $(x,v)$, we can then bring the metric to the form (\ref{eq:nullmetric}).

Two of the Einstein equations are now $R_{xx}=R_{uu}=0$ so that, using the same reasoning as above in the \emph{timelike} case, we have $P_u=P_x=0$ and $\partial_x\phi=\partial_u\phi=0$. The only non-zero component of $P$ is then $P_v(v)$.
\end{itemize}
\end{proof}

\subsubsection{Closure relation}
\begin{proof}[Proof of result \ref{prop:Pclosure}]\footnote{We thank M.~Baggio for discussions on this proof.}
We want to prove the statement that an orbit $\mathcal{O}'$ contained in the closure $\overline{\mathcal{O}}$ of $\mathcal{O}$ must preserve at least the amount of SUSY that $\mathcal{O}$ does.\\
The condition to preserve SUSY is captured in the (complexified) equations:
\be P^A \Gamma^I_{A\dot{A}} \overline{\zeta}^I = 0.\ee
For a given $P^A$, these are 128 linear homogeneous equations which the 16 components of $\zeta^I$ must satisfy. The rank of the matrix $M^{\dot{A}}_{\ I}=P^A \Gamma^I_{A\dot{A}}$ is crucial for the amount of SUSY preserved.\\
The rank-nullity theorem tells us that the matrix rank of $M$ plus its nullity space (i.e.\ the dimension of the space of vectors $\zeta^I$ which it annihilates) must add up to 16. Demanding that we preserve at least a fraction $f$ of SUSY is equivalent with demanding at least $16f$ linear independent Killing spinors and thus equivalent with demanding that the rank of $M$ is at most $16(1-f)$. If the rank of $M$ is at most $16(1-f)$, then we want all $(16(1-f)+1)\times(16(1-f)+1)$ submatrices of $M$ to have vanishing determinant. These determinant equations are of the form $D_i=0$, where $D_i$ is a homogeneous polynomial in the components of $P_z$. The set of zeros $Z(D_i)$ of the polynomial $D_i$ is a closed set in the Zariski topology. Then $C_f\equiv \bigcap_i Z(D_i)$ (where $i$ runs over all determinants of $(16(1-f)+1)\times(16(1-f)+1)$ submatrices of $M$) is an intersection of closed sets, so is closed. Clearly, $C_f$ is set of $P$'s preserving at least a fraction $f$ of SUSY\@. Elementary determinant theory implies that $C_f\subset C_{f'}$ if and only if $f\geq f'$. This implies the wanted relationship, because if $\mathcal{O}\subset C_f$, then also $\overline{\mathcal{O}}\subset C_f$ and then for every $\mathcal{O}'\subset \overline{\mathcal{O}}$ we have $\mathcal{O}'\subset C_f$, so $\mathcal{O}'$ preserves at least $f$ SUSY\@.
\end{proof}

\subsubsection{$P_z$ nilpotent}\label{sec:proofPnilpotent}
\begin{proof}[Proof of result \ref{prop:Pnilpotent}]
We will prove that $P$ must be nilpotent at every (space-time) point $x$ if the spacetime is to preserve any SUSY\@.

Assume that $P_z$, at some point $x$, is not nilpotent, but still preserves some SUSY\@. Then, we know that (all elements in) the orbit $\mathcal{O}_P$ (by this we mean the $K(\bbC)$ orbit of $P_z$ in $\la{p}_\bbC$) preserve SUSY, as well as all elements in its closure $\overline{\mathcal{O}_P}$ (from result \ref{prop:Pclosure}). Now, we rely on the following statement (which we will prove in a moment):\\
\emph{There exists a non-zero element $c_i H_i$ (i.e.\ an element in the CSA which we have constructed above, i.e.\ with all $H_i\in \la{p}_\bbC$) for which $c_i H_i \in \overline{\mathcal{O}_P}$.}\\
However, it is easy to prove, by a simple brute-force equation solving in Mathematica (see appendix \ref{sec:appconstruction}, especially \ref{sec:appproofPnilpotent}, for more details), that the only element $c_i H_i$ which preserves SUSY (i.e.\ for which there is a non-trivial Killing spinor satisfying the SUSY equation for $P_z$) is the one where $c_i=0$ for all $i$. The lack of non-zero $c_iH_i$ preserving SUSY means $P_z$ must also break SUSY completely. Thus, we are finished: the only elements $P_z$ which can possibly preserve SUSY are nilpotent!

We still need to prove the italic statement above. First of all, if $P_z$ is not nilpotent, then there exists a unique Jordan decomposition of $P_z=P_S + P_N$ with $P_S$ semi-simple and non-zero, $P_N$ nilpotent, and $[P_S,P_N]=0$. Since $P_z\in \la{p}_\bbC$, we have $P_S, P_N\in \la{p}_\bbC$ (lemma \ref{lemma:KRxinp}). Now, we have that $P_S\in \overline{\mathcal{O}_P}$ (lemma \ref{lemma:KRxsinclosure}) so from this also follows trivially that for the orbit  $\mathcal{O}_{P_S}$ of $P_S$, we have $\mathcal{O}_{P_S} \subset \overline{\mathcal{O}_P}$ (because $\overline{\mathcal{O}_P}$ is $K$-stable, i.e.\ it must consist of a union of orbits). But now we have that every non-zero semi-simple orbit contains a non-zero element of the CSA spanned by $\{H_i\}$ (lemma \ref{lemma:KRxsconjugatecartan}), so we conclude that there exists a non-zero element in the CSA, $c_i H_i$ (with complex $c_i$'s), for which $c_i H_i \in \overline{\mathcal{O}_P}$.
\end{proof}

\subsubsection{Table of SUSY orbits}\label{sec:proofPorbits}
\begin{proof}[Proof of result \ref{prop:Porbits}]
We can verify by direct calculation (see appendix \ref{sec:appconstruction}, especially \ref{sec:appproofPorbits}, for more details) that a $P_z$ in nilpotent orbit \textbf{0}, \textbf{1}, \textbf{2}, \textbf{3},\textbf{ 4}, \textbf{6}, \textbf{7}, \textbf{9}, \textbf{12}, and \textbf{14} preserves the amount of SUSY as given in Table \ref{tab:orbits}. In the same way, we can verify that $P_z$ in the nilpotent orbit \textbf{5} breaks all SUSY\@.

Lemma \ref{lemma:E8hasse} tells us that all nilpotent orbits, with the exception of the ten SUSY-preserving orbits \textbf{0}, \textbf{1}, \textbf{2}, \textbf{3},\textbf{ 4}, \textbf{6}, \textbf{7}, \textbf{9}, \textbf{12}, and \textbf{14}, contain orbit $\textbf{5}$ in their closure. Using Result \ref{prop:Pclosure}, it then immediately follows that all orbits except the nine mentioned above must break all SUSY\@.
\end{proof}

\begin{landscape}
\begin{table}
\begin{center}
 \begin{tabular}{|l||l||r|r||l||l|}
 \hline
 $\mathbb{R}$ label & Structure & $\dim$ & $\inv$ & Representative & SUSY\\
 \hline\hline
 \textbf{0}  & $0$ & 0 & 120 & (vacuum) & 1\\
 \hline\hline
 \textbf{1}  & $A_1$ & 58 & 64 & $D7(1234567)$ & 1/2\\
 \hline\hline
 \textbf{2} & $2A_1$ & 92 & 44 & $D3(123)+D7(1234567)$ & 1/4\\
 \hline\hline
 \textbf{3}   & $3A_1$ & 112 & 40 & $P(1)+D3(123)+D7(1234567)$ & 1/8\\
 \hline\hline
 \textbf{4}  & $(4A_1)''$ & 114 & 70 & $D5(34567)+D5(12567)+D5(12347)+D1(7)$ & 1/8\\
 \hline
 5   & $(4A_1)'', A_2$ & 114 & 64 & $D1(1)+D7(1234567)$ & 0\\
 \hline\hline
 \textbf{6}  & $(4A_1)'$ & 128 & 32 & $D3(125)+D3(234)+D3(136)+D1(7)$ & 1/16\\
 \hline\hline
 \textbf{7}  & $5A_1$ & 136 & 38 & (orbit 4) $+P(7)$ & 1/16\\
 \hline
 8  & $5A_1,A_2+A_1$ & 136 & 32 & (orbit 5) $+P(7)$ & 0\\
 \hline\hline
 \textbf{9}  & $6A_1$ & 146 & 38 & (orbit 7) $+KK(34567;2)$ & 1/16\\
 \hline
 10   & $6A_1,A_2+2A_1$ & 146 & 26 & (orbit 8) + $KK(34567;2)$ & 0\\
 \hline\hline
 11  & $A_3$ & 148 & 32 & $D1(7)+D5(12347)+KK(12345;6)$ & 0\\
 \hline\hline
 \textbf{12}   & $7A_1$ & 154 & 50 & (orbit 9) $+KK(12567;3)$ & 1/16\\
 \hline
 13  & $7A_1,A_2+3A_1$ & 154 & 26 & (orbit 10) + $KK(12567;4)$ & 0\\
 \hline\hline
 \textbf{14}  & $8A_1$ & 156 & 92 & (orbit 12) $+KK(12347;6)$ & 1/16\\
 \hline
 15 & $8A_1,A_2+4A_1$ & 156 & 50 & (orbit 13) + $KK(12347;5)$ & 0\\
 \hline
 16  & $8A_1,A_2+4A_1,2A_2$ & 156 & 44 & $D1(7)+D5(12345)+D5(34567)+KK(14567;2)$ & 0\\ 
 \hline\hline
 17  & $2A_2+A_1$ & 162 & 24 & (orbit 16) $+D5(12367)$ & 0\\
 \hline
 \end{tabular}
 \caption{\label{tab:orbits}Tabulation of all (nilpotent) orbits preserving some supersymmetry as well as a few that do not. See section \ref{sec:proofPorbits} about the SUSY-preservation of the orbits. See section \ref{sec:branereps} on how the orbit representative was found. See the end of appendix \ref{sec:appbranerep} on the characterization of the nilpotent orbit by its dimension $\dim$ and its orbit invariant $\inv$. Note that the orbits grouped together (for example, orbits \textbf{4} and \textbf{5}) are those that, as $G(\mathbb{R})$-orbits in $\mathbf{g}_{\mathbb{R}}$, are in the same $G(\mathbb{C})$-orbit in $\mathbf{g}_{\mathbb{C}}$; we do not need this property of the orbits in this paper.}
\end{center}
\end{table}
\end{landscape}

\subsubsection{Specifications of Null Class}
\begin{proof}[Proof of Main Result \ref{thm:nullclass}]
If we introduce the following vielbein:
\begin{align}
e_{\hat{t}} &= -\frac12 du - dv - \frac12\omega dx,\\
e_{\hat{x}} &= \sqrt{h} dx,\\
e_{\hat{y}} &= -\frac12 du + dv -\frac12\omega dx,
\end{align}
then the components of the vector $V$ defined above are given by:
\begin{align}
V_u &= 0 = -\frac12(\epsilon_1-\epsilon_2)^2,\\
V_v &= -1 = -(\epsilon_1+\epsilon_2)^2,\\
V_x &= 0= \frac12(\epsilon_1-\epsilon_2)\left[2\sqrt{h}(\epsilon_1+\epsilon_2)-\omega(\epsilon_1-\epsilon_2)\right].
\end{align}
From this we learn that $\epsilon_1=\epsilon_2$ and $\epsilon_1^2 = 1/4$. Thus, all such null SUSY solutions already break at least 1/2 of the supersymmetries.

One can now easily check that, with the above vielbein and imposing $\epsilon_1=\epsilon_2$, the supersymmetry variations (\ref{susy_var_chi}) identically vanish. 

Finally, again using the above vielbein and imposing $\epsilon:=\epsilon_1=\epsilon_2$, setting the supersymmetry variations (\ref{susy_var_psi}) to zero gives us:
\begin{align}
\partial_u \epsilon & =0,\\
\partial_x \epsilon  &=0,\\
\partial_v\epsilon +\frac14 Q_v \epsilon &=0.
\end{align}
From this we see that $\epsilon$ is only a function of $v$.

We can perform a gauge transformation of the metric, $g_{\mu\nu}\rightarrow g'_{\mu\nu}=g_{\mu\nu}+2\nabla_{(\mu}\xi_{\nu)}$, with parameter $\xi_{\mu} =(c(x,v), f(x,v), e(x,v) )$. Demanding that the form of the metric remains that of (\ref{eq:nullmetric}) gives us the condition that $c$ is a constant and further that:
\be e = -\frac{h \partial_v f}{\partial_v\omega} +c \omega.\ee
Further, the condition that the new metric has $\omega=0$, i.e.\ that $g'_{xv}=0$, is precisely equation (\ref{eq:nullfcondition}). So, if equation (\ref{eq:nullfcondition}) has a non-trivial solution for $f(x,v)$ and $c$, then this gauge transformation with parameter $\xi$ brings us to a metric with the form:
\be ds^2 = -2du dv + h_1(x,v)^2 dx^2,\ee
where $h_1$ is still in principle a function of $v$ \emph{and} $x$. The only non-trivial component of the Einstein equation is still the $(v,v)$ component and is now given by:
\be  \partial_v^2 h_1(x,v) +  h_1(x,v) P_v^A(v) P_v^A(v)=0.\ee
It is clear that the $x$-dependence of $h_1$ must be given by an arbitrary multiplicative function, i.e.\ $h_1(x,v) = t(x) h_0(v)$, where $t(x)$ is an arbitrary function of $x$. However, this function $t(x)$ can be absorbed in a redefinition of the coordinate $x$, so we can set $h_1=h_0(v)$ without loss of generality, finally obtaining the expression (\ref{eq:nullmetricnew}) for the metric with (\ref{eq:nullclassEOMnew}) as only equation of motion.
\end{proof}

\section{Simple Single Center Brane Solutions}\label{sec:simplebranes}
A relevant question that we want to investigate is to see what kind of branes are possible in the theory of 3D maximal supergravity. Thus, we are interested in static point particles in the timelike class above. The analysis in the section above shows us what the structure of $P_z$ must be for the brane to preserve supersymmetry, but this says nothing about the charge of the branes that can preserve SUSY\@. The ``charge'' of a point particle in 3D (which is an exotic (7-)brane in 10D \cite{deBoer:2010ud, deBoer:2012ma}) is its scalar monodromy around the brane; a discussion of the possible point particles in 3D is really a discussion of the possible scalar monodromies one can have. Ideally, we would like to find a (unambiguous) relation between $P_z$ and the monodromy of a (single) brane to be able to classify the possible single center monodromies according to supersymmetry preserved; however, as we will eventually realize in this section and the next, it appears that there is no simple relation between these two quantities, which unfortunately makes a classification of possible (supersymmetric) brane monodromies very difficult. This is similar to the discussion in \cite{Bergshoeff:2013sxa}, where it is shown using group theory arguments that the supersymmetry conditions for branes of codimension two or lower are degenerate, i.e.\ there are multiple distinct branes that can preserve exactly the same supersymmetries.

All of the explicit brane solutions considered here, as well as those of the next section, are all summarized in a more or less self-contained fashion in section \ref{sec:potformtaxonomy}. The reader only interested in the punchline of this section and the next can skip directly to there.

\subsection{Ansatz}
We will focus on a class of static spacetimes where the point particle sits at the origin and the metric is rotationally invariant; the metric is:
\be
ds^2 = -dt^2 + e^{U(r)} (dr^2 + r^2 d\varphi^2).
\ee
The usual complex coordinate is related to these by $z=r e^{i\varphi}$.

For the scalars, it will be convenient to work with the quantity $M=e^{\phi} e^{\phi^T}$ as defined around \eqref{eq:Mdef}; we are using the notation $x^T=-\theta(x)$. We will propose the following ansatz for $M$:
\be
\label{eq:Mansatz}
M = e^{\varphi\, X} m(r) e^{\varphi\, X^T}.
\ee
Here, $X$ is some algebra element in the algebra $\la{e}_{8(8)}$ and $m$ is a group element in $E_{8(8)}/SO(16)$, i.e.\ it satisfies $m^T=m$. We would like to emphasize that we have not proven that all supersymmetric single-center solutions must be of this type. Intuitively, the angular dependence in this ansatz corresponds to a geodesic in the coset $G/K$, and modifying this behavior would naively increase the energy. Still, it would be nice to prove or disprove that this ansatz captures all single-center supersymmetric solutions.

The scalar equation of motion for $M$ given by (\ref{eq:Mansatz}) is now:
\be
\label{eq:MansatzEOM}
r \partial_r (r\partial_r m\ m^{-1}) + [X, m X^T m^{-1}] = 0 .
\ee
This differential equation has many interesting properties, like the existence of an infinite set of conserved charges, which we will discuss in more detail in section \ref{sec:morebranes}.

It is also easy to see (in unitary gauge) that: \be \label{eq:MansatzP}
e^{\phi} P_{\mu} e^{-\phi} = \frac12 \partial_{\mu} M M^{-1}, \ee so
that $P_z$ is conjugate to $\partial_z M M^{-1}$. Also, one of the
Einstein equations of motion will imply the following constraint on $M$:
\be \label{eq:MansatzEOMconstraint} R_{zz} = \frac14\Tr(\partial_z M
M^{-1} \partial_z M M^{-1}) = 0.  \ee The Einstein equation of motion
for $R_{z\zb}$ determines the metric function $U$: \be
\label{eq:MansatzEOMU} r\partial_r (r\partial_r U) = -\frac14\tr\left[
(r\partial_r m m^{-1})^2 + (X+mX^Tm^{-1})^2\right].  \ee Of course, the
function $U$ that solves this equation must be well-behaved for the
solution to make physical sense; i.e.\ it must be real and not blowing
up except perhaps at the origin or at infinity in a way allowed by
string theory. In fact, we will not worry about the asymptotic behavior
of $r\rightarrow\infty$; single codimension-2 branes notoriously have
divergences at spatial infinity but our solutions should only be considered
to be valid approximations near an object and should be regulated in
some way to avoid divergences far away from the brane.

Since the scalars transform under a global gauge transformation as (\ref{V->GVH}), it is easy to see that $M$ transforms as:
\be M\rightarrow g M g^T.\ee
Therefore the scalar monodromy for the ansatz (\ref{eq:Mansatz}) can be identified as $g=e^{2\pi X}$. (In the full quantum theory, we would have to demand that the monodromy sits in the discrete $U$-duality group, $g\in G(\bbZ)$. This constraint on $g$, though clearly important, will not play any role in our subsequent analysis and we will therefore ignore it.) We will also call the algebra element $X$ the monodromy in what follows.

We wish to point out that the monodromy $X$ as defined here is not unambiguous. For a trivial example that illustrates this, consider $m=\mathbf{1}$. Then if $M = \mathbf{1}$, we can choose as ``monodromy'' $X$ any element in $\la{so}(16)$ and still keep $M=\mathbf{1}$. For generic $m$'s, the monodromy $X$ might have such (compact) ambiguities $X\sim X + k$ with $k$ in some subset of $\la{so}(16)$.

If we have a solution for a given monodromy $X$ (and $m$), then we can obtain any solution with a conjugate monodromy easily as follows. For a given conjugacy matrix $U\in G(\bbR)$, we take $X' = U X U^{-1}$ and $m'=UmU^T$; this implies $M \rightarrow U M U^T$. The scalar equation of motion (\ref{eq:MansatzEOM}) can easily be seen to be invariant under conjugation, so this procedure clearly produces a solution with the conjugate monodromy $e^{2\pi X'}=U e^{2\pi X} U^{-1}$. So we are once again interested in orbits; this time the orbits of the monodromies $X$, which will be $G(\bbR)$-orbits in $\la{g}_\bbR$.


\subsection{Review of $SL(2)$ Solutions}
First, we will review three basic building blocks that we will use in constructing our solutions later. These building blocks are solutions in $\la{sl}(2)$ (which means that $X\in \la{sl}(2)_\bbR$ and $m\in SL(2,\bbR)/SO(2)$); they are mostly already present in \cite{Bergshoeff:2006jj}.

Our convention for the matrix generators of $\la{sl}(2)_\bbR$ is:
\be H_{\la{sl}(2)} = \left(\begin{array}{cc} 1 & 0\\ 0 & -1\end{array} \right), \qquad X_{\la{sl}(2)} = \left(\begin{array}{cc} 0 & 1\\ 0 & 0\end{array} \right), \qquad Y_{\la{sl}(2)} = \left(\begin{array}{cc} 0 & 0\\ 1 & 0\end{array} \right).\ee

We wish to discuss all possible $\la{sl}(2)$ monodromies. Thankfully, up
to conjugacy, this only means considering three solutions which we call
$N$-, $A$-, and $K$-branes  (these are named after the factors in the
Iwasawa decomposition \eqref{eq:V=nak}).  Every element in $\la{sl}(2)$
is either nilpotent or semisimple. The nilpotent monodromies will all be
conjugate to the $N$-brane given below; note that there are two nilpotent $SL(2,\mathbb{R})$-orbits in $\la{sl}(2,\mathbb{R})$, of which representatives are the $N$-brane monodromy with either the plus or minus sign. The semisimple elements can be
divided into two classes: the compact elements (with eigenvalues $\pm i
\lambda$) and the non-compact elements (with eigenvalues $\pm
\lambda$). The former type of monodromy will be conjugate to the
$K$-brane given below, while the latter will be conjugate to the
$A$-brane given below.

In terms of known objects, the $N$-brane corresponds to the D7-brane (or anti-D7) brane solution; the $K$-brane and $A$-brane can be thought of as $(p,q,r)$ 7-branes with determinant (of the monodromy $X$) resp. larger and smaller than 0 \cite{Bergshoeff:2006jj}. The status of such $(p,q,r)$ 7-branes in string theory is somewhat unclear, although it is clear that our $A$-brane solution below is pathological and cannot correspond to any physical single-center brane solution.

Note that all solutions considered here have nilpotent $P_z$; this means that they can be embedded as supersymmetric solutions in our maximal $\la{e}_8$ theory. This is no accident; one can see that the equations of motion (\ref{eq:MansatzEOM}) together with the constraint (\ref{eq:MansatzEOMconstraint}) actually imply that $P_z$ is nilpotent; this is a special feature of such $\la{sl}(2)$ solutions.

The solutions we give here are a particular (simple) solution of the given type; a more complete family of solutions is derived for each type in section \ref{sec:warmupsl2}.

\subsubsection{$N$-brane ($D7$-brane)}\label{sec:Nbrane}
The $N$-brane, or $D7$-brane, has a nilpotent monodromy. The full solution is given by:
\begin{align}
 X &= X_{\la{sl}(2)} = \left(\begin{array}{cc} 0 & \pm 1\\ 0 & 0\end{array} \right),\\
 m &= \exp\left( \log(\log r)\left(\begin{array}{cc} 1 &0\\ 0 & -1\end{array} \right)\right) = \left(\begin{array}{cc} \log r &0\\ 0 & (\log r)^{-1}\end{array} \right).
\end{align}
It is easily checked that the equations of motion (\ref{eq:MansatzEOM})-(\ref{eq:MansatzEOMconstraint}) are satisfied, and moreover that $P_z$ is nilpotent (see section \ref{sec:warmupsl2}). We note that this solution is well-defined globally (modulo logarithmic divergences at infinity, as we are familiar with from the $D7$-brane) except at the origin $r=0$, where the brane is sitting. The plus and minus sign in the monodromy $X$ correspond to the D7-brane and the anti-D7 brane, which are not conjugate to each other; their respective $P_z$ elements are also in different conjugacy classes (i.e.\ the well-known fact that the SUSY preserved by a D7-branes is different than the SUSY preserved by an anti-D7).

\subsubsection{$K$-brane}\label{sec:Sbrane}
The $K$-brane has a semi-simple, compact monodromy. The solution is:
\begin{align}
 X & = \left(\begin{array}{cc} 0 & +\lambda\\ -\lambda& 0\end{array} \right),\\
 m &= \exp\left(2\tanh^{-1}(r^{2|\lambda|}) \left(\begin{array}{cc} 1 & 0\\ 0 & -1\end{array} \right)\right) = \left(\begin{array}{cc} \frac{1+r^{2|\lambda|}}{1-r^{2|\lambda|}} & 0\\ 0 & \frac{1-r^{2|\lambda|}}{1+r^{2|\lambda|}}\end{array} \right) .
\end{align}
Again, the equations of motion are satisfied. $P_z$ is nilpotent (see section \ref{sec:warmupsl2}). This solution is also well-defined globally; an important difference with the $N$-brane is that this solution remains finite and well-defined at the origin $r=0$, where the brane sits. Note that two $K$-branes with $\lambda=x$ and $\lambda=-x$ do not have conjugate monodromies; also, their respective corresponding $P_z$ elements will lie in different conjugacy classes (just like the D7 and anti-D7 branes).

\subsubsection{$A$-brane}\label{sec:Hbrane}
The $A$-brane has a semi-simple, non-compact monodromy. The solution is:
\begin{align}
 X & = \left(\begin{array}{cc} \lambda & 0\\ 0 & -\lambda\end{array} \right),\\
 m &= \left(\begin{array}{c@{\quad}c} \sec v & \tan v\\ \tan v & \sec v\end{array} \right), \qquad v\equiv c_1+2\lambda\log r,
\end{align}
where $c_1$ is an arbitrary constant. The equations of motion are satisfied; moreover $P_z$ is nilpotent (see section \ref{sec:warmupsl2}). This solution (even modulo issues at infinity) is by no means well-defined globally; if we get close to the origin, the solution oscillates wildly which is certainly unphysical; we are forced to excise an interval around $r=0$ from the spacetime to be able to make any sense of the solution. Note that two $A$-branes with $\lambda=x$ and $\lambda=-x$ have conjugate monodromies and equal $P_z$ elements (in contrast to the $K$-brane case).

One might wonder if there exist other possible $\la{sl}(2)$ solutions with the same monodromy but better behavior around $r=0$ (like the $N$- and $K$-brane have). Unfortunately, one can see from the equations of motion (\ref{eq:MansatzEOM}) and the constraint (\ref{eq:MansatzEOMconstraint}) that there are no other possibilities with this particular monodromy.


\subsection{Nilpotent Charges and Brane Representatives}\label{sec:nilpotentcharges}
We can easily construct branes with any nilpotent charge $X$ by embedding the $\la{sl}(2)$ $N$-brane solution described above into $\la{e}_{8(8)}$. Take any $\la{sl}(2)$ embedding in $\la{e}_{8(8)}$ which respects the Iwasawa decomposition, i.e.\ where the image of $\{H_{\la{sl}(2)},X_{\la{sl}(2)},Y_{\la{sl}(2)}\}$ is a (real) Cayley triple (see also section \ref{sec:lieconceptstriples}). The fact that the embedding respects the Iwasawa decomposition can also be stated as having the matrix-transpose operation $T$ in $\la{sl}(2)$ map to the $-\theta$ operation in $\la{e}_{8(8)}$. This then also assures us that $M\in G/H$. Embedding an $N$-brane in this way, the nilpotent monodromy $X$ of the resulting brane is then given by the image of $X_{\la{sl}(2)}$ under the embedding. This means that the resulting $P_z\bowtie X$ under the embedding\footnote{The relation $\bowtie$ is defined in (\ref{eq:defKSbij}).}; the (nilpotent) orbit of $X$ thus determines the supersymmetry of the solution.

A nilpotent element $X$ which is part of a Cayley triple $\{H,X,Y\}$ is explicitly tabulated by Djokovic \cite{Djokovic:reps} for every real nilpotent orbit in $\la{e}_{8(8)}$. The construction directly above shows us how to explicitly construct a brane solution with this monodromy $X$. To obtain any other nilpotent monodromy $X'$ in the orbit of $X$, we need only take the conjugate solution as described above. In this way we can, in principle, always find a solution given an arbitrary nilpotent monodromy $X'$.

\subsubsection{Brane Representatives}\label{sec:branereps}
While the paper \cite{Djokovic:reps} contains a list of explicit representatives for all of the nilpotent orbits under discussion, this does not give any intuition as to what types of brane intersections lie in which orbits. Thus, we would like to construct an explicit brane representative of each orbit that we can interpret easily in terms of well-known objects such as D-branes etc. We will search for a ``simple'' brane representative for each of the orbits in Table \ref{tab:orbits}.

To construct such brane representatives that we can interpret as regular (D-)brane intersections, we must first identify T-duality multiplets within the $\la{e}_8$ algebra. We will identify the 240 root vectors $E_{\alpha}$ (see (\ref{eq:e8roots}) in appendix \ref{app:twoconstructions} for our conventions of root vectors in $\la{e}_8$) as the 240 ``fundamental'' 1/2 BPS branes \cite{Bergshoeff:2013sxa,Bergshoeff:2012ex,Bergshoeff:2011se}. We will postulate that the eigenvalue of the root vector under the commutator with $n=2H_8-2$ is the power of the string tension $(g_s)^n$ that the mass of the brane is proportional to; it is easy to see that the centralizer of $n$ (not counting $H_8$ itself) is the expected T-duality algebra for a compactified $T^7$, namely $\la{so}(7,7)$. The value of $\alpha_8$ is $0, \pm 1/2, \pm 1$ so that $n\in\{0, -1, -2,-3,-4\}$. The value of $n$ thus determines which T-duality multiplet the brane sits in: $n=-1$ is the D-brane multiplet, $n=0$ is the F1/P multiplet, $n=-2$ is the KKM/NS5/$5^2_2$ multiplet, and $n=-3,n=-4$ are heavy, exotic brane multiplets (which we will not need in constructing representatives).

To further identify every brane within a particular T-duality multiplet,
we can postulate the following formula relating the mass of a brane
(represented by a particular root vector $\alpha$) to the string
coupling and the radii $R_i$ of the internal torus direction
$i$:\footnote{See \cite{Obers:1998fb} for similar mass formulas
in terms of compactification radii and roots.}
\be \label{eq:repmassformula} M \sim g_s^{2\alpha_8-2} \prod_{i=1}^7 R_i^{\alpha_i-\alpha_8+1}.\ee
This formula allows us to identify each of the 240 roots as a particular brane in type IIB compactified on $T^7$. We can take $\alpha_i\rightarrow -\alpha_i$ (for all $i\in\{1,\cdots,7\}$) in the formula to identify each root as a particular brane in type IIA on $T^7$, related to the IIB interpretation by 7 T-dualities along all the directions of the internal $T^7$. One can check that the mass formula (\ref{eq:repmassformula}) reproduces all of the correct number of each brane in every T-duality multiplet.

As an example of how we identify the branes within a T-duality multiplet, consider the D-brane multiplet ($n=-1$ so $\alpha_8=+1/2$) which is a \textbf{64} multiplet, containing (in IIB) 1 D7-brane, 21 D5-branes, 35 D3-branes, and 7 D1-branes. We note that the mass formula (\ref{eq:repmassformula}) shows us that the sign of the first seven positions of the root vector indicate whether a brane wraps this direction in the compact 7D space or not. For example:
\be D3(2,4,5) \leftrightarrow \frac12(-1,+1,-1,+1,+1,-1,-1|+1).\ee

To find the brane representatives listed in the fifth column of table \ref{tab:orbits}, we simply used trial and error (coupled with prior knowledge of supersymmetric brane intersections). We also checked explicitly that the 10D projectors of the brane representatives listed preserve the amount of SUSY mentioned. To check that a given brane representative does, in fact, lie in a particular orbit, we used the characterization of the orbit by means of its dimension $\dim$ and invariant $\inv$ as explained in appendix \ref{sec:appbranerep}\@.

To find an explicit solution for any given brane representative listed
there, we need only take the embedding of the $\la{sl}(2)$ generated by
$\{H,X,Y\}$ (with $X$ the relevant brane representative) into
$\la{e}_{8(8)}$ as described above. The $N$-brane then gives us the
standard 7-brane solution, reduced from 10D\@.

It is interesting to consider the representative of orbit $\mathbf{14}$
in Table \ref{tab:orbits}, since this is in a sense the ``maximal'' SUSY
orbit and representatives of all other orbits should be obtainable from
this representative by deleting branes (because the closure of orbit
$\mathbf{14}$ contains all other SUSY-preserving orbits as is evident
from fig.\ \ref{fig:hassediagram} in appendix \ref{sec:theorems})\@. We
wish to point out two interesting M-theory frames for this
representative by duality chains. The first is a collection of M5's and
one $P$:
\begin{align}
\left(\begin{array}{c}
 D4 (  3456 )\\
 D4 (12  56 )\\
 D4 (1234   )\\
 D0 \\
 KKM(34567,1)\\
 KKM(12567,3)\\
 KKM(12347,5)\\
 F1 (      7)\\
\end{array}\right)
 \xrightarrow{T_{1357}}
\left(\begin{array}{c}
 D4 (1  4 67)\\
 D4 ( 23  67)\\
 D4 ( 2 45 7)\\
 D4 (1 3 5 7)\\
 NS5(  34567)\\
 NS5(12  567)\\
 NS5(1234  7)\\
 P  (      7)\\
\end{array}\right)
 \xrightarrow{M_A} 
\left(\begin{array}{c}
 M5(1  4 67A)\\
 M5( 23  67A)\\
 M5( 2 45 7A)\\
 M5(1 3 5 7A)\\
 M5(  34567 )\\
 M5(12  567 )\\
 M5(1234  7 )\\
 P (      7 )\\
\end{array}\right)
 =
\left(\begin{array}{c}
 M5(1~~4~67A)\\
 M5(~23~~67A)\\
 M5(~2~45~7A)\\
 M5(1~3~5~7A)\\
 M5(~~3456~A)\\
 M5(12~~56~A)\\
 M5(1234~~~A)\\
 P(A)\\
\end{array}\right)
\label{eq:M5^7-P}
\end{align}
Note that the starting configuration is different from the
representative listed in Table \ref{tab:orbits} by a T-duality along the
$7$ direction and a relabelling of directions $1\leftrightarrow 2$ and
$5\leftrightarrow 6$. 
Also, $T_{1357}$  means $T$-duality along 1357 directions while $M_A$ means
M-theory lift with $x_A$ being the M-circle direction.
The brane configuration in the last frame looks
like a 1/16-BPS generalization of the Maldacena-Strominger-Witten system
\cite{Maldacena:1997de}, which is a 1/8-BPS system of three intersecting M5-branes with
momentum along the intersection.  In 
\eqref{eq:M5^7-P},
any pair of M5-brane stacks shares 3
directions including $A$ which is common to all 7 M5-branes stacks. 
The 11D metric  for this configuration is, by the harmonic function rule,
\begin{align}
ds_{11}^2&=
 (1\cdots 7)^{-1/3} [-dt^2+dx_A^2+H_0(dt+dx_A)^2]
 + (1\cdots 7)^{2/3} (dx_8^2+ dx_9^2)
\notag\\
 &\qquad
 + (1467)^{-1/3} (235)^{2/3} dx_1^2
 + (2367)^{-1/3} (145)^{2/3} dx_2^2
 + (2457)^{-1/3} (136)^{2/3} dx_3^2
\notag\\
 &\qquad
 + (1357)^{-1/3} (246)^{2/3} dx_4^2
 + (3456)^{-1/3} (127)^{2/3} dx_5^2
\notag\\
 &\qquad
 + (1256)^{-1/3} (347)^{2/3} dx_6^2
 + (1234)^{-1/3} (567)^{2/3} dx_7^2,
\end{align}
where e.g.\ ``$1\cdots 7$'' is a shorthand notation for $H_1\cdots
H_7$. Also, $H_I=Q_I \log(1/r)$, $r=\sqrt{x_8^2+x_9^2}$, where
$I=0,\dots,7$.  Here we assume that $Q_I>0$. The 3-form potential is
\begin{align}
 A_3&=
 -K_1 dx^{235}-K_2 dx^{145} -K_3 dx^{136}\notag\\
 &\qquad\qquad
 +K_4 dx^{246}+K_5 dx^{127}+K_6 dx^{347}+K_7 dx^{567}
\end{align}
where $dx^{235}=dx^2\wedge dx^3 \wedge dx^5$ etc.  Also, $K_I=Q_I
\theta$ where $\theta=\arctan(x_9/x_8)$.  The signs in $A_3$ are chosen so that
this represents a supersymmetric configuration.  Unlike the original MSW
system, this 1/16-BPS system has vanishing horizon area.

A second interesting M-theory frame involves M2-branes and KK-monopoles;
the duality chain is (starting from the same IIA starting frame as
above):
\begin{align}
\xrightarrow{T_{123456}}&
\left(\begin{array}{c}
 D2 (12     )\\
 D2 (  34   )\\
 D2 (    56 )\\
 D6 (123456)\\
 KKM(34567,2)\\
 KKM(12567,4)\\
 KKM(12347,6)\\
 F1 (      7)\\
\end{array}\right)
 \xrightarrow{M_A}
\left(\begin{array}{c}
 M2 (12     )\\
 M2 (  34   )\\
 M2 (    56 )\\
 KKM(123456,A)\\
 KKM(34567A,2)\\
 KKM(12567A,4)\\
 KKM(12347A,6)\\
 M2 (      7A)\\
\end{array}\right)
=
\left(\begin{array}{c}
 M2 (12~~~~~~)\\
 M2 (~~34~~~~)\\
 M2 (~~~~56~~)\\
 M2 (~~~~~~7A)\\
 KKM(~~34567A,2)\\
 KKM(12~~567A,4)\\
 KKM(1234~~7A,6)\\
 KKM(123456~~,A)\\
\end{array}\right)
\end{align}
The brane configuration is appealingly symmetric in this last frame.


\subsection{Semi-simple Charges}\label{sec:semisimplecharges}
Every semi-simple element in $\la{e}_{8(8)}$ is conjugate to an element of the form $X'=\sum_i c_i H_{\alpha_i} + \sum_j  c_j K_{\alpha_j}$, where $i$ and $j$ take values in two disjoint subsets of $\{1,\cdots,8\}$, for a specific collection of 8 roots $\{\alpha_i\}$, where $K_{\alpha_j}=E_{\alpha_j}-E_{-\alpha_j}$ and $H_{\alpha_i}=\alpha_i\cdot \vec{H}$ is the neutral element in the triple $\{H_{\alpha_i},E_{\alpha_i},E_{-\alpha_i}\}$, so that $K_{\alpha_i}\in \la{so}(16)$ while $H_{\alpha_j}\in \la{e}_{8(8)}\ominus\la{so}(16)$ (lemma \ref{lemma:sugiuracartanconjugacy}; the explicit roots $\alpha_i$ are also given).

To find an explicit solution for elements of the form $X'$ is
straightforward: for each $H_{\alpha_i}$ appearing in the sum, we
introduce an $A$-brane that embeds its $\la{sl}(2)$ elements
$\{H_{\la{sl}(2)},X_{\la{sl}(2)},\ Y_{\la{sl}(2)}\}$ into
$\{H_{\alpha_i},E_{\alpha_i}, E_{-\alpha_i}\}$; for each $K_{\alpha_j}$
we use a $K$-brane that embeds its $\la{sl}(2)$ elements
$\{H_{\la{sl}(2)},X_{\la{sl}(2)},Y_{\la{sl}(2)}\}$ into
$\{H_{\alpha_j},E_{\alpha_j}, E_{-\alpha_j}\}$. For each $K$- and $A$-brane, the corresponding value of $\lambda$ in the $\la{sl}(2)$ monodromy is determined by the eigenvalues of $X'$. The roots $\alpha_i$
are chosen such that $\alpha_i\pm \alpha_j$ is not a root, so all of the
$\la{sl}(2)$-embeddings of the $K$/$A$-branes will mutually
commute. Thus, for the final solution, we can just paste the different
$K$- and $A$-branes together as follows:
\be
X_{mon} = X', \qquad m = m_{\alpha_1} \times m_{\alpha_2}\times \cdots.
\ee
The fact that all of the $\la{sl}(2)$-embeddings mutually commute assures us that the corresponding equations of motion for this configuration will factorize into the equations of motion for the different $K$/$A$-branes and will thus be automatically satisfied.

We also notice that the solution obtained in this way will have $P_z\bowtie\sum_j x^j E_{\alpha_j}$ for some coefficients $x^j$ (which are spacetime-dependent in general). What these coefficients are will determine which nilpotent orbit $P_z$ lies in and thus will determine the supersymmetry of the solution; i.e.\ the relative orientation of the SUSY preserved by the individual branes that we have pasted together will determine if the full solution preserves SUSY\@.


\subsection{Other Charges --- $\la{sl}(2)^n$}\label{sec:othercharges}
In general, an element in a Lie algebra has a unique decomposition into nilpotent and semisimple parts:
\be X = X_S + X_N, \qquad [X_N,X_S]=0.\ee
In $\la{sl}(2)$, no two (distinct, non-zero) elements satisfy $[X,Y]=0$, so there are only strictly nilpotent or strictly semisimple elements in $\la{sl}(2)$, as we have mentioned before.

One might be tempted to take the brane solutions for $X_N$ and $X_S$ as detailed above and try  to ``paste'' them together somehow. However, generically this will not be possible as e.g.\ we would have $[m_{X_N},m_{X_S}]\neq 0$ even though $[X_N,X_S]=0$. Only for very specific $X=X_S+X_N$ can we construct solutions by ``pasting''; for example if we have that the $\la{sl}(2)_N$ algebra commutes with the $(\la{sl}(2)^n)_S$ algebra, where $\la{sl}(2)_N$ is the embedding of $\la{sl}(2)$ that the nilpotent $N$-brane solution generates, and similarly $(\la{sl}(2)^n)_S$ is the embedding of the $\la{sl}(2)^n$ algebra that the semisimple brane solution generates. This is a very restricted class of solutions.

We can give one simple example, in $\la{sl}(2)\oplus \la{sl}(2)\subset \la{sl}(4)$. We can take e.g.:
\be X = \left(\begin{array}{cc} X_N & 0\\ 0 & X_S\end{array}\right) =  \left(\begin{array}{cccc} 0 & 1 &0&0\\ 0 & 0&0&0\\0 & 0& 0&1\\ 0&0&-1&0 \end{array} \right), \qquad m = \left(\begin{array}{cc} m_{N} & 0\\ 0 & m_S\end{array}\right),\ee
so that the solution is an obvious ``pasting'' together of one $N$- and $K$-brane. The supersymmetry of this solution depends on how we embed this $\la{sl}(2)\oplus \la{sl}(2)$ algebra into $\la{e}_{8(8)}$.

\section{More $\la{sl}(n)$ Brane Solutions}\label{sec:morebranes}
In this section we will continue our search for single-centered brane solutions with the ansatz (\ref{eq:Mansatz}), searching outside of $\la{sl}(2)^n$ subalgebras of the previous section by considering a number of $\la{sl}(n)$ monodromies. The explicit solutions of section \ref{sec:explicitex} will all live in $\la{sl}(3)$, but the general analysis of section \ref{sec:potform} will be valid for solutions living in any $\la{sl}(n)$ subalgebra of $\la{e}_8$. Note that when we make statements about powers of algebra elements, such as $P_z^2=0$ in section \ref{sec:condsusy}, unless explicitly stated otherwise we are considering these statements in the fundamental representation of the $\la{sl}(n)$ subalgebra, which is the representation of traceless $n\times n$ matrices. We will also use the terms ``group/algebra element'' and ``matrix'' interchangeably.

 To facilitate the detailed analysis of these $\la{sl}(n)$ solutions, we will first introduce a few new concepts, expanding the analysis of the previous section of the ansatz (\ref{eq:Mansatz}).

All of the explicit brane solutions considered here, as well as those of the previous section, are all summarized in a more or less self-contained fashion in section \ref{sec:potformtaxonomy}. The reader only interested in the punchline of this section and the last can skip directly to there.

\subsection{Potential Formalism}\label{sec:potform}
Using the ansatz (\ref{eq:Mansatz}), the scalar part of the Lagrangian (\ref{S_scalar_ito_M}) can be rewritten as:
\begin{align}
 L_{scal} &= -\frac14 \tr\left(\partial_u M^{-1} \partial_u M + \partial_{\theta} M^{-1}\partial_{\theta}M\right)\\
& = \tr\left(\frac14 m^{-1}\dot{m} m^{-1} \dot{m} + \frac12 X^T m^{-1} X m + \frac12 X^2\right)\\
 &= K - W,
\end{align}
where we have introduced a new radial coordinate $r=e^{-u}$, and $\dot{\,}=\partial_u$. We can interpret this Lagrangian as describing a
particle evolving in time $u$ with coordinates parametrized by $m(u)$, with kinetic energy $K$ and potential energy $W$ given by:
\begin{align}
 K &= \frac14 \tr\left(m^{-1}\dot{m} m^{-1} \dot{m}\right),\\
W &= -\frac12\left( X^T m^{-1} X m + X^2\right) = -\frac14 \tr\left( \left[ m^{-1}(Xm+mX^T)\right]^2\right) \leq 0,
\end{align}
where we have pointed out that the potential energy is always
negative. We will call the scalar fields that parametrize $m$
generically by $\phi$ (the ``coordinates'' of the particle), so that
when we talk about finding a critical point of $W$, this means finding a
solution $\phi_0(r)$ to the equations $\partial_{\phi} W=0$.

\subsubsection{Constraints and Conserved Charges}
The Einstein equation (\ref{eq:MansatzEOMconstraint}) gives us two constraints (since it is a complex equation). The first is an ``energy constraint'':
\be \label{eq:potformHconstraint} H \equiv K + W = 0,\ee
where we have defined the Hamiltonian $H$ for this system. The second constraint is a ``momentum constraint'':
\be \label{eq:potformcX} c_X \equiv \tr(X \dot{m} m^{-1}) = 0,\ee
where we have defined $c_X$.

We can find additional conserved quantities. Define, for an algebra element $Y$, the quantity:
\be \label{eq:potformcY} c_Y \equiv \tr(Y \dot{m}m^{-1}),\ee
then:
\be \dot{c}_Y = \tr([X,Y] m X^T m^{-1}),\ee
so that $c_Y$ is a conserved quantity if $[X,Y]=0$. This corresponds to the symmetry of the potential under $\delta m = Y m + m Y^T$. Using $c_Y$, we should be able to construct as many linearly independent conserved charges as the dimension of the centralizer of $X$, which (in $\la{e}_8$) is at least 8.

We can show that the orbit of $P_z$ remains the same everywhere for a given solution with our ansatz (\ref{eq:Mansatz}). We can define the quantity:
\be \hat{P}_z = \omega^{-1} e^{-\varphi X} r \partial_z M M^{-1} e^{\varphi X} \omega,\ee
where we have chosen a matrix $\omega$ to be real and symmetric and satisfy\footnote{This is always possible because $m$ is of the form $m=A^{\dagger}A$ so that $m$ is positive semi-definite.} $m = \omega^2$. Clearly, $P_z$ is conjugate to $\hat{P}_z$ (see (\ref{eq:MansatzP})) so they are in the same orbit. Now, using the equations of motion, we can find the following expression:
\begin{align}
r\partial_r \hat{P}_z &= [ \hat{P}_z, -\omega^{-1}\dot{\omega} - i (\omega X^T \omega^{-1})]\nn
&= [ \hat{P}_z, -\omega^{-1}\dot{\omega} - i (\omega X^T \omega^{-1}) - \hat{P}_z/2]\nn
&= [\hat{P}_z,\frac12\left(-\omega^{-1}\dot{\omega}+\dot{\omega}\omega^{-1}-i\omega X^T \omega^{-1}+i\omega^{-1}X\omega\right)] \equiv [\hat{P}_z, \hat{k}],
\end{align}
where $\hat{k}$ clearly satisfies $\hat{k}^T = -\hat{k}$ and thus is an
element of the compact Lie subalgebra $\la{k}$. We can conclude that the
orbit of $P_z$ --- which is the same as the orbit of $\hat{P}_z$ --- is
fixed once we have fixed a solution. It is not possible to ``move in and
out of orbits'' within one solution (excluding possible singular points in the spacetime where e.g.\ $P_z$ blows up). Since the orbit of $P_z$ determines
the amount of SUSY preserved, this means that an analysis of
supersymmetry for a solution with the ansatz (\ref{eq:Mansatz}) can be
performed only at one spacetime point (or rather, circle of constant $r$) and does not need to involve
knowledge of the function $P_z$ over the entire spacetime. (This is also as expected from the discussion at the end of section \ref{sec:maxsugra}.) We note also
that, since the orbit of $P_z$ is a constant in spacetime, this implies
that the quantities $\tr(P_z^k)$ are conserved charges for every $k$ (these quantities will all vanish if $P_z$ is nilpotent, because
then all eigenvalues of $P_z$ are zero). Note that this derivation is, in fact, independent of the
representation $R$ that we use when calculating $P_z^k$, so that indeed
all of the charges:
\be q_{R,k} = \tr_R P_z^k,\ee
should be conserved, for any integer $k$ and any representation $R$.


\subsubsection{A Class of Solutions with $W=0$}
Let us investigate the behaviour of solutions for $r\rightarrow 0$, i.e.\ $u\rightarrow \infty$. One natural class of solutions would have $K\rightarrow 0$ in this limit, which implies via the energy constraint (\ref{eq:potformHconstraint}) that the configuration should asymptote to a point where the potential vanishes, $W=0$. Moreover, since $K$ asymptotes to a constant, the point at which $W=0$ must also be a critical point of $W$, i.e.\ $\partial_{\phi}W=0$. 

Let us investigate first what $W=0$ implies. The potential can be rewritten as:
\be W 
= \frac14 \tr\left(  \left[ m^{-1/2} X m^{1/2} + m^{1/2} X^T m^{-1/2}\right]^2 \right) = -\frac14 \tr( Y^2) \leq 0.\ee
Now the matrix $Y$ as defined above is symmetric, so that $\tr(Y^2) = 0$ if and only if $Y=0$. This means $W=0$ if and only if:
\be \label{eq:potformW0cond} m^{-1/2} X m^{1/2} + m^{1/2} X^T m^{-1/2} = 0.\ee

Now we add the assumption that $m$ is finite. Then we can rewrite (\ref{eq:potformW0cond}) as:
\be \label{eq:potformW0condmfinite} X m + m X^T = 0.\ee
We can multiply this equation by $g$ from the left and by $g^T$ from the right, choosing $g$ such that $g m g^T=1$ (this is possible as $m$ is symmetric). Then we have:
\be \left(g X g^{-1}\right) + \left(g X g^{-1}\right)^T = 0,\ee
so that $X$ is $G$-conjugate to an antisymmetric element, $g X g^{-1} = X'\in K$. We note that since all elements of $K$ are semi-simple, $X$ is semi-simple as well. So we conclude that $X$ is conjugate to some semi-simple compact element in $K$. We also note that (\ref{eq:potformW0condmfinite}) is satisfied for $m=g^{-1} (g^{-1})^T$; so there certainly exists a solution where $W=0$ at the origin. A (conjugation of a) combination of $K$-brane solutions (as explained in section \ref{sec:semisimplecharges}) is an example of such a configuration where $W=0$ at the origin and $X$ is (conjugate to) a compact semi-simple element.


We can conclude that, if $W\rightarrow 0$ at a point where $m$ remains finite, then $X$ is (conjugate to) a compact semi-simple element. It is certainly possible to have $W\rightarrow 0$ for other monodromies $X$, but then we must have that components of $m$ diverge. This is the case in e.g.\ the $N$-brane solution, where $X$ is nilpotent, $W\rightarrow 0$ for $r\rightarrow 0$, but $m$ diverges logarithmically in this limit.

\subsubsection{Necessary Condition for Supersymmetry}\label{sec:condsusy}
Up until now, we have not considered the demand of supersymmetry. Here we will investigate what demanding supersymmetry restricts of $\la{sl}(n)$ solutions with the ansatz (\ref{eq:Mansatz}).

We know that $P_z$ must be nilpotent in order for there to be any
supersymmetry (see result \ref{prop:Pnilpotent} in section
\ref{sec:statementresults}), but nilpotency alone is not enough to
ensure supersymmetry; there are only particular nilpotent orbits which
preserve supersymmetry (see result \ref{prop:Porbits} in section
\ref{sec:statementresults} and table \ref{tab:orbits}). These orbits, as
noted in section \ref{sec:statementresults}, all have a structure of the
kind $\la{sl}(2)^n$. These orbits all have representatives of the form\footnote{Recall the definition of $\bowtie$ in (\ref{eq:defKSbij}).}
$P_z \bowtie E_{\alpha_1} + E_{\alpha_2} + \cdots + E_{\alpha_n}$ ($n\leq 8$)
where all the roots involved are such that $\alpha_i\pm \alpha_j$ is not
a root. This means that we can embed $P_z$ in an $\la{sl}(2)^n$
subalgebra of $\la{e}_8$, but in particular also implies that $P_z^2 =
0$ in the fundamental representation of an $\la{sl}(k)$ subalgebra of
$\la{e}_8$ that contains all $n$ of these $\la{sl}(2)$ subalgebras. Let
us denote this representation by $f$ and $P_z$ in this representation by
$P_z^f$ (note that $f$ is different from the fundamental representation
of $\la{e}_8$).  Since $(P_z^f)^2=0$ is a statement that is invariant
under conjugation (in this $\la{sl}(k)$), so we can conclude that
$(P_z^f)^2 = 0$ is a requisite for supersymmetry. This is a much
stronger demand on $P_z$ than nilpotency, because nilpotency just
demands that there be a finite $m$ such that $(P_z^f)^m = 0$.  Note also
that $(P_z^f)^2=0$ is a necessary but not sufficient
condition. Different embeddings of $\la{sl}(k)$ into $\la{e}_8$ could
give different supersymmetry preserved. 
For example, we saw in section \ref{sec:nilpotentcharges} that the
same $\la{sl}(2)$ solution (which
satisfies $(P_z^f)^2=0$) can be embedded into $\la{e}_{8(8)}$ in different ways to
give any possible nilpotent element $P_z$ with varied amount (including
zero) of supersymmetry; so, it is
important for supersymmetry which nilpotent orbit the embedding of $P_z$
falls into.

It is easily shown that $P_z$ is conjugate to:
\be P_z \sim \omega^{-1}\dot{\omega}+\dot{\omega}\omega^{-1} + i(\omega^{-1}X\omega + \omega X^T\omega^{-1}),\ee
where we have again selected $\omega$ such that $m=\omega^2$. If we set:
\be \mathcal{A} \equiv \omega^{-1}\dot{\omega}+\dot{\omega}\omega^{-1}, \qquad \mathcal{B} \equiv Y + Y^T, \qquad Y = \omega^{-1}X\omega,\ee
then we see that the demand that\footnote{We will leave out the superscript $f$ in the rest of this section, but all matrices should always be considered in the fundamental representation of $\la{sl}(k)$.} $P_z^2 = 0$ can be translated into the (equivalent by conjugation) condition that $(\mathcal{A}-i\mathcal{B})^2=0$, or:
\be \mathcal{A}^2 = \mathcal{B}^2, \qquad \{\mathcal{A},\mathcal{B}\} = 0,\ee
where $\{\cdot,\cdot\}$ is the anticommutator. After a rescaling, this means $\mathcal{A},\mathcal{B}$ are a representation of the Euclidean two-dimensional Clifford algebra. Since both matrices are also symmetric, that means we can conjugate $P_z$ so that $\mathcal{A}$ and $\mathcal{B}$ take the following form:
\be \mathcal{A} = \diag(b_1, b_2, \cdots, b_n)\otimes \sigma_3 + \mathbf{0}_m, \qquad \mathcal{B}=\diag(b_1, b_2, \cdots, b_n)\otimes \sigma_1+ \mathbf{0}_m, \ee
where $b_i\neq 0$, and $\mathbf{0}_m$ denotes a possibility of having an additional $m$ eigenvalues of 0. The most important thing to realize is that the non-zero eigenvalues of $\mathcal{A},\mathcal{B}$ thus must come in pairs: if $\lambda$ is an eigenvalue of either then so is $-\lambda$, and with the same multiplicity. For $Y$, we have:
\be Y = \frac12 \diag(b_1, b_2, \cdots, b_n)\otimes \sigma_1 + \mathbf{0}_m + A, \qquad A^T = -A.\ee

By doing a global $G$ transformation $m\rightarrow g m g^T$, we can assume that $m(u_0)=1$ at any given point $u=u_0$ (note that this also induces $X\rightarrow g X g^{-1})$. If $m(u_0)=1$, then the expression at $u=u_0$ become:
\be \label{eq:potformSUSYcond} \mathcal{A} = \dot{m}(u_0)=\diag(b_1, b_2, \cdots, b_n)\otimes \sigma_3+ \mathbf{0}_m, \qquad \mathcal{B} = X + X^T=\diag(b_1, b_2, \cdots, b_n)\otimes \sigma_1+ \mathbf{0}_m.\ee
Given this data at a point $u=u_0$, the equation of motion allows us in principle to find the entire solution for $m(u)$.

\subsection{Various Explicit Examples}\label{sec:explicitex}
We will now consider a variety of examples in an $\la{sl}(3)$ subalgebra
of $\la{e}_8$. First we will ``warm up'' by rederiving the $N$-, $K$-,
$A$-brane solutions for $\la{sl}(2)$ using the potential and the
energy/momentum constraints.

Our convention for the matrix generators of $\la{sl}(3)$ in the fundamental representation is the following:
\begin{align}
\begin{split}
  H_1 &= \left(\begin{array}{ccc} 1 & 0 & 0\\ 0&0&0\\ 0&0&-1\end{array}\right), \qquad H_2 = \frac{1}{\sqrt{3}}\left(\begin{array}{ccc} 1 & 0 & 0\\ 0&-2&0\\ 0&0&1\end{array}\right),\\
 X_{12} &= \left(\begin{array}{ccc} 0 & 1 & 0\\ 0&0&0\\ 0&0&0\end{array}\right), \qquad X_{13}=\left(\begin{array}{ccc} 0 & 0 &1\\ 0&0&0\\ 0&0&0\end{array}\right), \qquad X_{23} = \left(\begin{array}{ccc} 0 & 0 & 0\\ 0&0&1\\ 0&0&0\end{array}\right),\\
 Y_{12} &= X_{12}^T, \qquad Y_{13} = X_{13}^T, \qquad Y_{23} = X_{23}^T.
\end{split}
\end{align}
The scalar matrix will be parametrized by:
\be m(u) = v(u) v(u)^T, \qquad v(u) = e^{a_1 X_{23}}e^{a_2 X_{13}}e^{a_3 X_{12}}e^{\frac12(\phi_1H_1+\phi_2H_2)},\ee
so that the functions of $u$ are $a_1,a_2,a_3,\phi_1,\phi_2$.

\subsubsection{Warm Up: $\la{sl}(2)$}\label{sec:warmupsl2}
The algebra $\la{sl}(2)$ is small enough to allow a more or less comprehensive study of possible monodromies. Of course, we have studied a solution for every possibility of monodromy---the $N$-brane for nilpotent $X$, the $K$-brane for semisimple, compact $X$, and the $A$-brane for semisimple, noncompact $X$. We will see here how these three solutions have a natural place within this potential formalism. We will only consider the $K$- and $A$-branes with $\lambda=1$ here for simplicity.

We take as ansatz:
\begin{align} m(u) &= e^{a(u) X_{\la{sl}(2)}} e^{\phi(u) H_{\la{sl}(2)}} e^{a(u) Y_{\la{sl}(2)}} = \left(\begin{array}{cc} e^{\phi} + a^2 e^{-\phi} & a e^{-\phi}\\ a e^{-\phi} & e^{-\phi}\end{array}\right),\\
X &= e X_{\la{sl}(2)} + f Y_{\la{sl}(2)} + h H_{\la{sl}(2)} = \left(\begin{array}{cc} h & e\\ f & -h\end{array}\right).
\end{align}
In giving $P_z^f$, we will use (see also (\ref{eq:defPQ})):
\begin{align}
 v(u) &=  e^{a(u) X_{\la{sl}(2)}} e^{\frac12\phi(u) H_{\la{sl}(2)}}, & (m(u)&=v(u)v(u)^T),\\
 V(u,\varphi) &= e^{\varphi X} v(u), & P_z &= \frac{1-\theta}{2} V^{-1}\partial_z V.
\end{align}
Our kinetic and potential energies become:
\begin{align}
 K &= \frac12\left( \dot{a}^2 e^{-2\phi} + \dot{\phi}^2\right),\\
 W &= -e\, f - \frac12 e^{2\phi}f^2 -2h^2+2f\, h\, a-f^2\, a^2 - \frac12 e^{-2\phi}(e+2h\, a - f\, a^2)^2.
\end{align}
The momentum constraint is:
\be c_X = \dot{a}(-a^2 f + e + 2a h) e^{-2\phi} - 2 a f \dot{\phi} + \dot{a} f + 2h\dot{\phi} = 0.\ee
For this $\la{sl}(2)$ system, since we have only two variables $a,\phi$, we see that the two constraints (energy and momentum) in principle completely determine the solution in function of the constants $e,f,h$. Let us study the three cases which lead to the $N,A,K$-branes:
\begin{enumerate}
 \item \emph{$N$-brane, $(e,f,h)=(1,0,0)$:}\\
 The potential energy and momentum become:
\begin{align}
 W& = -\frac12 e^{-2\phi},\\
c_X &=  \dot{a} e^{-2\phi}.
\end{align}
The energy constraint $H=K+W=0$ and the momentum constraint $c_X=0$ are solved by:
\begin{align}
 a(u) &= a_0,\\
 \phi(u) &= \log(\pm (u-u_0)).
\end{align}
We note that, for $u\rightarrow\infty$, we have $W\rightarrow 0$ as $\phi$ (and thus $m$) diverges. The $N$-brane of section \ref{sec:Nbrane} (with the plus sign in the monodromy) is this solution with the specific choice $a_0=0, u_0=0$ and the minus sign in the expression for $u$. $P_z^f$, for the plus sign in $\phi(u)$ (which is the only physically sensible solution) is given by:
\be P_z^f=  \frac{e^u}{2(u-u_0)} \left( \begin{array}{cc} 1 & i\\ i & -1 \end{array}\right),\ee
which clearly satisfies $(P_z^f)^2=0$. For $e=-1$, we would get the same
       expression for $P_z$ but with different signs for the $(12)$ and
       $(21)$ components, which means $P_z$ for $e=-1$ is in a different
       orbit from $P_z$ for $e=+1$.

\item \emph{$K$-brane, $(e,f,h)=(1,-1,0)$:}\\
The potential energy and momentum are given by:
\begin{align}
 W &= 1- a^2 - \frac12 e^{2\phi} - \frac12 e^{-2\phi}(1+a^2)^2,\\
c_X &= \dot{a}(a^2+1)e^{-2\phi} +2a\dot{\phi} - \dot{a}.
\end{align}
The general solution to the constraint equations is given by:
\begin{align}
 \phi &= \log \frac{(c^2+4)e^{4(u-u_0)}-1}{4e^{4(u-u_0)} + (c e^{2(u-u_0)}-1)^2},\\
 a &= \pm \frac{4e^{2(u-u_0)}}{4e^{4(u-u_0)} + (c e^{2(u-u_0)}-1)^2}.
\end{align}
For $u\rightarrow\infty$, we have $W\rightarrow 0$ and $m\rightarrow 1$. The $K$-brane of section \ref{sec:Sbrane} is this solution with $c=0,u_0=(1/2) \log 2$ and choosing the minus sign for $a$. We have:
\be P_z^f = 2e^{u} \left( \frac{\mp 2i+c}{e^{-2(u-u_0)}-(\mp2i+c)} - \frac{4+c^2}{e^{-4(u-u_0)}-(4+c^2)} \right)   \left( \begin{array}{cc} 1 &  - i\\ -i & -1 \end{array}\right), \ee
which satisfies $(P_z^f)^2 = 0$. For $e=-1,f=+1$, we would get the same expression for $P_z$ but with different signs for the $(12)$ and $(21)$ components (just like for the $N$-brane), which signifies that the two $P_z$'s lie in different nilpotent orbits.

\item \emph{$A$-brane, $(e,f,h)=(0,0,1)$:}\\
The potential and momentum are:
\begin{align}
 W &= -2 -2 a^2 e^{-2\phi}<0,\\
 c_X &= 2\dot{a}a e^{-2\phi} + 2\dot{\phi}.
\end{align}
The general solutions are given by:
\begin{align}
 \phi &= \log[ 2c \cos(2(u-u_0))],\\
 a &= \pm 2c \sin (2(u-u_0)).
\end{align}
Note that for $u\rightarrow\infty$, the solution remains oscillating. It is also clearly not possible to ever reach $W=0$. The $A$-brane of section \ref{sec:Hbrane} is this solution with $c=1/2, u_0= c_1/2$, and choosing the plus sign for $a$. We get:
\be P_z^f = -e^u (\tan2(u - u_0) + i) \left( \begin{array}{cc} 1 &  \pm i\\ \pm i & -1 \end{array}\right), \ee
which again satisfies $(P_z^f)^2=0$. Note that for the same monodromy, we are able to construct an $A$-brane with $P_z$ in both possible conjugacy classes (as opposed to both the $N$- and $K$-brane).
\end{enumerate}

\subsubsection{$\mathbf{sl}(3)$ Example 1: simple nilpotent}\label{sec:potformsl3X13}
Take as monodromy:
\be X = X_{13}=\left(\begin{array}{ccc} 0 & 0 & 1\\ 0&0&0\\ 0&0&0\end{array}\right).\ee
Clearly there is a solution in an $\la{sl}(2)$ subalgebra that has $(P_z^f)^2=0$ and corresponds to an $N$-brane where we embed $X_{\la{sl}(2)}$ onto $X=X_{13}$. However, we are interested in finding a different solution with this same monodromy; a solution that is ``truly $\la{sl}(3)$'' in the sense that it can not be given by an $\la{sl}(2)$ solution embedded in $\la{sl}(3)$.

The potential is given by:
\be W = -\frac12 e^{-2\phi_1}.\ee
The momentum constraint is:
\be \dot{a}_2 - a_3\dot{a}_1 = 0.\ee
One can use (\ref{eq:potformcY}) to obtain the following three conserved charges:
\begin{align}
\begin{split}
 c_{11} &= -e^{-\phi_1+\sqrt{3}\phi_2} a_1\dot{a}_1 + e^{-\phi_1-\sqrt{3}\phi_2}a_3\dot{a}_3 +\frac{2}{\sqrt{3}} \dot{\phi}_2,\\
 c_{12} &= e^{-\phi_1-\sqrt{3}\phi_2}\dot{a}_3,\qquad\qquad
 c_{23} = e^{-\phi_1+\sqrt{\phi_2}}\dot{a}_1,
\end{split}
\end{align}
where we have used $Y=X_{11}+X_{33},X_{12},X_{23}$, respectively, and
already used $c_X=0$. In principle, the five conserved charges $H,c_X,
c_{11},c_{12},c_{23}$ are enough to determine the variables
$\phi_1,\phi_2,a_1,a_2,a_3$. The $N$-brane embedding where $X_{\la{sl}(2)}$
is embedded onto $X$ has $\phi_1=\log u;$ $\phi_2,a_1,a_2,a_3=const.$; $c_{11}=c_{12}=c_{23}=0$; this is the only possible solution with
$(P_z^f)^2=0$.

To find a ``truly $\la{sl}(3)$'' solution, we take the following ansatz:
\begin{align}
\begin{aligned}
 c_{11} &= 0, & \qquad\qquad c_{12}&=c_{23}\equiv c\neq 0,\\
 a_1(u) &= a_3(u)\equiv a(u), & a_2(u)&= \frac12 a(u)^2 + a_{20},\\
 \phi_1(u) &= \log \frac{\dot{a}(u)}{c}, & \phi_2(u) &= 0.
\end{aligned}
\end{align}
The solution for $a(u)$ is given by:
\be a(u) = -\left(\frac{2}{-c}\right)^{2/3} \zeta\left( \left(\frac{-c}{2}\right)^{1/3} (u-u_0);\ 0,\, -c^2 \right) + a_0,\ee
where $\zeta(z;g_2,g_3)$ is the Weierstrass zeta function. This function has an
infinite number of poles, so the solution will not be globally
well-defined. For this solution (as mentioned above), $(P_z^f)^2\neq 0$.

\subsubsection{$\la{sl}(3)$ Example 2: not nilpotent or semisimple, $(P_z^f)^2\neq0$}\label{sec:potformsl3nssnoSUSY}
Let us consider the monodromy:
\be X = \left(\begin{array}{ccc} 1 & 0 & 1\\ 0&-2&0\\ 0&0&1\end{array}\right).\ee
This monodromy is neither nilpotent nor semisimple, and moreover its semisimple and nilpotent parts can clearly never be embedded into two commuting $\la{sl}(2)$ algebras (as $\la{sl}(3)$ does not contain two commuting $\la{sl}(2)$ subalgebras).

The potential is given by:
\be W = -6 -\frac92 e^{-\phi_1}\left(e^{\sqrt{3}\phi_2} a_1^2 + e^{-\sqrt{3}\phi_2}a_3^2\right) - \frac12 e^{-2\phi_1}(1+3a_1a_3)^2 < 0,\ee
so the potential is always strictly negative. We note that $X+X^T$ is conjugate to $\diag(1,3,-4)$, which does not meet the criteria given in (\ref{eq:potformSUSYcond}) as the non-zero eigenvalues are not paired, so it will not be possible to find a solution with $(P_z^f)^2=0$.

A particularly easy ``diagonal'' solution is given by:
\begin{align}
\begin{split}
 a_1 &=a_2=a_3=0,\qquad
 \phi_2 = c,\qquad
 \phi_1 = \log\frac{\sinh(2\sqrt{3} u)}{2\sqrt{3}}.
\end{split}
\end{align}

\subsubsection{$\la{sl}(3)$ Example 3: not nilpotent or semisimple, $(P_z^f)^2=0$}\label{sec:potformsl3nssSUSY}
We can alter the previous monodromy slightly to:
\be X = \left(\begin{array}{ccc} 1 & 0 & \pm 2\\ 0&-2&0\\ 0&0&1\end{array}\right).\ee
Now, we see that the $X+X^T$ is conjugate to $\diag(4,-4,0)$, which does fit the form given in (\ref{eq:potformSUSYcond}) since the non-zero eigenvalues are paired, so it should be possible to find a solution with $(P_z^f)^2=0$.

The potential is given by:
\be W = -6 - \frac12 e^{-2\phi_1} (3a_1a_3 + x)^2 - \frac92 e^{-\phi_1}(e^{\sqrt{3}\phi_2} a_1^2 + e^{-\sqrt{3}\phi_2}a_3^2) < 0,\ee
where $x=\pm 2$ is the $(1,3)$ entry of $X$.

To find a solution with $(P_z^f)^2=0$, we can start with the block diagonal form given by (\ref{eq:potformSUSYcond}) for $\dot{m}(u_0)$ for $m(u_0)=1$ and find the solution order by order in $u-u_0$; doing this will suggest the ansatz $\phi_2=0, a_1=a_3$. Then, we can easily solve the full equation $(P_z^f)^2 =0$ to find the following solution:
\begin{align}
\begin{aligned}
 \phi_1 &= \log\left[2-\cosh(2\sqrt{3}u)\right], & \qquad \phi_2 &=0,\\
 a_1 &= \pm a_3 = 2\sqrt{\frac23} \sinh(\sqrt{3}u), & a_2 &= \pm \frac43 \sinh^2(\sqrt{3}u).
\end{aligned}
\end{align}
One can easily check that this solution satisfies the equations of motion as well. For $\phi_1$ to be real, the range of $u$ is restricted to:
\be\label{eq:potformurange} -\frac{1}{2\sqrt{3}} \log(2+\sqrt{3}) < u < \frac{1}{2\sqrt{3}} \log(2+\sqrt{3}),\ee
so in particular this solution does not extend to the origin $r=0\, (u\rightarrow+\infty)$. The metric function can be found from (\ref{eq:MansatzEOMU}) and is given by:
\be e^{U} = e^{-2u^2 + c_1u + c_0}\left[ -\cosh(2\sqrt{3}u) + 2\right].\ee
This is positive only in the range indicated above in (\ref{eq:potformurange}).

\subsubsection{$\la{sl}(3)$ Example 4: non-trivial nilpotent, $(P_z^f)^2=0$}\label{sec:potformsl3nilpotSUSY}
Let us now take:
\be X = X_{12} + X_{13}=\left(\begin{array}{ccc} 0 & 1 & 0\\ 0&0&1\\ 0&0&0\end{array}\right).\ee
Again, because $X$ is nilpotent, there exists an embedding of
$\la{sl}(2)$ into $\la{sl}(3)$ that takes $X_{\la{sl}(2)}$ onto $X$, and
the $\la{sl}(2)$ $N$-brane then trivially gives a solution with this
$X$. However, for this $N$-brane embedding, we know that $P_z^f$ is
conjugate to $X$, so that we will have $(P_z^f)^2\neq 0$. Here, we would
like to investigate the possibility of a solution with the embedding
given by $X$ but with $(P_z^f)^2=0$.

We can again find a solution by demanding $m(u_0)=1$, and then we find:
\be \dot{m}(u_0) = \frac{1}{\sqrt{2}} \left(\begin{array}{ccc} 1 & 0 & 1\\ 0&-2&0\\ 1&0&1\end{array}\right),\ee
and we can use this to perturbatively find a solution in around $u_0=0$:
\begin{align}
\begin{aligned}
  a_1 &= a_3 = 0,\\
 a_2 &= \frac{1}{\sqrt{2}}\left( u - \frac12 u^3 + \frac{3}{20} u^5 - \frac{3}{56}u^7 + \frac{1}{56}u^9 - \frac{15}{2464}u^{11}  \cdots\right),\\
 \phi_1 &= -\frac34 u^2 - \frac{3}{16} u^4 - \frac{3}{16} u^6 - \frac{117}{896}u^8 - \frac{123}{1120} u^{10} - \frac{165}{1792} u^{12} - \cdots,\\
 \phi_2 &= \sqrt{\frac32}\left( u + \frac12 u^3 + \frac{3}{10}u^5 + \frac{3}{14} u^7 + \frac{19}{112} u^9 + \frac{87}{616} u^{11} + \cdots\right).
\end{aligned}
\end{align}
We have been unable to find an analytic closed-form expression for these functions, but from this perturbative expansion it seems likely that the range of $u$ will again be restricted---this time to $\|u\|\leq 1$.

\subsection{Taxonomy of Brane Solutions}\label{sec:potformtaxonomy}
We have now explored a plethora of single-centered $\la{sl}(n)$ brane solutions with the ansatz (\ref{eq:Mansatz}) with varying properties. These brane solutions are summarized in a self-contained way in Table \ref{tab:taxonomy}. In this table, we give (i) the subalgebra of $\la{e}_8$ that the solution lives in, (ii) the monodromy $X$ of the solution, together with the fact if it is (iii) semisimple, (iv) nilpotent, and/or (v) compact (i.e.\ $X\in \la{k}$). We also give (vi) the reference to the section in the text where the explicit solution is to be found. Finally, a number of properties of the solution are given: whether (vii) $(P_z^f)^2=0$ (which is necessary but not sufficient for supersymmetry as explained in section \ref{sec:condsusy}), whether (viii) the solution is globally defined (i.e.\ extends to the origin $r=0$, at which the brane is expected to ``sit''), and whether (ix) the solution approaches a point where the potential energy vanishes, $W=0$ (see section \ref{sec:potform}).

\begin{landscape}
\begin{table}[h]
\begin{center}
 \begin{tabular}{|c|c|c|c|c||c||c|c|c|}
 \hline
 subalgebra & $X$ & semisimple & nilpotent & compact & sol. (sect.) & $(P_z^f)^2=0$ & global & $ W \rightarrow 0$\\
 \hline\hline
 $\la{sl}(2)$ & $\left(\begin{array}{cc}0&1\\0&0\end{array}\right)$ & \NO & \YES & \NO & \ref{sec:Nbrane} & \YES & \YES & \INF\\
 \hline
 $\la{sl}(2)$ & $\left(\begin{array}{cc}0&1\\-1&0\end{array}\right)$ & \YES & \NO & \YES & \ref{sec:Sbrane} & \YES & \YES & \YES\\
 \hline
 $\la{sl}(2)$ & $\left(\begin{array}{cc}1&0\\0&-1\end{array}\right)$ & \YES & \NO & \NO & \ref{sec:Hbrane} & \YES & \NO & \NO\\
 \hline\hline
 $(\la{sl}(2)\subset\,)\,\la{e}_8$ & any nilpotent & \NO & \YES & \NO & \ref{sec:nilpotentcharges} & \YES$^{\dagger}$ & \YES & \INF\\
 \hline
 $(\la{sl}(2)^n\subset)\,\la{e}_8$ & any semisimple & \YES & \NO & \MAYBE & \ref{sec:semisimplecharges} & \YES$^{\dagger}$ & \MAYBE & \MAYBE \\
 \hline
 $\la{sl}(2)^n$ & any (in $\la{sl}(2)^n$) & \MAYBE & \MAYBE & \MAYBE & \ref{sec:othercharges} & \YES$^{\dagger}$ & \YES & \MAYBE\\
 \hline\hline
 $\la{sl}(3)$ & $\left(\begin{array}{ccc}0&0&1\\0&0&0\\0&0&0\end{array}\right)$ & \NO & \YES & \NO & \ref{sec:potformsl3X13} & \NO & \NO & \NO\\
 $(\la{sl}(2)\subset\la{sl}(3))$ &  & \NO & \YES & \NO & \ref{sec:Nbrane} & \YES$^{\ddagger}$ & \YES & \INF\\
 \hline
 $\la{sl}(3)$ & $\left(\begin{array}{ccc}1&0&1\\0&-2&0\\0&0&1\end{array}\right)$ & \NO & \NO & \NO & \ref{sec:potformsl3nssnoSUSY} & \NO & \YES & \NO \\
 \hline
 $\la{sl}(3)$ & $\left(\begin{array}{ccc}1&0&\pm 2\\0&-2&0\\0&0&1\end{array}\right)$ & \NO & \NO & \NO & \ref{sec:potformsl3nssSUSY} & \YES & \NO & \NO\\
 \hline
 $\la{sl}(3)$ & $\left(\begin{array}{ccc}0&1&0\\0&0&1\\0&0&0\end{array}\right)$ & \NO & \YES & \NO & \ref{sec:potformsl3nilpotSUSY} & \YES & \NO & \NO\\ 
 $(\la{sl}(2)\subset\la{sl}(3))$ & & \NO & \YES & \NO & \ref{sec:Nbrane} & \YES $^{\wedge}$ & \YES & \INF\\
 \hline
 \end{tabular}
 \caption{\label{tab:taxonomy}
 The summary of all single-center brane solutions we have considered with ansatz (\ref{eq:Mansatz}) in sections \ref{sec:simplebranes} and \ref{sec:morebranes}. See the text of section \ref{sec:potformtaxonomy} for the meaning of the columns and for discussion.\newline
 \emph{\textbf{Legend:} \YES: yes; \MAYBE: depending on the specific example, yes or no (see the relevant section in the text); \INF: the solution approaches $W=0$ for divergent $m$;\newline
 \YES $^{\dagger}$: $(P_z^f)^2=0$ in the fundamental of the smallest $\la{sl}(k)$ subalgebra which contains all of the $\la{sl}(2)^n$ subalgebras.\newline
 \YES $^{\ddagger}$: $(P_z^f)^2=0$ in the fundamental of $\la{sl}(2)$ and in the fundamental of $\la{sl}(3)$.\newline
 \YES $^{\wedge}$: $(P_z^f)^2=0$ in the fundamental of $\la{sl}(2)$ but $(P_z^f)^2\neq 0$ in the fundamental of $\la{sl}(3)$.}}
\end{center}
\end{table}
\end{landscape}

A first goal would be to classify the possible monodromies
$X\in\la{sl}(n)\subset\la{e}_8$ that can be the monodromy of a globally
defined (extending to the origin), supersymmetric (for which $(P_z^f)^2=0$ is necessary) brane
solution. We have showed that a neccesary condition for a monodromy to
have $(P_z^f)^2=0$ is that it is of the form given in
(\ref{eq:potformSUSYcond}). In fact, unfortunately, from the
examples we have considered it would not seem that there are any other
strong statements that one can make about the existence of such global,
supersymmetric solutions. In fact, the strongest conclusion that we seem
to be able to draw is that \emph{there is no ``easy'' classification of
possible supersymmetric monodromies}---to say nothing of the
non-supersymmetric case. The difficulty of classification is in contrast
to the much easier $\la{sl}(2)$ situation, where all possible
monodromies are conjugate to either the $A$-, $K$-, or $N$-brane.

Even though we have no strong statements to make about the classification of supersymmetric monodromies (besides the one neccessary criterium given in (\ref{eq:potformSUSYcond})), we can still make two observations based on the examples we have studied above, which may point towards general principles (although we are unaware of a general proof for either of them):
\begin{itemize}
 \item It seems that it is impossible to construct a globally (extending
       to the origin) defined solution that is supersymmetric
       (which needs $(P_z^f)^2=0$) when $X$ has a non-zero non-compact semisimple part
       (for example, the $A$-brane).
 \item It seems necessary (but by no means sufficient, as our examples show) that $W=0$ is reached in order for a solution to be globally well-defined and supersymmetric.
\end{itemize}

\section{Conclusions}\label{sec:conclusion}
In this paper, we have found a full classification of the supersymmetric
solutions in 3D maximal supergravity. There are two classes of
solutions: a \emph{null} class which are 1/2-BPS $pp$-waves; and a
\emph{timelike} class. For the timelike class, the amount of
supersymmetry preserved is necessarily and sufficiently determined by
the conjugacy class or orbit of the quantity $P_z$: only specific
\emph{nilpotent orbits} preserve some supersymmetry; which nilpotent
orbits preserve supersymmetry and how much is tabulated in Table
\ref{tab:orbits}.

While this classification gives all of the necessary and sufficient conditions for local supersymmetry, it lacks an interpretation in terms of the allowed (supersymmetric) scalar monodromies as we travel around branes or point particles. To obtain more insight in this matter, we have considered a number of explicit single-center point particle solutions living in a subalgebra $\la{sl}(n)\subset\la{e}_8$. By considering both $\la{sl}(2)$ and more contrived $\la{sl}(3)$ solutions with varying properties, we were able to show that there appears to be no simple relationship between the scalar monodromy and the supersymmetry-determining quantity $P_z$. This is unfortunate, because it implies that there is no easy way to use our classification to understand which classes of brane monodromies will be able to preserve supersymmetry.

One of the subtleties that arises is that in order to find the explicit solutions, we need to solve a particular ordinary
group-valued differential equation. This differential equation has various interesting properties, including the existence
of a series of conserved quantities, but unfortunately these are insufficient to determine the solutions we are interested in.
Moreover, as we demonstrated with explicit examples, it can happen that there exists a solution of the differential equation which
exists locally but which cannot be extended all the way to the origin where the exotic object is located. These solutions could
still be of value and describe e.g.\ the geometry between boundaries or domain walls, but are not bona fide gravitational descriptions
of exotic objects. Thus, the presence of supersymmetry alone does not guarantee the existence of the exotic object.

Even if one does find a supersymmetric object, it would require further analysis whether this is truly a bound state or
whether it is a marginal bound state or superposition of other more fundamental objects. This would probably require us
to start looking at multi-centered solutions which we have not attempted in this paper.

We should also emphasize that we employed a particular ansatz (\ref{eq:Mansatz}) when we constructed our solutions. Though this
ansatz appears quite natural, we did not prove that it exhausts all possible supersymmetric solutions and we leave this to future 
work. 

Another important point is that we have considered monodromies in the full supergravity U-duality group $E_{8(8)}(\mathbb{R})$, without making an effort to determine whether these monodromies would exist in the appropriate arithmetic subgroup of $E_{8(8)}$ of the underlying string theory. Although the subject of arithmetic subgroups and their conjugacy classes is a difficult mathematical problem, generically one expects these groups to be generated by exponentiating compact semi-simple and nilpotent elements. This would imply that the $A$-brane (as well as all other branes with non-compact semi-simple monodromies) might simply not exist in the full string theory, providing a possible explanation as to why the supergravity brane solution does not appear to be globally well-behaved.\footnote{We wish to thank the anonymous referee for pointing this out to us.}

Another topic which would be interesting to address is the existence of
global, multi-centered solutions. To find these, on should consider a
Riemann surface with various punctures, each of which corresponds to an
exotic object with given monodromy, and ask whether this is a consistent
solution. For example, if the Riemann surface is a two-sphere, and the
branes are sevenbranes, we know from F-theory that the allowed solutions
correspond to elliptically fibered K3's. But whether this geometrical
picture can be extended to other duality groups, or whether there exists
a more algebraic approach to this problem are all interesting but
difficult open problems \cite{Kumar:1996zx, Liu:1997mb, Curio:1998bv,
Leung:1997tw, Lu:1998sx}.

A natural place to study exotic non-geometric branes that we have
studied is in the formalism of doubled or extended field theory (for
recent reviews, see \cite{Aldazabal:2013sca, Berman:2013eva}; for
related work, see \cite{Andriot:2014uda}), where these branes would have
a geometric interpretation.  We have not used this at all in our
analysis, but it would certainly be interesting to study general exotic
branes within this framework and elucidate their nature.
In the current paper, we focused on codimension two objects in string
theory, but objects with still lower codimension are also interesting.
In particular, codimension one objects are produced by the supertube
effect \cite{Mateos:2001qs} when two codimension one objects are put
together.  Codimension one objects can be more non-geometric than
codimension two objects (see e.g.\ \cite{Hassler:2013wsa}), and
formalisms such as doubled or extended field theory may become more
relevant and useful for their study.

It has been argued that exotic branes are relevant for the microscopic
physics of black holes because they can in principle be produced by
successive supertube transitions \cite{deBoer:2010ud, Bena:2011uw,
deBoer:2012ma}.  We hope that the analysis in this paper will be useful
as an approximate near-brane description of exotic branes thus produced
in black hole systems.

Overall, three dimensional maximal supergravity has turned out to be a
remarkably rich theory. Presumably, we have only scratched the surface
and many more interesting facts are still awaiting discovery.


\section*{Acknowledgments}

We thank Marco Baggio, Iosif Bena, Eric Bergshoeff, Mariana Gra\~na,
Yoshifumi Hyakutake, Tetsuji Kimura, Ruben Minasian, Eric Opdam, Fabio
Riccioni, Orestis Vasilakis, Nick Warner, and Satoshi Yamaguchi for
fruitful discussions.
The work of MS was supported in part by Grant-in-Aid for Young
Scientists (B) 24740159 from the Japan Society for the Promotion of
Science (JSPS)\@. This work is part of the research programme of the Foundation for Fundamental Research on Matter (FOM), which is part of the Netherlands Organisation for Scientific Research (NWO).

\appendix

\section{Construction of $\la{e}_8$}\label{sec:appconstruction}

\subsection{Gamma Matrices of $\la{so}(16)$}\label{app:appgammaso16}
Let us recall some facts about the $SO(16)$
algebra \cite{Marcus:1983hb, Green:1987sp}.  The Dirac algebra of
$SO(16)$ requires 256-dimensional matrices corresponding to the
reducible spinor representation ${\bf 128+128'}$ of $SO(16)$.  These
matrices can be taken in the block diagonal form as
\begin{align}
 \Gamma^I=\begin{pmatrix}
	   0& \Gamma^I_{A\dot{A}} \\
	   \Gamma^I_{\dot{B}B}&0 \\
	  \end{pmatrix},
\end{align}
where $\Gamma^I_{\dot{A}A}=(\Gamma^I_{A\dot{A}})^T$.  The range of the
indices is $I=1,\dots,16$; $A,B=1,\dots,128$;
$\dot{A},\dot{B}=1,\dots,128$.  The Clifford algebra
$\{\Gamma^I,\Gamma^J\}=2\delta^{IJ}$ amounts to the equations
\begin{align}
 \Gamma^I_{A\dot{A}}\Gamma^J_{\dot{A}B}
 +\Gamma^J_{A\dot{A}}\Gamma^I_{\dot{A}B}
 = 2\delta^{IJ}\delta_{AB},\qquad
 \Gamma^I_{\dot{A}A}\Gamma^J_{A\dot{B}}
 +\Gamma^J_{\dot{A}A}\Gamma^I_{A\dot{B}}
 =2\delta^{IJ}\delta_{\dot{A}\dot{B}}.
\end{align}
We also define
\begin{align}
 \Gamma^{IJ}_{AB}
 \equiv \half(\Gamma^{I}_{A\dot{A}}\Gamma^{J}_{\dot{A}B}
 -\Gamma^{J}_{A\dot{A}}\Gamma^{I}_{\dot{A}B}).
\end{align}

We will always be working in a purely real representation of the $\Gamma$ matrices, so that the Majorana spinors $\epsilon^I$ (the supersymmetry transformations) are always real.

\subsection{Two Constructions of $\la{e}_{8(8)}$}\label{app:twoconstructions}
The algebra of $\la{e}_8$ can be viewed as appending 128 generators $Y^A$ in the Majorana-Weyl representation of $\la{so}(16)$ to the 120 generators $X^{IJ}$ of $\la{so}(16)$ \cite{Green:1987sp}. All the commutation relations are then given by:
\begin{align}
\nonumber [ X^{IJ}, X^{KL}] &= \delta^{IK} X^{IL} + \delta^{JL}X^{IK} - \delta^{IL}X^{JK} - \delta^{JK} X^{IL},\\
\nonumber [ X^{IJ}, Y^A ] &=  -\frac12\Gamma^{IJ}_{AB} Y^B,\\
\label{eq:e8commutators} [Y^A, Y^B] &=  \frac12\Gamma^{IJ}_{AB} X^{IJ}.
 \end{align}
The sign of the commutators was chosen so that the generators $X, Y$ are
those of the real algebra of $\la{e}_{8(8)}$, where the compact
(anti-hermitian) generators are given by $X^{IJ}$ and the non-compact
(hermitian) generators are given by $Y^A$. Thus, the generators of
$\la{e}_{8(8)}\ominus \la{so}(16)$, used ubiquitously in this paper, are the $Y^A$'s.

There is also a second way of constructing the algebra $\la{e}_8$,
in the canonical way of the root decomposition in the Cartan-Weyl basis. The roots of
$\la{e}_8$ can be taken to be all 8-vectors of two kinds: one with
two non-zero components that are each given by either $+1$ or $-1$ (112
roots of this kind) and a kind with all 8-vectors with all components
equal to $\pm 1/2$ with the sum of the components being $2n$ with $n$ an
integer, i.e.\ there are an even number of $+1/2$ components (128 roots
of this kind). Schematically, we have:
\be \label{eq:e8roots} \Delta = \{ \alpha=e_i\pm e_j; i\neq j \} \cup \{ \alpha=\frac12\left(e_1\pm e_2\pm \cdots\pm e_8\right), \sum\alpha^i\mod 2=0 \},\ee
with $\Delta$ the collection of roots, and $e_i$ the 8-vector with an entry 1 in the $i$th place and 0 everywhere else. The algebra then consists of eight Cartan generators
$H^i$ and the 240 root generators $E_{\alpha}$. All commutators are
given by:
\begin{align}
 [H_i, E_{\alpha}] &= \alpha^i E_{\alpha},\\
[ E_{\alpha}, E_{-\alpha}] &= \alpha_i H^i,\\
[ E_{\alpha}, E_{\beta}] &= N_{\alpha,\beta} E_{\alpha+\beta},
\end{align}
where $N_{\alpha,\beta}$ is only non-zero if $\alpha+\beta$ is a root; these coefficients $N_{\alpha,\beta}$ are highly constrained by a number of symmetries that the commutators need to respect (including Jacobi identities), but there are still a number of constants in their definition that one can choose arbitrarily. With this construction, the 128 non-compact, hermitian generators of $\la{e}_{8(8)}$ are the generators $H^i$ and $E_{\alpha}+E_{-\alpha}$; the 120 compact, anti-hermitian generators are given by $E_{\alpha}-E_{-\alpha}$.

\subsection{Map Between Constructions}
The root decomposition is natural from a Lie algebra point of view; for
example the nilpotent orbit representatives in \cite{Djokovic:reps} are
given in terms of sums of the root vectors $E_{\alpha}$. However, from a
3D supergravity point of view, the decomposition in $\la{so}(16)$+spinor
is more natural, as e.g.\ the elements $P$, which live in the
non-compact part of the algebra, are always treated as a spinor of
$\la{so}(16)$. Thus, for some calculations we need an explicit map
between the two constructions of $\la{e}_{8(8)}$. We will now
schematically explain our way of mapping the $\la{so}(16)$+spinor
construction to the root decomposition; we will essentially be using the
adjoint representation of $\la{e}_8$ for this.

The first step in our construction of the map is to select 8 generators to serve as the Cartan subalgebra $\{H^I\}$. These must be non-compact generators, so we can select 8 linear combinations of generators $Y^A$ that all mutually commute, so that $H^i = \mathcal{H}^i_A Y^A$. In practice, we do this by selecting a first generator $H^1(Y^A)$ arbitrarily, and then iteratively finding a new generator $H^j(Y^A)$ that commutes with all the previous $H^i$'s. The commutator of $H^i$ with an arbitrary element $x^{IJ} X^{IJ} + P^A Y^A$ is given by the matrix operation $C^i$ on $(x^{IJ}, P^A)$ as:
\be C_i \left(\begin{array}{c} x^{ij}\\ P^A \end{array}\right)=\frac12\left( \begin{array}{cc} 0_{120} & \mathcal{H}^i_A \Gamma^{IJ}_{AB}\\ \mathcal{H}^i_A \Gamma^{IJ}_{AB} & 0_{128} \end{array}\right)\left(\begin{array}{c} x^{IJ}\\ P^B \end{array}\right),\ee
where the $0$ denote square zero matrices of the denoted dimension, and the upper-right and lower-left matrices should be considered as $128\times120$ and $120\times128$ matrices, respectively. Finally, the generators $H^i$ must all be \emph{semi-simple}, which is equivalent to demanding that the matrices $C^i$ are diagonalizable. This must be checked to safely conclude that $\textrm{span}\{H^i\}$ is a valid Cartan subalgebra of $\la{e}_8$. Finally, we rescale all of our $H^i$'s so that the 128-vector norm of $(\mathcal{H}^i_A)$ is 1.

The second step is to identify the 240 root vectors $E_{\alpha}$. This requires finding the 240 simultaneous eigenvectors of the 8 matrices $C^i$ defined above. This can again be done iteratively. We can start by diagonalizing $C^1$ as $C^1 = P_1 D_1 P_1^{-1}$, where $D_1$ is the resulting diagonal matrix. Then, we perform the change of basis given by $P_1$ on $C^2$ and diagonalize this by $P_1^{-1} C^2 P_1 = P_2 D_2 P_2^{-1}$,
and so forth. In the end, the matrix that simultaneously diagonalizes all $C^i$'s (and thus contains the sought-after 240 eigenvectors) is given by $P \equiv P_1\cdot P_2\cdot \ldots\cdot P_8$. The root vectors $(\mathcal{E}_{\tilde{\alpha}, ij},\mathcal{E}_{\tilde{\alpha}, A})$, with $E_{\tilde{\alpha}} = \mathcal{E}_{\tilde{\alpha}, A} Y^A + \mathcal{E}_{\tilde{\alpha}, ij} X^{ij}$, are then given by the columns of $P$. By solving the appropriate eigenvalue equation for each column of $P$, we can match each column to a root vector $\tilde{\alpha}$.

We are not yet finished, as the roots $\tilde{\alpha}$ that we obtain by this procedure are in general not the ones we are interested in as described above. This is because there are many different admissable root systems for $\la{e}_8$, which are related to a rotation of the Cartan generators $H^i$. So the final step is to find an $8\times 8$ orthogonal matrix $T$ which rotates the $H^i$'s as $H\rightarrow T H$, and thus rotating the $\tilde{\alpha}$'s into the correct root vectors $\alpha=T\tilde{\alpha}$.

Finally, we rescale all vectors $(\mathcal{E}_{\tilde{\alpha}, ij},\mathcal{E}_{\tilde{\alpha}, A})$ so that the 248-vector norm is $1$ for each of these vectors, and further we rescale all negative root vectors (i.e.\ we must make a choice of positive and negative roots for every pair of roots $\pm\alpha$) with $\pm1$ so that $[E_{\alpha},E_{-\alpha}]=+\alpha\cdot H$ for every $\alpha$. With this normalization, we also have $[E_{\alpha}, E_{\beta}] = N_{\alpha,\beta} E_{\alpha+\beta}$ with $N=\pm1$ if $\alpha+\beta$ is a root.

Note that, by construction (i.e.\ by using the commutation relations (\ref{eq:e8commutators}) and selecting the $H^i$'s as combinations of $Y$'s), we have $\mathcal{E}_{-\alpha, IJ}=-\mathcal{E}_{\alpha, IJ}$ and $\mathcal{E}_{-\alpha,A}=+\mathcal{E}_{\alpha, A}$, so that indeed $E_{\alpha}+E_{-\alpha}$ and $H^i$ are the non-compact generators and $E_{\alpha}-E_{-\alpha}$ are the compact generators.

\subsection{Calculations in $\la{e}_{8(8)}$}
There are a number of calculations in the paper where we need the explicit map between the Cartan decomposition and the $\la{so}(16)$+spinor construction of $\la{e}_{8(8)}$. Here we will briefly sketch the procedure of these calculations.

\subsubsection{Calculating Supersymmetry Preserved by $P_z$ (proof of result \ref{prop:Porbits} in section \ref{sec:proofPorbits})}
\label{sec:appproofPorbits}
Our goal is to calculate directly the amount of supersymmetry that the equation
\be\label{eq:appPeq} \overline{\zeta}^I\Gamma^I_{A\dot{A}}P^A_z = 0\ee
preserves for $P_z$ in a particular nilpotent $K(\bbC)$-orbit of $\la{p}_\bbC$. A list of explicit representatives for each $G(\bbR)$-orbit in $\la{g}_\bbR$ is listed in \cite{Djokovic:reps}; moreover, these representatives are the nilpositive elements of Cayley triples (see section \ref{sec:lieconceptstriples}). This means that, given a nilpositive element $X$ which is a representative of the orbit $\mathbf{i}$ (where $\mathbf{i}\in\{\mathbf{1},\ldots,\mathbf{115}\}$), we can find the rest of the standard Cayley triple by $Y=-\theta(X)$ and $H=[X,Y]$. Then, the fact that $\{H,X,Y\}$ is a standard Cayley triple and the Kostant-Sekiguchi bijection (see section \ref{sec:lieconceptsorbits}) assures us that $X'=\frac12(X+Y+iH)$ is a nilpotent element in the corresponding $K(\bbC)$-orbit $\mathbf{i}$ in $\la{p}_\bbC$.

To summarize, given the element $X\in\la{g}_\bbR$ listed in \cite{Djokovic:reps}, we take $P_z = \frac12(X - \theta(X) - i[X,\theta(X)])$. For this, we need the map between the root decomposition and the $\la{so}(16)$+spinor construction as $X$ is listed in \cite{Djokovic:reps} in terms of the root vectors $E_{\alpha}$. Once we have constructed a representative $P_z$ in this way, it is then a simple manner of plugging in this $P_z$ in (\ref{eq:appPeq}) and solving for the 16 complex components $\zeta^I$ to see how many (real) degrees of freedom of $\zeta^I$ are still free to choose and thus how much supersymmetry is preserved.

\subsubsection{Proving $c_i H_i$ Preserves No Supersymmetry (proof of result \ref{prop:Pnilpotent} in section \ref{sec:proofPnilpotent})}
\label{sec:appproofPnilpotent} 
We wish to prove that the equation (\ref{eq:appPeq}), for $P_z = \sum c_i H_i$, always breaks all supersymmetry unless all of the complex constants $c_i=0$. Again, we need the map between the root decomposition and the $\la{so}(16)$+spinor construction for this calculation. Then it is a simple matter to set $P_z = \sum c_i H_i$ and solve (\ref{eq:appPeq}) in Mathematica; the solution unambiguously tells us that either all components $\zeta^I = 0$ (supersymmetry is completely broken) or all constants $c_i = 0$ ($P_z=0$).

\subsubsection{Finding Brane Representatives for the Orbits (section \ref{sec:branereps})}\label{sec:appbranerep}
Given a brane configuration $X$, we need an unambiguous way to determine which orbit the brane configuration is a representative of. Fortunately, there are two easy criteria, which are two orbit-invariant quantities given by $\dim$ and $\inv$. The dimension $\dim$ of an orbit is the dimension as an algebraic variety, and can be calculated using a representative $X$ as \cite{collingwood}:
\be \dim( \mathcal{O}_X) = \dim(\la{g}) - \dim(\la{g}^X),  \ee
where the dimensions on the right hand side are simply the dimensions as vector spaces, and $\la{g}^X = \{Y\in\la{g}, [X,Y]=0\}$ is the centralizer of $X$ in $\la{g}$ (which is a subalgebra of $\la{g}$). The invariant $\inv$ (of a real nilpotent orbit) is defined as the (complex) dimension of the centralizer, restricted to the compact space $\la{so}(16)_\bbC$, of the neutral element $H'=i(X-Y)$ of the Cayley transform (see section \ref{sec:lieconceptstriples}) of a real Cayley triple $\{H,X,Y\}$ \cite{Djokovic:reps}. In symbols, for any representative $X$ that is the nilpositive element of a Cayley triple:
\be \inv(\mathcal{O}_X) = \dim\left( \la{so}(16)^{H'}\right), \qquad H' = i(X-Y) = i(X+\theta(X)).\ee
Since our brane configurations are always sums of (positive) root vectors $E_{\alpha}$, they are by construction the nilpositive element of a Cayley triple, so that $\inv$ is easily calculated for these brane configurations. We note that some higher dimensional nilpotent orbits are not unambiguously specified by giving $\dim$ and $\inv$ of the orbit, but these two characteristics are sufficient for the orbits that we are interested in ($<\mathbf{20}$).


\section{Mathematical Theorems}
\label{sec:theorems}
Here we review theorems and facts from various mathematical papers for convenient reference. We will refer to them all as lemmas in our paper, even though they are lemmas, propositions, or theorems in the original paper.\\

The paper \cite{KostantRallis71} deals mainly with the subject of $K(\bbC)$-orbits in $\la{p}_\bbC$, and is thus invaluable to our discussion of $P_z$. We will leave out the subscript $\mathbb{C}$ in the statement of the following propositions. Also, the following three propositions are valid for any Lie algebra $\la{g}$ with Cartan decomposition $\la{g}=\la{k}\oplus \la{p}$.

\begin{mathproposition}\label{lemma:KRxinp}
(\cite{KostantRallis71}, Prop. 3 (p764).) If $x\in \la{p}_\bbC$ has a Jordan decomposition $x=x_s+x_n$, then $x_s\in \la{p}_\bbC$ and $x_n\in \la{p}_\bbC$.
\end{mathproposition}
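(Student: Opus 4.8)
The plan is to deduce the statement from the \emph{uniqueness} of the Jordan decomposition, using that the Cartan involution $\theta$ (extended $\bbC$-linearly to $\la{g}_\bbC$) is an automorphism of the complex Lie algebra whose $(-1)$-eigenspace is exactly $\la{p}_\bbC$.

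First I would record the elementary structural facts. The $\bbC$-linear extension of $\theta$ to $\la{g}_\bbC$ still satisfies $\theta([a,b])=[\theta(a),\theta(b)]$ and $\theta^2=\unit$, so $\la{g}_\bbC=\la{k}_\bbC\oplus\la{p}_\bbC$ is precisely its eigenspace decomposition and $a\in\la{p}_\bbC \Leftrightarrow \theta(a)=-a$. Since $\ad_{\theta(a)}=\theta\circ\ad_a\circ\theta^{-1}$, the operator $\ad_{\theta(a)}$ is similar to $\ad_a$; hence $\theta$ carries semisimple elements to semisimple elements and nilpotent elements to nilpotent elements. Moreover $[\theta(a),\theta(b)]=\theta([a,b])$, so $\theta$ sends a commuting pair to a commuting pair.

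Next I would apply $\theta$ to $x=x_s+x_n$. This yields $\theta(x)=\theta(x_s)+\theta(x_n)$, where by the facts just recorded $\theta(x_s)$ is semisimple, $\theta(x_n)$ is nilpotent, and $[\theta(x_s),\theta(x_n)]=0$; thus this is \emph{the} Jordan decomposition of $\theta(x)$. On the other hand, $x\in\la{p}_\bbC$ forces $\theta(x)=-x$, and $-x=(-x_s)+(-x_n)$ is also a commuting sum of a semisimple and a nilpotent element. By uniqueness of the Jordan decomposition, $\theta(x_s)=-x_s$ and $\theta(x_n)=-x_n$, i.e.\ $x_s,x_n\in\la{p}_\bbC$, as claimed.

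I do not expect a genuine obstacle here; the only point needing a little care is to invoke the \emph{abstract} Jordan decomposition in the semisimple Lie algebra $\la{g}_\bbC$ — characterized by the commuting semisimple-plus-nilpotent property together with its uniqueness, and compatible with the adjoint representation and with subalgebras as noted in section \ref{sec:lieconcepts} — which is what guarantees both that $x_s,x_n$ exist inside $\la{g}_\bbC$ in the first place and that the uniqueness step is legitimate.
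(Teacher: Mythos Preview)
Your argument is correct and is the standard one: the Cartan involution is a Lie algebra automorphism, automorphisms preserve semisimplicity, nilpotency, and commutation, so $\theta(x_s)+\theta(x_n)$ is a Jordan decomposition of $\theta(x)=-x$, and uniqueness forces $\theta(x_s)=-x_s$, $\theta(x_n)=-x_n$.

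As for comparison with the paper: there is nothing to compare against. The paper does not supply a proof of this lemma; it is stated in the appendix of cited mathematical results and attributed directly to Kostant--Rallis \cite{KostantRallis71}. Your proof is in fact essentially the argument Kostant and Rallis give (their Proposition~3 uses exactly this uniqueness-of-Jordan-decomposition-under-an-involutive-automorphism reasoning), so you have reconstructed the original source proof rather than diverged from it.
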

Here and in the following, the Jordan decomposition of an element $x$ is the unique decomposition $x=x_s+x_n$ such that $x_s$ is semi-simple, $x_n$ is nilpotent, and $[x_s,x_n]=0$. The following two lemmas are quite analogous to their counterparts for $G(\bbC)$-orbits in $\la{g}_\bbC$.

\begin{mathproposition}\label{lemma:KRxsinclosure}
(\cite{KostantRallis71}, Lemma 11 (p782).) Let $x\in\la{p}_\bbC$ have a Jordan decomposition $x=x_s+x_n$. Then $x_s\in \overline{\mathcal{O}_x}$, where $\mathcal{O}_x$ is the orbit of $x$.
\end{mathproposition}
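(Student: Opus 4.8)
The plan is to adapt to the symmetric-pair setting the classical fact that the semisimple part of any element of a reductive Lie algebra lies in the closure of its adjoint orbit; the mechanism is a one-parameter ``grading flow'' inside $K(\bbC)$ that rescales the nilpotent part to zero.

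First I would use Lemma \ref{lemma:KRxinp} to record that both $x_s$ and $x_n$ lie in $\la{p}_\bbC$ and that $[x_s,x_n]=0$. Let $\la{z}$ denote the centralizer of $x_s$ in $\la{g}_\bbC$. Since $x_s\in\la{p}_\bbC$ obeys $\theta(x_s)=-x_s$, the subalgebra $\la{z}$ is $\theta$-stable, hence inherits a Cartan decomposition $\la{z}=(\la{z}\cap\la{k}_\bbC)\oplus(\la{z}\cap\la{p}_\bbC)$; moreover $\la{z}$ is reductive (centralizer of a semisimple element) and $x_n$ is a nilpotent element of $\la{z}\cap\la{p}_\bbC$. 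Next I would invoke the ``normal'' ($\theta$-adapted) form of the Jacobson--Morozov theorem for the reductive symmetric pair $(\la{z},\la{z}\cap\la{k}_\bbC)$: the nilpotent $x_n$ completes to a standard triple $\{h,x_n,y\}$ inside $\la{z}$ with neutral element $h\in\la{z}\cap\la{k}_\bbC$ and nilnegative element $y\in\la{z}\cap\la{p}_\bbC$. This is the relative version of the statement recalled in Section \ref{sec:lieconceptstriples}, and is precisely the kind of input supplied by \cite{KostantRallis71}.

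Because $h\in\la{k}_\bbC$, the assignment $t\mapsto\exp\!\big((\log t)\,h\big)$ defines a morphism $\lambda:\bbC^\times\to K(\bbC)$. Since $[h,x_s]=0$ while $[h,x_n]=2x_n$, one computes $\textrm{Ad}(\lambda(t))\,x=x_s+t^2x_n$ for every $t\in\bbC^\times$. The right-hand side is a polynomial map $\mathbb{A}^1\to\la{p}_\bbC$ extending this assignment and sending $0$ to $x_s$; since $\bbC^\times$ is (Zariski-)dense in $\mathbb{A}^1$ and the image of $\bbC^\times$ lies in $\mathcal{O}_x=K(\bbC)\cdot x$, the image of the extended morphism lies in the closure $\overline{\mathcal{O}_x}$. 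In particular $x_s\in\overline{\mathcal{O}_x}$, which is the assertion.

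The one genuinely nontrivial ingredient is the $\theta$-adapted Jacobson--Morozov statement: that a nilpotent lying in the $\la{p}$-part of a $\theta$-stable reductive subalgebra can be fitted into a standard triple whose neutral element is compact. I would not reprove this, but cite \cite{KostantRallis71} (or \cite{collingwood}); alternatively one can make $\la{z}$ and its Cartan decomposition concrete by first using a conjugacy result of the type of Lemma \ref{lemma:KRxsconjugatecartan} to move $x_s$ into a fixed $\theta$-stable Cartan subspace. That adapted triple is the step I expect to be the real obstacle; the grading-flow argument built around it is routine.
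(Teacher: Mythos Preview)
Your proof is correct and follows the standard Kostant--Rallis argument. However, the paper does not give its own proof of this statement: it is merely quoted in Appendix~\ref{sec:theorems} as a citation from \cite{KostantRallis71} (``Lemma~11 (p782)''), alongside several other results imported wholesale from the mathematical literature. So there is no in-house argument to compare against.

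For what it is worth, your approach---pass to the $\theta$-stable reductive centralizer $\la{z}$ of $x_s$, invoke the normal ($\theta$-adapted) Jacobson--Morozov triple $\{h,x_n,y\}$ for $x_n$ with $h\in\la{z}\cap\la{k}_\bbC$, and then flow by the one-parameter subgroup $t\mapsto\exp((\log t)\,h)\subset K(\bbC)$ to scale $x_n$ away---is precisely the mechanism used in the original Kostant--Rallis proof. The one nontrivial input you flag, namely the existence of the normal triple inside the $\theta$-stable centralizer, is indeed the substantive step and is supplied by \cite{KostantRallis71} (their Proposition~4 and its corollaries); you are right not to reprove it. One minor remark: the paper's Section~\ref{sec:lieconceptstriples} contains a sign slip in stating $\theta(H')=-H'$ for the Cayley-transformed neutral element; in fact $\theta(H')=+H'$, i.e.\ $H'\in\la{k}_\bbC$, which is exactly what your argument requires and uses.
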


\begin{mathproposition}\label{lemma:KRxsconjugatecartan}
(\cite{KostantRallis71}, Theorem 1, partial (p764).) All Cartan subspaces of $\la{p}_\bbC$ are conjugate under the action of $K(\bbC)$. Moreover, an element in $x\in\la{p}_\bbC$ is semi-simple if and only if it can be embedded in a Cartan subspace of $\la{p}_\bbC$. In particular, $x$ is semi-simple if and only if $x$ is $K$-conjugate to an element in a given Cartan subspace.
\end{mathproposition}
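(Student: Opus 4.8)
The plan is to follow the classical strategy for symmetric pairs, adapted to the fact that here the acting group $K(\bbC)$ is \emph{complex} (hence non-compact). Recall that a Cartan subspace of $\la{p}_\bbC$ means a maximal subspace consisting of commuting semi-simple elements. First I would dispose of the easy parts. Elements of a Cartan subspace are semi-simple by definition, so the ``if'' half of the second assertion is immediate; and the final ``in particular'' sentence follows from the first two assertions, since one may embed a semi-simple $x$ in \emph{some} Cartan subspace and then conjugate that subspace onto the fixed one. Existence of at least one Cartan subspace is a Zorn's lemma argument on abelian subspaces of semi-simple elements (starting, say, from $\{0\}$).

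For the ``only if'' half of the second assertion, suppose $x\in\la{p}_\bbC$ is semi-simple, and set $\la{l}=\la{g}_\bbC^{x}$, its centralizer. Since $x$ is semi-simple, $\la{l}$ is reductive, and since $\theta(x)=-x$ the subalgebra $\la{l}$ is $\theta$-stable, so it inherits a Cartan decomposition $\la{l}=(\la{l}\cap\la{k}_\bbC)\oplus(\la{l}\cap\la{p}_\bbC)$ with $x$ a \emph{central} semi-simple element of $\la{l}\cap\la{p}_\bbC$. Now choose (Zorn again, starting from $\bbC x$) a maximal abelian subspace $\la{a}$ of $\la{l}\cap\la{p}_\bbC$ consisting of semi-simple elements; it contains $x$ because $x$ is central in $\la{l}$. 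I claim $\la{a}$ is already a Cartan subspace of $\la{p}_\bbC$: any abelian subspace $\la{a}'\supseteq\la{a}$ of semi-simple elements of $\la{p}_\bbC$ centralizes $x\in\la{a}$, hence $\la{a}'\subseteq\la{l}\cap\la{p}_\bbC$, and maximality of $\la{a}$ there forces $\la{a}'=\la{a}$.

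The real work, and the step I expect to be the main obstacle, is the conjugacy statement, since the usual compactness/critical-point argument (as used for real symmetric pairs) is unavailable for $K(\bbC)$. I would argue instead by Zariski density. Fix a Cartan subspace $\la{a}$ and pick a regular element $h_0\in\la{a}$, i.e.\ one for which $\dim\la{p}_\bbC^{h_0}$ is minimal; such $h_0$ fill a dense open subset of $\la{a}$ (the complement of finitely many hyperplanes), and for such $h_0$ one has $\la{p}_\bbC^{h_0}=\la{a}$. Since $\ad_{h_0}$ is semi-simple, $\la{g}_\bbC=\la{g}_\bbC^{h_0}\oplus[\la{g}_\bbC,h_0]$, and this splitting is $\theta$-stable; combined with $[\la{k}_\bbC,h_0]\subseteq\la{p}_\bbC$ and $[\la{p}_\bbC,h_0]\subseteq\la{k}_\bbC$ this gives $\la{p}_\bbC=\la{a}\oplus[\la{k}_\bbC,h_0]$. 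But $\la{a}+[\la{k}_\bbC,h_0]$ is exactly the image of the differential at $(e,h_0)$ of the orbit map $\mu:K(\bbC)\times\la{a}\to\la{p}_\bbC$, $\mu(k,h)=\mathrm{Ad}(k)h$ (the curve $(\exp tZ,h_0+tH)$ maps to derivative $[Z,h_0]+H$). Hence $\mu$ is dominant, so $K(\bbC)\cdot\la{a}$ contains a Zariski-dense open subset of $\la{p}_\bbC$; restricting to regular $h\in\la{a}$ keeps the image dense, so it still contains a dense open set of regular semi-simple elements.

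Now given a second Cartan subspace $\la{a}_1$, the same construction produces a dense open set inside $K(\bbC)\cdot\la{a}_1$; two dense opens in the irreducible variety $\la{p}_\bbC$ meet, so there is a common element $y=\mathrm{Ad}(k)h=\mathrm{Ad}(k_1)h_1$ with $h\in\la{a}$ regular. A dimension count ($\la{a}_1\subseteq\la{p}_\bbC^{h_1}$ forces $\dim\la{a}_1\le\dim\la{p}_\bbC^{h_1}=\dim\la{p}_\bbC^{y}=\dim\la{a}$, and symmetrically) shows $\dim\la{a}=\dim\la{a}_1$ and hence $\la{p}_\bbC^{h_1}=\la{a}_1$ too. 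Since $\mathrm{Ad}(k),\mathrm{Ad}(k_1)$ commute with $\theta$, they carry $\la{p}_\bbC^{h}$ and $\la{p}_\bbC^{h_1}$ onto $\la{p}_\bbC^{y}$; therefore $\mathrm{Ad}(k)\la{a}=\la{p}_\bbC^{y}=\mathrm{Ad}(k_1)\la{a}_1$, i.e.\ $\la{a}=\mathrm{Ad}(k^{-1}k_1)\la{a}_1$. The only genuinely delicate inputs are the existence and genericity of regular elements (so that $\la{p}_\bbC^{h_0}=\la{a}$ on a dense set) and the standard fact that a dominant morphism of irreducible varieties has image containing a dense open set; both are routine, and for $\la{g}_\bbC=(\la{e}_8)_\bbC$ one may alternatively simply invoke \cite{KostantRallis71,collingwood} directly.
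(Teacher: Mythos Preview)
The paper does not actually prove this lemma: it is quoted verbatim as Theorem~1 of Kostant--Rallis, so there is no in-paper argument to compare against. Your outline is the standard proof for complex symmetric pairs, and the architecture (compute the differential of the orbit map $K(\bbC)\times\la{a}\to\la{p}_\bbC$ at a regular point, deduce dominance, then intersect two Zariski-dense images) is correct and is indeed how Kostant--Rallis proceed.

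There is, however, one substantive gap. You claim that for $h_0\in\la{a}$ with $\dim\la{p}_\bbC^{h_0}$ minimal one has $\la{p}_\bbC^{h_0}=\la{a}$; but minimality alone only gives $\la{a}\subseteq\la{p}_\bbC^{h_0}$, not equality. Equivalently, you need that a Cartan subspace is its own centralizer in $\la{p}_\bbC$, i.e.\ $\la{p}_\bbC^{\la{a}}=\la{a}$. Over $\bbR$ this is immediate, because the Killing form is definite on $\la{p}_\bbR$, forcing every element of $\la{p}_\bbR$ to be semi-simple, so ``maximal abelian'' already coincides with ``maximal abelian consisting of semi-simple elements'' and maximality finishes. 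Over $\bbC$, however, $\la{p}_\bbC$ contains nilpotents (this is the whole point of the paper), and one must separately exclude a nonzero nilpotent $x_n\in\la{p}_\bbC$ commuting with all of $\la{a}$. This can be done---e.g.\ pass to the reductive $\theta$-stable centralizer $\la{l}=Z_{\la{g}_\bbC}(\la{a})$, note that $x_n$ lies in the semisimple part $\la{l}'=[\la{l},\la{l}]$, build a normal $\la{sl}(2)$ triple $\{h,e,f\}$ through $e=x_n$ with $f\in\la{l}'\cap\la{p}_\bbC$, and observe that $e+f$ is a nonzero semi-simple element of $\la{p}_\bbC$ commuting with $\la{a}$ but not in $\la{a}$, contradicting maximality---but this step is precisely part of what Kostant--Rallis establish. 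Your argument uses it in two places: the submersion computation (which needs $\la{p}_\bbC=\la{a}\oplus[\la{k}_\bbC,h_0]$, and this fails if $\la{a}\subsetneq\la{p}_\bbC^{h_0}$) and the final identification $\mathrm{Ad}(k)\la{a}=\la{p}_\bbC^{y}=\mathrm{Ad}(k_1)\la{a}_1$. Once this self-centralizing property is supplied or cited, the rest of your argument goes through as written.
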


In the paper \cite{Djokovic:hasse03} (and erratum \cite{Djokovic:hasseerr}), the complete Hasse diagram of the nilpotent $K(\bbC)$-orbits in $\la{p}_\bbC$, or equivalently the $G(\bbR)$-orbits in $\la{g}_\bbR$, is found for the split form $G=E_{8(8)}$. The Hasse diagram indicates diagramatically how the nilpotent orbits' closures are contained in one another. Here we reproduce the first part of this closure diagram diagramatically as a lemma. The closure diagram should be read as follows: if there is a line going from orbit $\mathcal{O}$ to an orbit $\mathcal{O}'$ to the right of $\mathcal{O}$, then this means that $\overline{\mathcal{O}}\subset \overline{\mathcal{O}'}$.
\begin{mathproposition}\label{lemma:E8hasse} (\cite{Djokovic:hasse03,Djokovic:hasseerr}, Theorem 4.2 (p580).)
The first part of the Hasse diagram for the $K(\bbR)$-orbits in $\la{p}_\bbR$ of $E_{8(8)}$ is given by Fig.\ \ref{fig:hassediagram}. In addition, all other nilpotent orbits not depicted are connected (directly or indirectly) to orbit \textbf{5}, so that all nilpotent orbits not depicted (as well as other orbits which are depicted, such as orbit \textbf{8}) contain orbit \textbf{5} in their closure.
\end{mathproposition}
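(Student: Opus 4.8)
The statement is a transcription of Djokovic's classification, so in the strict sense it requires no proof: Fig.~\ref{fig:hassediagram} simply reproduces the bottom of the closure diagram of nilpotent orbits in $\la{e}_{8(8)}$ computed in \cite{Djokovic:hasse03,Djokovic:hasseerr}, and the last sentence records one structural feature of that diagram. The plan below is therefore to describe how the portion we actually use can be reconstructed and checked, rather than to re-derive the whole diagram.

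The first ingredient is a set of explicit representatives. By Jacobson--Morozov together with the Kostant--Sekiguchi correspondence (Sections~\ref{sec:lieconceptstriples} and \ref{sec:lieconceptsorbits}), every nilpotent $K(\bbC)$-orbit in $\la{p}_\bbC$ is the Cayley transform of a real $G(\bbR)$-orbit in $\la{g}_\bbR$, and the latter contains the nilpositive element of a real Cayley triple, tabulated as a sum of root vectors $E_\alpha$ in \cite{Djokovic:reps}. Using the dictionary between the Cartan--Weyl and $\la{so}(16)$-plus-spinor presentations of $\la{e}_{8(8)}$ (Appendix~\ref{app:twoconstructions}), one computes for each low orbit the pair of invariants $\dim$ and $\inv$ of Appendix~\ref{sec:appbranerep}; for all orbits drawn in Fig.~\ref{fig:hassediagram} these two numbers already separate the orbits, which is what fixes the numbering used there.

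The second and essential ingredient is the closure order. It refines the dimension order, so the only real work lies at orbits of equal (or nearly equal) dimension --- in particular the rows grouped together in Table~\ref{tab:orbits}, which are exactly the distinct real forms of a single complex orbit. Two inputs decide these cases: (i) the classically known closure diagram of \emph{complex} nilpotent orbits in $\la{e}_8$, which forces $\overline{\mathcal{O}_i}\subseteq\overline{\mathcal{O}_j}$ for the associated real orbits whenever the underlying complex orbits are nested; and (ii) for two real orbits inside one complex orbit, a direct argument --- an explicit one-parameter degeneration landing in the smaller closure, or a rank count for $\ad^k$ in a convenient representation --- to decide which of the two, if either, lies in the closure of the other. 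Carrying this out, one finds on the one hand that $\mathcal{O}_{\mathbf{5}}$ is incomparable with $\mathcal{O}_{\mathbf{4}}$ (distinct orbits of equal dimension are always incomparable, since the boundary of an orbit is a union of strictly lower-dimensional orbits) and with each of $\mathbf{0},\mathbf{1},\mathbf{2},\mathbf{3},\mathbf{6},\mathbf{7},\mathbf{9},\mathbf{12},\mathbf{14}$, because the small complex orbit of type $A_2$ underlying $\mathbf{5}$ is not contained in the closure of the complex orbit underlying any of these; and on the other hand that for every remaining real nilpotent orbit of $\la{e}_{8(8)}$ --- those appearing in Fig.~\ref{fig:hassediagram}, such as $\mathbf{8}$, and those of dimension beyond the range drawn --- the orbit $A_2$ does lie in the closure of the underlying complex orbit, and resolving the finitely many real refinements then gives $\mathcal{O}_{\mathbf{5}}\subseteq\overline{\mathcal{O}}$. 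This dichotomy --- ten exceptional orbits, everything else containing $\mathbf{5}$ in its closure --- is precisely what is invoked in the proof of Result~\ref{prop:Porbits}.

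The main obstacle is step (ii): determining closure relations among nilpotent orbits in an exceptional Lie algebra, and in particular distinguishing the behaviour of distinct real forms of one complex orbit, is genuinely delicate and is the substance of \cite{Djokovic:hasse03,Djokovic:hasseerr}. Since nothing in the present paper improves on that computation, we content ourselves with citing it and using the portion of the closure diagram recorded in Fig.~\ref{fig:hassediagram}.
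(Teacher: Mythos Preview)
Your opening observation is exactly right, and it already matches the paper: this lemma is simply stated in the appendix of cited mathematical results and is not proved in the paper at all --- it just records part of Djokovic's closure diagram.

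However, the sketch you then offer of how one might verify the key dichotomy contains an actual error. You assert that $\mathcal{O}_{\mathbf{5}}$ is incomparable with each of $\mathbf{7},\mathbf{9},\mathbf{12},\mathbf{14}$ ``because the small complex orbit of type $A_2$ underlying $\mathbf{5}$ is not contained in the closure of the complex orbit underlying any of these.'' This is false. Real orbits $\mathbf{4}$ and $\mathbf{5}$ are the two real forms of the \emph{same} complex orbit $A_2$ (see the grouping by double lines in Table~\ref{tab:orbits}), and in the complex $E_8$ Hasse diagram one has $A_2<A_2{+}A_1<A_2{+}2A_1<A_2{+}3A_1$ and $A_2<2A_2$. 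These are precisely the complex orbits underlying the real pairs $(\mathbf{7},\mathbf{8})$, $(\mathbf{9},\mathbf{10})$, $(\mathbf{12},\mathbf{13})$ and the triple $(\mathbf{14},\mathbf{15},\mathbf{16})$. So the complex closure order places $A_2$ below all of them and therefore cannot distinguish, say, $\mathbf{7}$ from $\mathbf{8}$ with respect to containing $\mathbf{5}$ in the real closure. That $\mathbf{5}\not\subset\overline{\mathcal{O}_{\mathbf{7}}}$ while $\mathbf{5}\subset\overline{\mathcal{O}_{\mathbf{8}}}$ is exactly the delicate real refinement you flagged as case~(ii); it is the substance of \cite{Djokovic:hasse03,Djokovic:hasseerr}, not a corollary of the complex diagram. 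Your complex argument does work for $\mathbf{0},\mathbf{1},\mathbf{2},\mathbf{3}$ (dimension) and for $\mathbf{6}$ (complex $4A_1$ and $A_2$ are genuinely incomparable), but it fails from $\mathbf{7}$ onward.
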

\begin{figure}[h]
\begin{center}
\includegraphics[width=\textwidth]{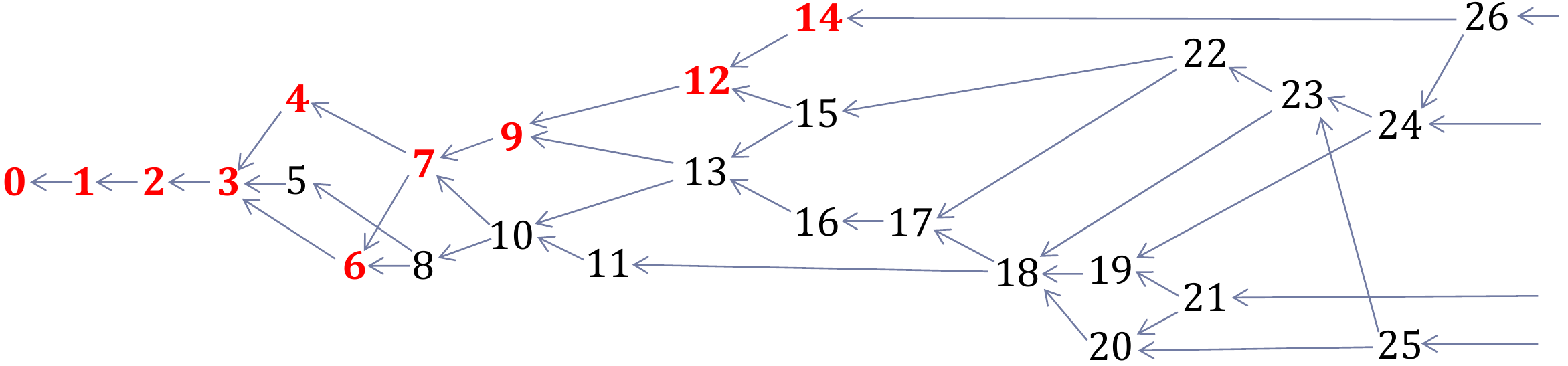}
\caption{\label{fig:hassediagram}The first part of the Hasse closure
diagram for nilpotent orbits in $E_{8(8)}$. Boldface (red) orbits are
those that preserve some supersymmetry (see Table \ref{tab:orbits};
$\mathbf{0}=\{0\}$ is the trivial orbit).}
\end{center}
\end{figure}

In discussing semi-simple monodromies, we need results on the
$G(\bbR)$-orbits of semi-simple elements in $\la{g}_\bbR$.

For the complexified version, these objects are very simple as every
Cartan subalgebra is $G(\bbC)$-conjugate in $g_\bbC$, so every
semi-simple element is conjugate to an element in a given Cartan
subalgebra; the complex semi-simple orbits are in one-to-one
correspondence with the elements of a Cartan subalgebra.

However, the situation is more complicated for real algebras. In
\cite{sugiura59}, all of the conjugacy classes of Cartan subalgebras in
real Lie algebras are classified and explicitly found:
\begin{mathproposition}\label{lemma:sugiuracartanconjugacy} (\cite{sugiura59}, mainly ``Type (E VIII)'' in paragraph 4 (p424-426).)
All Cartan sub-algebras in $\la{e}_{8(8)}$ are conjugate to one of
\emph{ten} Cartan subalgebras which are given by the following
construction.

Consider the roots $\alpha_1=(1,0,0,0,0,0,1,0)$,
$\alpha_2=(1,0,0,0,0,0,-1,0)$, $\alpha_3 = (0,1,0,0,0,0,0,1)$,
$\alpha_4=(0,1,0,0,0,0,0,-1)$, $\alpha_5 = (0,0,1,1,0,0,0,0)$,
$\alpha_6=(0,0,1,-1,0,0,0,0)$, $\alpha_7=(0,0,0,0,1,1,0,0)$,
$\alpha_8=(0,0,0,0,1,-1,0,0)$. Note that these all satisfy that pairwise
$\alpha_i\pm \alpha_j$ is not a root for $i\neq j$. For each of these 8
roots $\alpha_i$, we consider the elements $S_{\alpha_i} = E_{\alpha_i}
- E_{-\alpha_i}$ and $H_{\alpha_i} = \alpha_i \cdot \vec{H}$ (where
$\vec{H}$ are our Cartan generators constructed in $\la{e}_{8(8)}\ominus
\la{so}(16)$). Every Cartan subalgebra is conjugate to a Cartan
subalgebra spanned by some combination of $H_{\alpha_i}$'s and
$S_{\alpha_i}$'s.  (For the explicit form of the ten choices, see
\cite{sugiura59}.)

In particular, because every semi-simple element is part of a Cartan sub-algebra, this implies that every semi-simple element in $\la{e}_{8(8)}$ is conjugate to some linear combination $\sum_i c_i H_{\alpha_i} + \sum_j  c_j S_{\alpha_j}$ where $i$ and $j$ do not take the same values.
\end{mathproposition}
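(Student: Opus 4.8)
The plan is to follow Sugiura's classification of Cartan subalgebras of real semisimple Lie algebras \cite{sugiura59}, specialized to the split form $\la{e}_{8(8)}$, and to indicate where the actual work lies. Fix the Cartan involution $\theta$ with $\la{g}_\bbR=\la{k}\oplus\la{p}$ as above, $\la{k}=\la{so}(16)$, $\la{p}=\la{e}_{8(8)}\ominus\la{so}(16)$. First I would invoke the standard fact (Mostow) that every Cartan subalgebra of $\la{g}_\bbR$ is $G(\bbR)$-conjugate to a $\theta$-stable one; such a $\theta$-stable Cartan subalgebra decomposes as $\mathfrak{h}=\mathfrak{t}\oplus\mathfrak{a}$ with $\mathfrak{t}=\mathfrak{h}\cap\la{k}$ (its compact part) and $\mathfrak{a}=\mathfrak{h}\cap\la{p}$ (its vector part). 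Conjugating further under $K$, one may take $\mathfrak{a}$ inside the fixed maximal abelian subspace of $\la{p}$ spanned by our Cartan generators $\vec H$; for the split form this maximal subspace is itself a Cartan subalgebra --- the maximally split one --- and, since there is no compact centralizer, the roots of $\la{e}_8$ restrict injectively to it, so all of them are ``real''.

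The key structural tool is the Cayley transform. Starting from the maximally split Cartan subalgebra $\mathfrak{h}_0=\mathrm{span}\{H_i\}$, one passes to a more compact Cartan subalgebra by picking a root $\alpha$ and replacing the line $\bbR\,H_\alpha\subset\mathfrak{h}_0$ by $\bbR\,S_\alpha$, where $S_\alpha=E_\alpha-E_{-\alpha}$; this is an inner automorphism supported on the $\la{sl}(2)$ generated by $\{H_\alpha,E_\alpha,E_{-\alpha}\}$, namely the $\la{sl}(2)$ move carrying the split generator to an elliptic one. Iterating over a set of \emph{strongly orthogonal} roots $\{\alpha_j\}_{j\in S}$ --- so that the $\la{sl}(2)$'s mutually commute and the transforms are compatible --- produces the Cartan subalgebra $\mathrm{span}\{H_{\alpha_i}:i\notin S\}\oplus\mathrm{span}\{S_{\alpha_j}:j\in S\}$. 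Sugiura's theorem is that, up to conjugacy, \emph{every} Cartan subalgebra of $\la{g}_\bbR$ arises this way, for $S$ a subset of a \emph{maximal} strongly orthogonal system $\{\alpha_1,\dots,\alpha_8\}$ (maximal ones have cardinality $\rank\la{e}_8=8$).

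It then remains to decide when two such choices give $G(\bbR)$-conjugate Cartan subalgebras. Sugiura's answer is that this happens precisely when the two subsets of the (fixed) maximal strongly orthogonal system are related by the action of the Weyl group $W(E_8)$ --- which for the split form is realized on the maximally split Cartan subalgebra by inner automorphisms, so that the ``little Weyl group'' is all of $W(E_8)$. Thus the conjugacy classes of Cartan subalgebras of $\la{e}_{8(8)}$ are in bijection with $W(E_8)$-equivalence classes of subsets of a maximal strongly orthogonal $8$-subset of the $E_8$ root system, and carrying out this enumeration yields exactly ten. This combinatorial count --- sorting the subsets of the maximal strongly orthogonal set into the ten classes using the structure of $W(E_8)$ --- is the main obstacle and is the substance of \cite{sugiura59}; the Cayley-transform and $\theta$-stability reductions above are the standard scaffolding.

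Finally I would pin down the concrete system in the statement: with $e_i$ the $i$-th coordinate vector of $\bbR^8$, set $\alpha_1=e_1+e_7$, $\alpha_2=e_1-e_7$, $\alpha_3=e_2+e_8$, $\alpha_4=e_2-e_8$, $\alpha_5=e_3+e_4$, $\alpha_6=e_3-e_4$, $\alpha_7=e_5+e_6$, $\alpha_8=e_5-e_6$. From the explicit $E_8$ root system \eqref{eq:e8roots} one checks directly that $\alpha_i\pm\alpha_j$ is never a root for $i\neq j$: within a pair one gets $\pm2e_i$ for some $i$, and across pairs one gets a vector with exactly four entries equal to $\pm1$ --- neither shape occurs in $\Delta$, and no root with all entries $\pm\half$ can equal such a sum. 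Since these eight roots span $\bbR^8$, the system is maximal, so the ten conjugacy classes are represented by $\mathrm{span}\{H_{\alpha_i}:i\notin S\}\oplus\mathrm{span}\{S_{\alpha_j}:j\in S\}$ for ten representative subsets $S$. The ``in particular'' clause is then immediate: any semisimple $x\in\la{e}_{8(8)}$ lies in some Cartan subalgebra (its centralizer contains one), which one conjugates to a standard form, and expressing $x$ in that basis gives $x=\sum_i c_i H_{\alpha_i}+\sum_j c_j S_{\alpha_j}$ with disjoint index sets.
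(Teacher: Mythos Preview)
The paper does not actually prove this lemma: it is stated in Appendix~\ref{sec:theorems} purely as a citation of Sugiura's result \cite{sugiura59}, with no argument given beyond the reference. Your proposal therefore cannot be compared against a proof in the paper, but it is a correct and reasonably complete sketch of Sugiura's method (reduction to $\theta$-stable Cartan subalgebras, Cayley transforms along strongly orthogonal roots, and classification of the resulting subsets up to the Weyl group), and in that sense goes well beyond what the paper itself supplies. The explicit verification that the eight listed roots are strongly orthogonal is also correct and is a useful addition.
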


\bibliographystyle{JHEP}
\bibliography{Papers}

\providecommand{\href}[2]{#2}\begingroup\raggedright\begin{thebibliography}{10}

\bibitem{Horowitz:1991cd}
G.~T. Horowitz and A.~Strominger, {\it {Black strings and P-branes}},  {\em
  Nucl.Phys.} {\bf B360} (1991) 197--209.

\bibitem{Polchinski:1995mt}
J.~Polchinski, {\it {Dirichlet Branes and Ramond-Ramond charges}},  {\em
  Phys.Rev.Lett.} {\bf 75} (1995) 4724--4727,
  [\href{http://xxx.lanl.gov/abs/hep-th/9510017}{{\tt hep-th/9510017}}].

\bibitem{Elitzur:1997zn}
S.~Elitzur, A.~Giveon, D.~Kutasov, and E.~Rabinovici, {\it {Algebraic aspects
  of matrix theory on T**d}},  {\em Nucl.Phys.} {\bf B509} (1998) 122--144,
  [\href{http://xxx.lanl.gov/abs/hep-th/9707217}{{\tt hep-th/9707217}}].

\bibitem{Blau:1997du}
M.~Blau and M.~O'Loughlin, {\it {Aspects of U duality in matrix theory}},  {\em
  Nucl.Phys.} {\bf B525} (1998) 182--214,
  [\href{http://xxx.lanl.gov/abs/hep-th/9712047}{{\tt hep-th/9712047}}].

\bibitem{Hull:1997kb}
C.~Hull, {\it {U duality and BPS spectrum of superYang-Mills theory and M
  theory}},  {\em JHEP} {\bf 9807} (1998) 018,
  [\href{http://xxx.lanl.gov/abs/hep-th/9712075}{{\tt hep-th/9712075}}].

\bibitem{Obers:1997kk}
N.~Obers, B.~Pioline, and E.~Rabinovici, {\it {M theory and U duality on T**d
  with gauge backgrounds}},  {\em Nucl.Phys.} {\bf B525} (1998) 163--181,
  [\href{http://xxx.lanl.gov/abs/hep-th/9712084}{{\tt hep-th/9712084}}].

\bibitem{Obers:1998fb}
N.~Obers and B.~Pioline, {\it {U duality and M theory}},  {\em Phys.Rept.} {\bf
  318} (1999) 113--225, [\href{http://xxx.lanl.gov/abs/hep-th/9809039}{{\tt
  hep-th/9809039}}].

\bibitem{Greene:1989ya}
B.~R. Greene, A.~D. Shapere, C.~Vafa, and S.-T. Yau, {\it {Stringy Cosmic
  Strings and Noncompact Calabi-Yau Manifolds}},  {\em Nucl.Phys.} {\bf B337}
  (1990) 1.

\bibitem{Vafa:1996xn}
C.~Vafa, {\it {Evidence for F theory}},  {\em Nucl.Phys.} {\bf B469} (1996)
  403--418, [\href{http://xxx.lanl.gov/abs/hep-th/9602022}{{\tt
  hep-th/9602022}}].

\bibitem{Bergshoeff:2006jj}
E.~A. Bergshoeff, J.~Hartong, T.~Ortin, and D.~Roest, {\it {Seven-branes and
  Supersymmetry}},  {\em JHEP} {\bf 0702} (2007) 003,
  [\href{http://xxx.lanl.gov/abs/hep-th/0612072}{{\tt hep-th/0612072}}].

\bibitem{deBoer:2010ud}
J.~de~Boer and M.~Shigemori, {\it {Exotic branes and non-geometric
  backgrounds}},  {\em Phys.Rev.Lett.} {\bf 104} (2010) 251603,
  [\href{http://xxx.lanl.gov/abs/1004.2521}{{\tt arXiv:1004.2521}}].

\bibitem{deBoer:2012ma}
J.~de~Boer and M.~Shigemori, {\it {Exotic Branes in String Theory}},  {\em
  Phys.Rept.} {\bf 532} (2013) 65--118,
  [\href{http://xxx.lanl.gov/abs/1209.6056}{{\tt arXiv:1209.6056}}].

\bibitem{Hull:1994ys}
C.~Hull and P.~Townsend, {\it {Unity of superstring dualities}},  {\em
  Nucl.Phys.} {\bf B438} (1995) 109--137,
  [\href{http://xxx.lanl.gov/abs/hep-th/9410167}{{\tt hep-th/9410167}}].

\bibitem{Kimura:2013fda}
T.~Kimura and S.~Sasaki, {\it {Gauged Linear Sigma Model for Exotic
  Five-brane}},  {\em Nucl.Phys.} {\bf B876} (2013) 493--508,
  [\href{http://xxx.lanl.gov/abs/1304.4061}{{\tt arXiv:1304.4061}}].

\bibitem{Kimura:2013zva}
T.~Kimura and S.~Sasaki, {\it {Worldsheet instanton corrections to 5 $\frac
  {2}{2}$-brane geometry}},  {\em JHEP} {\bf 1308} (2013) 126,
  [\href{http://xxx.lanl.gov/abs/1305.4439}{{\tt arXiv:1305.4439}}].

\bibitem{Kimura:2013khz}
T.~Kimura and S.~Sasaki, {\it {Worldsheet Description of Exotic Five-brane with
  Two Gauged Isometries}},  {\em JHEP} {\bf 1403} (2014) 128,
  [\href{http://xxx.lanl.gov/abs/1310.6163}{{\tt arXiv:1310.6163}}].

\bibitem{Bergshoeff:2013sxa}
E.~A. Bergshoeff, F.~Riccioni, and L.~Romano, {\it {Branes, Weights and Central
  Charges}},  {\em JHEP} {\bf 1306} (2013) 019,
  [\href{http://xxx.lanl.gov/abs/1303.0221}{{\tt arXiv:1303.0221}}].

\bibitem{Bergshoeff:2012ex}
E.~A. Bergshoeff, A.~Marrani, and F.~Riccioni, {\it {Brane orbits}},  {\em
  Nucl.Phys.} {\bf B861} (2012) 104--132,
  [\href{http://xxx.lanl.gov/abs/1201.5819}{{\tt arXiv:1201.5819}}].

\bibitem{Bergshoeff:2011se}
E.~A. Bergshoeff, T.~Ortin, and F.~Riccioni, {\it {Defect Branes}},  {\em
  Nucl.Phys.} {\bf B856} (2012) 210--227,
  [\href{http://xxx.lanl.gov/abs/1109.4484}{{\tt arXiv:1109.4484}}].

\bibitem{Tod:1983pm}
K.~Tod, {\it {All Metrics Admitting Supercovariantly Constant Spinors}},  {\em
  Phys.Lett.} {\bf B121} (1983) 241--244.

\bibitem{Tod:1995jf}
K.~Tod, {\it {More on supercovariantly constant spinors}},  {\em
  Class.Quant.Grav.} {\bf 12} (1995) 1801--1820.

\bibitem{Gauntlett:2002nw}
J.~P. Gauntlett, J.~B. Gutowski, C.~M. Hull, S.~Pakis, and H.~S. Reall, {\it
  {All supersymmetric solutions of minimal supergravity in five- dimensions}},
  {\em Class.Quant.Grav.} {\bf 20} (2003) 4587--4634,
  [\href{http://xxx.lanl.gov/abs/hep-th/0209114}{{\tt hep-th/0209114}}].

\bibitem{Gauntlett:2003fk}
J.~P. Gauntlett and J.~B. Gutowski, {\it {All supersymmetric solutions of
  minimal gauged supergravity in five-dimensions}},  {\em Phys.Rev.} {\bf D68}
  (2003) 105009, [\href{http://xxx.lanl.gov/abs/hep-th/0304064}{{\tt
  hep-th/0304064}}].

\bibitem{Gutowski:2003rg}
J.~B. Gutowski, D.~Martelli, and H.~S. Reall, {\it {All Supersymmetric
  solutions of minimal supergravity in six- dimensions}},  {\em
  Class.Quant.Grav.} {\bf 20} (2003) 5049--5078,
  [\href{http://xxx.lanl.gov/abs/hep-th/0306235}{{\tt hep-th/0306235}}].

\bibitem{Bellorin:2006yr}
J.~Bellorin, P.~Meessen, and T.~Ortin, {\it {All the supersymmetric solutions
  of N=1,d=5 ungauged supergravity}},  {\em JHEP} {\bf 0701} (2007) 020,
  [\href{http://xxx.lanl.gov/abs/hep-th/0610196}{{\tt hep-th/0610196}}].

\bibitem{Bossard:2009at}
G.~Bossard, H.~Nicolai, and K.~Stelle, {\it {Universal BPS structure of
  stationary supergravity solutions}},  {\em JHEP} {\bf 0907} (2009) 003,
  [\href{http://xxx.lanl.gov/abs/0902.4438}{{\tt arXiv:0902.4438}}].

\bibitem{Bossard:2009we}
G.~Bossard, Y.~Michel, and B.~Pioline, {\it {Extremal black holes, nilpotent
  orbits and the true fake superpotential}},  {\em JHEP} {\bf 1001} (2010) 038,
  [\href{http://xxx.lanl.gov/abs/0908.1742}{{\tt arXiv:0908.1742}}].

\bibitem{Deger:2010rb}
N.~S. Deger, H.~Samtleben, and O.~Sarioglu, {\it {On The Supersymmetric
  Solutions of D=3 Half-maximal Supergravities}},  {\em Nucl.Phys.} {\bf B840}
  (2010) 29--53, [\href{http://xxx.lanl.gov/abs/1003.3119}{{\tt
  arXiv:1003.3119}}].

\bibitem{Marcus:1983hb}
N.~Marcus and J.~H. Schwarz, {\it {Three-Dimensional Supergravity Theories}},
  {\em Nucl.Phys.} {\bf B228} (1983) 145.

\bibitem{Pope:KKtheory}
C.~Pope, {\it {Kaluza-Klein theory}},
  \href{http://xxx.lanl.gov/abs/http://faculty.physics.tamu.edu/pope/}{{\tt
  http://faculty.physics.tamu.edu/pope/}}.

\bibitem{collingwood}
D.~H. Collingwood and W.~M. McGovern, {\em Nilpotent Orbits in Semisimple Lie
  Algebras}.
\newblock Taylor and Francis, 1993.

\bibitem{sugiura59}
M.~Sugiura, {\it {Conjugate classes of Cartan subalgebras in real semisimple
  Lie algebras}},  {\em Journal of the Mathematical Society of Japan} {\bf 11}
  (1959) 374--434.

\bibitem{Djokovic:hasse03}
D.~Z. Dokovic, {\it {The Closure Diagram for Nilpotent Orbits of the Split Real
  Form of $E_8$}},  {\em Central European Journal of Mathematics} {\bf 4}
  (2003) 573--643.

\bibitem{Djokovic:reps}
D.~Dokovic, {\it {Explicit Cayley Triplies in Real Forms of $E_8$}},  {\em
  Pacific Journal of Mathematics} {\bf 194} (2000) 57--82.

\bibitem{KostantRallis71}
B.~Kostant and S.~Rallis, {\it {Orbits and Representations Associated with
  Symmetric Spaces}},  {\em American Journal of Mathematics} {\bf 93} (1971)
  753--809.

\bibitem{Maldacena:1997de}
J.~M. Maldacena, A.~Strominger, and E.~Witten, {\it {Black hole entropy in M
  theory}},  {\em JHEP} {\bf 9712} (1997) 002,
  [\href{http://xxx.lanl.gov/abs/hep-th/9711053}{{\tt hep-th/9711053}}].

\bibitem{Kumar:1996zx}
A.~Kumar and C.~Vafa, {\it {U manifolds}},  {\em Phys.Lett.} {\bf B396} (1997)
  85--90, [\href{http://xxx.lanl.gov/abs/hep-th/9611007}{{\tt
  hep-th/9611007}}].

\bibitem{Liu:1997mb}
J.~T. Liu and R.~Minasian, {\it {U-branes and T**3 fibrations}},  {\em
  Nucl.Phys.} {\bf B510} (1998) 538--554,
  [\href{http://xxx.lanl.gov/abs/hep-th/9707125}{{\tt hep-th/9707125}}].

\bibitem{Curio:1998bv}
G.~Curio and D.~Lust, {\it {New N=1 supersymmetric three-dimensional
  superstring vacua from U manifolds}},  {\em Phys.Lett.} {\bf B428} (1998)
  95--104, [\href{http://xxx.lanl.gov/abs/hep-th/9802193}{{\tt
  hep-th/9802193}}].

\bibitem{Leung:1997tw}
N.~C. Leung and C.~Vafa, {\it {Branes and toric geometry}},  {\em
  Adv.Theor.Math.Phys.} {\bf 2} (1998) 91--118,
  [\href{http://xxx.lanl.gov/abs/hep-th/9711013}{{\tt hep-th/9711013}}].

\bibitem{Lu:1998sx}
J.~Lu and S.~Roy, {\it {U duality p-branes in diverse dimensions}},  {\em
  Nucl.Phys.} {\bf B538} (1999) 149--183,
  [\href{http://xxx.lanl.gov/abs/hep-th/9805180}{{\tt hep-th/9805180}}].

\bibitem{Aldazabal:2013sca}
G.~Aldazabal, D.~Marques, and C.~Nunez, {\it {Double Field Theory: A
  Pedagogical Review}},  {\em Class.Quant.Grav.} {\bf 30} (2013) 163001,
  [\href{http://xxx.lanl.gov/abs/1305.1907}{{\tt arXiv:1305.1907}}].

\bibitem{Berman:2013eva}
D.~S. Berman and D.~C. Thompson, {\it {Duality Symmetric String and M-Theory}},
   \href{http://xxx.lanl.gov/abs/1306.2643}{{\tt arXiv:1306.2643}}.

\bibitem{Andriot:2014uda}
D.~Andriot and A.~Betz, {\it {NS-branes, source corrected Bianchi identities,
  and more on backgrounds with non-geometric fluxes}},  {\em JHEP} {\bf 1407}
  (2014) 059, [\href{http://xxx.lanl.gov/abs/1402.5972}{{\tt
  arXiv:1402.5972}}].

\bibitem{Mateos:2001qs}
D.~Mateos and P.~K. Townsend, {\it {Supertubes}},  {\em Phys.Rev.Lett.} {\bf
  87} (2001) 011602, [\href{http://xxx.lanl.gov/abs/hep-th/0103030}{{\tt
  hep-th/0103030}}].

\bibitem{Hassler:2013wsa}
F.~Hassler and D.~Lust, {\it {Non-commutative/non-associative IIA (IIB) Q- and
  R-branes and their intersections}},  {\em JHEP} {\bf 1307} (2013) 048,
  [\href{http://xxx.lanl.gov/abs/1303.1413}{{\tt arXiv:1303.1413}}].

\bibitem{Bena:2011uw}
I.~Bena, J.~de~Boer, M.~Shigemori, and N.~P. Warner, {\it {Double, Double
  Supertube Bubble}},  {\em JHEP} {\bf 1110} (2011) 116,
  [\href{http://xxx.lanl.gov/abs/1107.2650}{{\tt arXiv:1107.2650}}].

\bibitem{Green:1987sp}
M.~B. Green, J.~Schwarz, and E.~Witten, {\em {SUPERSTRING THEORY. VOL. 1:
  INTRODUCTION}}.
\newblock Cambridge University Press, 1987.

\bibitem{Djokovic:hasseerr}
D.~Z. Dokovic, {\it {Corrections for ``The Closure Diagram for Nilpotent Orbits
  of the Split Real Form of $E_8$}''},  {\em Central European Journal of
  Mathematics} {\bf 3} (2005) 578--579.

\end{thebibliography}\endgroup

\end{document}